\begin{document}

\newtheorem{theorem}{Theorem}[section]
\newtheorem{lemma}[theorem]{Lemma}
\newtheorem{example}[theorem]{Example}
\newtheorem{proposition}[theorem]{Proposition}
\newtheorem{corollary}[theorem]{Corollary}
\newtheorem{remark}[theorem]{Remark}
\newtheorem{definition}[theorem]{Definition}
\newtheorem{result}[theorem]{Result}
\newtheorem{numerical example}[theorem]{Numerical example}

\begin{center}
	{\bf Mutations make pandemics worse or better: modeling SARS-CoV-2 variants and imperfect vaccination}
\end{center}

\begin{center}
	\baselineskip .2in {\bf Sarita Bugalia$^1$, Jai Prakash Tripathi$^{1}$, Hao Wang$^2$}
	
	{\small $^1$ Department of Mathematics, Central University of Rajasthan,
		\\ Bandar Sindri, Kishangarh-305817, Ajmer, Rajasthan, India\\
	$^2$ Department of Mathematical and Statistical Sciences,\\ University of Alberta, Edmonton AB T6G 2G1, Canada}
	
\end{center}

\begin{abstract}
	COVID-19 is a respiratory disease triggered by an RNA virus inclined to mutations. Since December 2020, variants of COVID-19 (especially Delta and Omicron) continuously appeared with different characteristics that influenced death and transmissibility emerged around the world. To address the novel dynamics of the disease, we propose and analyze a dynamical model of two strains, namely native and mutant, transmission dynamics with mutation and imperfect vaccination. It is also assumed that the recuperated individuals from the native strain can be infected with mutant strain through the direct contact with individual or contaminated surfaces or aerosols. We compute the basic reproduction number for each strain independently and take the maximum for $R_0$. We prove the nonexistence of backward bifurcation using the center manifold theory, and global stability of disease-free equilibrium when the basic reproduction number $R_0<1$, that is, vaccine is effective to eliminate the native and mutant strains even if it cannot provide full protection. Hopf bifurcation appears when the endemic equilibrium loses its stability. An intermediate mutation rate $\nu_1$ leads to oscillations. When $\nu_1$ increases over a threshold, the system regains its stability and exhibits an interesting dynamics called endemic bubble. An analytical expression for vaccine-induced herd immunity is derived. The epidemiological implication of the herd immunity threshold is that the disease can be effectively eradicated if the minimum herd immunity threshold is attained in the community. Furthermore, the model is parameterized using the Indian data of the cumulative number of confirmed cases and deaths of COVID-19 from March 1 to September 27 in 2021, using MCMC method. The cumulative cases and deaths can be reduced by increasing the vaccine efficacies to both native and mutant strains. We observe that by considering the vaccine efficacy to native strain as 90\%, the cumulative cases and deaths would be reduced by 3.27\% and 5.2\%, respectively; and by considering the vaccine efficacy to mutant strain as 90\%, the cumulative cases and deaths would be reduced by 0.9\% and 2.5\%, respectively. Our study demonstrates that the COVID-19 pandemic may be worse due to the occurrence of oscillations for certain mutation rates (i.e., outbreaks will occur repeatedly) but better due to stability at a lower infection level with a larger mutation rate. We perform sensitivity analysis using the Latin Hypercube Sampling (LHS) methodology and partial rank correlation coefficients (PRCCs) to illustrate the impact of parameters on the basic reproduction number, the number of cumulative cases, and the number of deaths, which ultimately sheds light on disease mitigation.\\\\
	\textbf{Keywords:} COVID-19; Endemic bubble; Mutation; Imperfect vaccination; MCMC; Two strain dynamics; Hopf bifurcation; Transcritical bifurcation; Sensitivity analysis. 
\end{abstract}

\section{Introduction}
Contagious diseases are one of the foremost reasons for demise worldwide. The spread of contagious diseases dangerously affects the growth of countries and the evolution of a population. Though modern scientific medicine has made rapid advancements, the diseases have not been completely eradicated. Diseases have obtained new versions due to the genetic variations of pathogens triggered via mutations. Many pathogens are characterized by more than one variant \cite{Sato1995,Palese1982}. Virus or pathogens mutations are general in contagious diseases such as HBV \cite{Sato1995}, Influenza \cite{ Palese1982}, and HIV \cite{Eron1998}. Multi-strain of 1918 avian influenza virus with the mutation have been recognized by Iwami et al. \cite{Iwami2007}. The existence of different variants of a pathogen is mainly due to resist immune attacks of the host or induced by treatment with antiviral drugs or antibodies \cite{Eron1998}. Ultimately, they confirm the persistence of disease in a host. Sansonetti and Arondel \cite{Sansonetti1989} have revealed that mutant strains can be associated with higher virulence to disease than the native strains, and those people diseased with mutant strains have a higher death rate in the contagious diseases such as plague, influenza A, etc. Thus, one of the major challenges in stopping the spread of infectious diseases is to treat with the genetic variations of pathogens \cite{MaRM1995, PartonR1994, Liu2018}. Mathematical models are helpful to describe and understand the dynamics of different strains under mutation. Various epidemic models with multi-strain contacts and mutation have been proposed in recent years from different aspects \cite{Liu2018, Cai2012, Li2004, Martcheva2007}. Liu et al. \cite{Liu2018} proposed a mathematical model for Influenza with virus mutation and analyzed the model in the sense of permanence of the disease. Cai et al. \cite{Cai2012} proposed a two-strain model with vaccination. The authors studied the existence and stability of the equilibria as well as the existence of Hopf bifurcation from endemic equilibria. Li et al. \cite{Li2004} proposed a two-strain SIR model with infection age and mutation. The authors analyzed the stability of equilibria and Hopf bifurcation. Moreover, epidemiological investigations have exposed that the phenomenon of mutations leads to further resistant viruses giving the emergence of many new dangerous epidemics or even new serious pandemics.

Since December 2019, a novel coronavirus (identified to produce a respiratory disease acknowledged as COVID-19) has spread broadly and rapidly and has since oppressed a significant proportion of the population worldwide. The World Health Organization (WHO) has detected the spread of COVID-19 as a pandemic, and as of December 4, 2021, over 263 million peoples were diseased, and about 5.2 million died of the virus. The SARS-CoV-2 virus triggered by severe acute respiratory syndrome is also mutating. Lately, numerous variants of the SARS-CoV-2 virus have been identified. These variants are described according to the number and types of mutations \cite{Korber2020, Lemieux2021}. Since the start of the COVID-19 pandemic, the SARS-CoV-2 coronavirus that initiated COVID-19 has mutated, rising in different variants of the virus. Numerous SARS-CoV-2 variants have developed worldwide, and the presence of different variants depends on several factors. One of these is called the delta variant, which was first identified in India \cite{CDC1}. Different variants have appeared in Brazil, England, California, and other countries. More transmittable variants such as beta, which first emerged in South Africa, may have improved the ability to re-infect individuals who have recuperated from previous versions of the virus and also be somewhat resistant to some of the coronavirus vaccines in development \cite{Hopkins}. These new variants might have distinct features that can influence the death rate and transmissibility \cite{Korber2020,Lemieux2021,Gonzalez2021}. From October 2020, the number of infected cases of SARS-CoV-2 and related deaths augmented drastically in England. It has been discovered that the new SARS-CoV-2 variant VOC-202012/01 was widespread, and its proportion amplified throughout the latest months in England \cite{Bussiness2021, PHE2021}. The mutations of viruses are frequent, and, as an outcome, SARS-CoV-2 can develop mutations with immunological resistance and fitness advantages \cite{Korber2020}. It is anticipated that further mutations will occur worldwide and probably even more after worldwide vaccination due to mutation force \cite{Rahimi2021}. 
Therefore, analyzing the effect of new strains of the SARS-CoV-2 virus is supremely significant. In the literature, a few authors proposed multi-variant mathematical models for COVID-19 \cite{Gonzalez2021, Khyar2020, Arruda2021}. Gonzalez-Parra et al. \cite{Gonzalez2021} studied the effect of a new, more infectious SARS-CoV-2 variant (VOC-202012/01 of lineage B.1.1.7) on hospitalizations, prevalence, and deaths associated with the SARS-CoV-2 virus.  Khyar et al. \cite{Khyar2020} proposed a multi-strain SEIR model with general incident rate and studied the global dynamics of the model. The authors also discussed the quarantine strategy for controlling the disease spread and fit the model to the Moroccan clinical data of COVID-19. Arruda et al. \cite{Arruda2021} proposed a model for COVID-19 and studied the optimal control of multi-strain epidemics.

Vaccination has been an effective strategy in battling the spread of contagious diseases, e.g., measles, influenza, and pertussis. In history, the elimination of smallpox has been counted as the most notable victory of vaccination ever recorded \cite{WHO}. 
Several authors in multiple papers have investigated the role of vaccination \cite{Cai2012, Martcheva2007, Castillo2002, Scherer2002, Liu2008, ArinoJ2003, Gumel2006, Cai2018, Alexander2004}. Under vaccination concern, certain mutant strains will finally have the competitive benefits amongst their contacts \cite{Scherer2002}. The impact of vaccination on the growth of strain contacts in multi-strain viruses has also been analyzed in many papers \cite{Gupta1997, May1994, McLean1995, Porco2000, Porco1998}. With the latest development of anti-COVID vaccines, numerous models have been proposed to provide insight into the impact of vaccination of a certain fraction of the populace on the dynamics of the COVID-19 pandemic. For instance, Fudolig et al. \cite{Fudolig2020} proposed a multi-strain model with vaccination for COVID-19 and studied the local stability of equilibria. Furthermore, there is some evidence of COVID-19's vaccine efficacy in Australia that mRNA vaccine has over 90 \% efficacy against COVID-19 infection, ChAdOx1 nCoV-19 has efficacy of 62\% against symptomatic infection in the intended two-dose schedule, BNT162b2 mRNA vaccine, which has 95\% efficacy against symptomatic infection \cite{MacIntyre2020}. Also, the Pfizer-BioNTech BNT162b2 mRNA vaccine has an efficacy over 95\%, Johnson \& Johnson [J\&J] Ad26 has an efficacy over 67\%, the AstraZeneca–Oxford ChAdOx1 nCov-19 vaccine has an efficacy over 67\%, and the Gamaleya GamCovidVac [Sputnik V] vaccine has the efficacy over 90\% \cite{Olliaro2021}. According to the Ministry of Health and Family Welfare, Government of India, the Indian vaccine also has vaccine efficacy over 70-90\% \cite{MOI}.

To determine effectual countermeasures, it is significant to develop mathematical models that support us in predicting and understanding the spread of COVID-19 and providing strategies on what could be implemented to limit its spread. Mathematical modeling in epidemiology provides a progressively greater room to public health research. This research discipline participates to sufficiently comprehend the studied epidemiological phenomenon and capture the distinct issues that can give rise to a terrible epidemic or even an alarming pandemic worldwide. The classical susceptible-infected-recovered (SIR) epidemic model was first proposed by Kermack et al. \cite{Kermack1927}. To obtain a sharper understanding of various vaccination strategies and their impacts on the number of infected individuals, Kermack-McKendrick type models have grabbed a vital role. This type of model has been benefited to comprehend vaccination dynamics on various diseases \cite{Alexander2004}. It is essential to emphasize that nowadays, Kermack-McKendrick kind mathematical models have helped explain COVID-19 epidemics properties worldwide. These models have been utilized to estimate the basic reproductive number associated with the disease and various parameters engaged in its spread. Additional use of this type of models has been focused on proposing and assessing the impact of different control measures categorized as NPIs. For example, the authors in papers \cite{Bugalia2020,Bugalia2021,Bajiya2020} proposed mathematical models for COVID-19 and analyzed the impact of NPIs on the disease dynamics. It is of paramount significance to develop mathematical models that can perfectly forecast the spread of COVID-19 so that the disease can be controlled and restrictions can be securely relaxed.  However, the infection incubation period may occupy a long time interval in some cases. An incubated person is not yet infectious in this time interval and remains latent. Therefore, another compartment of exposed individuals should be included in SIR, and the new model will have SEIR abbreviation \cite{Hethcote2000}.

Inspired by the evidence mentioned above about imperfect vaccine and mutation of the virus, in this paper we utilize an SEIR-type mathematical model to comprehend the dynamics of disease spread on the human populace under imperfect vaccination and two variants of the virus. 
The general methodology and mathematical model can be inferred to enhance the number of parameters and differential equations. We incorporate the vaccination compartment to the two-strain model to examine the effectiveness of the anti-COVID-19 vaccination, which is currently being employed in many countries to help battle the intense pandemic situation. It is supposed that the spread of a virus can mutate in the host to make a second, co-circulating, mutant strain. After some period of infection, the original strain, referred to as native strain, is converted to a mutant strain, such that a proportion of the people infected by the original strain are also carrying mutant strain, thus we consider mutation in our proposed epidemic model.
We intend to study the dynamical behavior of the strains' contacts under the vaccination scheme and investigate the impact of parameters (vaccination proportion, mutation rate, etc.) to demonstrate how they influence disease transmission. We also assume that recovered individuals from native strain have 100\% immunity against native strain but can be infected by mutant strain. The objectives of the present study are the following:
(i) Construction of an epidemic model that will describe the dynamics of mutant strain under imperfect vaccination. 
(ii) Investigation of the impact of an imperfect vaccine on the dynamics of the model. 
(iii) Investigation of different bifurcations with respect to variation of various parameters.
(iv) Application of the proposed model to the data of COVID-19 in India. 
(v) Observation of the COVID-19 dynamics with respect to the key parameter related to the mutation.

The remaining paper is organized as follows. Section \ref{dynmodel} describes the proposed model with imperfect vaccine and mutation. Section \ref{posandbond} discusses the non-negativity and boundedness of the solutions of the proposed system. Section \ref{dynanalysis} represents the dynamical analysis of the proposed system including the basic reproduction number, existence, and stability of possible equilibria, transcritical bifurcation, and Hopf bifurcation. Section \ref{Herd} describes the implications of disease control, and we obtained explicit expression of vaccine-induced herd immunity. Numerical evaluations have been presented in Section \ref{NumSim}. Cumulative cases and cumulative mortality data for COVID-19 pandemic in India have been used to parametrize the model, and the impact of different parameters on the cumulative cases and deaths have been shown in Section \ref{cal_model}. Sensitivity analysis of the parameters with respect to the basic reproduction number, cumulative cases, and cumulative deaths has been performed in Section \ref{Sensitivity}. The paper ends with a thorough discussion in Section \ref{Discussion}.

\section{Model formulation} \label{dynmodel}
We introduce a homogeneous two-strain model with imperfect vaccination. The system starts with a population exposed to both the native (original) and mutant (variant of original) strains of the virus. Some previous studies \cite{Gonzalez2021,DengX2021} revealed that the variant is more transmissible and severe than the original strain, and antibody neutralization is reduced in COVID-19 patients and vaccine recipients in various countries, including the US. Nature news has reported that the mutant strain is spreading quickly in India and has become the dominant strain \cite{Naturenews}. A mutation is accounted for in epidemic models through a term that transfers individuals infected with one of the strains into individuals infected with the other \cite{Liu2018,Cai2012,Martcheva2015}. We assume that vaccination is applied only to healthy individuals, so only susceptible individuals get vaccinated. Further, we also assume that the vaccine is imperfect, that is, the vaccinated individuals can become infected with both native and mutant strains of the virus. That happens at reduced transmission rates $\delta_1 \beta_1$ and $\delta_2 \beta_2,$ where $0 \leq \delta_1 \leq 1, 0 \leq \delta_2 \leq 1$ are the reduction coefficients of native and mutant strains, respectively. If $\delta_1 = 0,$ then vaccinated individuals will not get infected with native strain, i.e., the vaccine is perfect for native strain, and $\delta_1 = 1$ means vaccinated individuals get infected just like susceptible individuals, i.e., vaccine plays no protective role to native strain. Here $1-\delta_1$ describes the vaccine efficacy to native strain. A similar scenario applies to the mutant strain, and $1-\delta_2$ describes the vaccine efficacy to the mutant strain. Some vaccinated individuals can go back to susceptible individuals due to partial immunity. There are some shreds of evidence that there is a chance of a second COVID-19 infection after being diagnosed with first, from which recovered \cite{ScienceWire,ShastriJ2021,CDC}. Therefore, we assume that the recovered individuals of native strain can also become infected via mutant strain of the virus at a reduced rate $\delta_3 \beta_2$.

The model is composed of a system of differential equations that has eight compartments: 
susceptible compartment $S-$ individuals in this compartments are healthy but can be infected by both the native and mutant strains of the virus;
vaccinated compartment $V-$ individuals that applied to vaccination, these individuals can also become infected by both the native and mutant strains of the virus but at lower rates;
exposed compartment $E_1-$ individuals that are exposed to a native strain of virus;
exposed compartment $E_2-$ individuals that are exposed to a mutant strain of virus;
infected compartment $I_1-$ individuals that are infected to a native strain of virus;
infected compartment $I_2-$ Individuals that are infected to a mutant strain of virus;
recovered compartment $R_1-$ individuals that were infected to native strain and are now immune to the native strain but not immune to the mutant strain;
recovered compartment $R_2-$ individuals infected with mutant strain are now immune to both native and mutant strains and do not interact with the remaining compartments. The biological interpretations of the parameters involved in the model are given in Table \ref{parameters}. The schematic diagram of the model is given in Figure \ref{schematic}. The total population size is $N(t) = S(t)+V(t)+E_1(t)+E_2(t)+I_1(t)+I_2(t)+R_1(t)+R_2(t).$

\begin{table}[H]
	\caption{Biological interpretations of parameters.} \label{parameters}
	\begin{tabular}{p{2cm}p{11cm}}
		\hline
		Parameters	& Biological interpretations \\
		\hline
		$\Lambda$	& The recruitment rate at which new individuals enter in the susceptible population \\
		
		$\beta_1$	& Infection rate of the native strain \\
		
		$\beta_2$	& Infection rate of the mutant strain \\
		
		$p$	& Per capita vaccination rate of susceptible individuals \\
		
		$1/\mu$	& Average life expectancy of the individuals of all compartments \\
		
		$1- \delta_1$	& Efficacy of vaccine to native strain \\
		
		$1- \delta_2$	& Efficacy of vaccine to mutant strain \\
		
		$\gamma$	& Per capita rate of lost of immunity of vaccinated individuals \\
		
		$a_1$	& Per capita rate at which the exposed individuals of native strain become infectious \\
		
		$a_2$ &  Per capita rate at which the exposed individuals of mutant strain become infectious \\
		
		$\alpha_1$	& Per capita recovery rate of native strain \\
		
		$\alpha_2$	& Per capita recovery rate of mutant strain \\
		
		$d_1$	& Per capita death rate due to native strain \\
		
		$d_2$	& Per capita death rate due to mutant strain \\
		
		$\delta_3$	& Reduction coefficient of infection after recovery  \\
		
		$\nu_1$    & Per capita mutation rate of native strain \\
		
		\hline
	\end{tabular}
\end{table}

\begin{figure}[htp!]
	\includegraphics[scale=0.6]{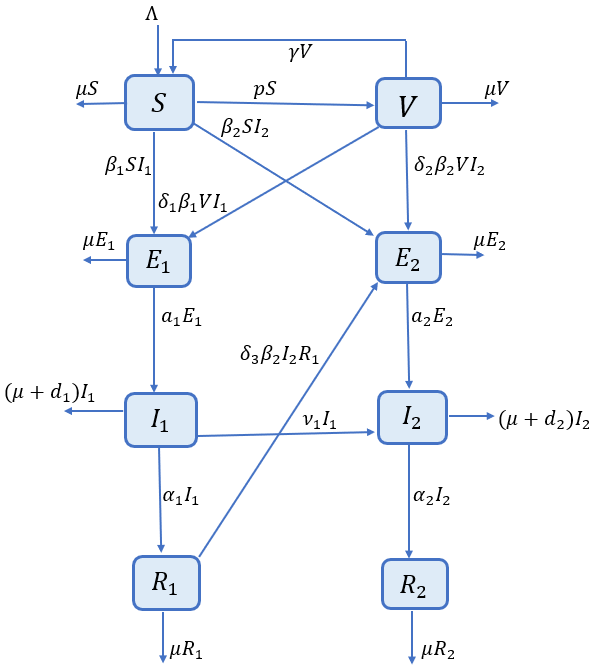}
	\caption{Schematic diagram of system \eqref{model}.}\label{schematic}
\end{figure}

Based on the parameters given in Table \ref{parameters} and schematic diagram \ref{schematic}, the dynamics of the disease transmission can be governed by the following system of ordinary differential equations:
\begin{equation}
\begin{aligned}\label{model}
\frac{dS}{dt} &= \Lambda - \beta_1 I_1 S - \beta_2 I_2 S - (\mu + p)S + \gamma V,\\
\frac{dV}{dt}& = pS - \delta_1 \beta_1 I_1 V - \delta_2 \beta_2 I_2 V - (\mu + \gamma) V,\\
\frac{dE_1}{dt} &= \beta_1(S + \delta_1 V) I_1 - (a_1 + \mu) E_1,\\
\frac{dE_2}{dt} &= \beta_2 (S  + \delta_2 V + \delta_3 R_1) I_2 - (a_2 + \mu) E_2,\\
\frac{dI_1}{dt} &= a_1 E_1  - (\alpha_1 + \mu + d_1 + \nu_1) I_1,\\
\frac{dI_2}{dt}& = a_2 E_2 - (\alpha_2 + \mu + d_2) I_2 + \nu_1 I_1,\\
\frac{dR_1}{dt}& = \alpha_1 I_1 - \delta_3 \beta_2 I_2 R_1 - \mu R_1,\\
\frac{dR_2}{dt}& = \alpha_2 I_2 - \mu R_2,
\end{aligned}
\end{equation}
with the initial conditions: $S(0)>0, V(0)\geq 0, E_1(0) \geq 0, E_2(0) \geq 0, I_1(0) \geq 0, I_2(0) \geq 0, R_1(0) \geq 0, R_2(0) \geq 0.$

\section{Well-posedness} \label{posandbond}
This section summarizes the positivity and boundedness of solutions of system \eqref{model}. Positivity is significant for biologically feasible solutions of the system while boundedness infers that solutions are finite. System \eqref{model} is given by the following bounded planes:
{\footnotesize \begin{equation*}
\begin{aligned}
\frac{dS}{dt}\Bigg|_{S=0,V\neq 0, E_1 \neq 0, E_2 \neq 0,I_1 \neq 0,I_2 \neq 0,R_1 \neq 0,R_2 \neq 0} &= \Lambda + \gamma V>0,\\
\frac{dV}{dt}\Bigg|_{S \neq 0,V= 0, E_1 \neq 0, E_2 \neq 0,I_1 \neq 0,I_2 \neq 0,R_1 \neq 0,R_2 \neq 0} &= pS \geq 0,\\
\frac{dE_1}{dt}\Bigg|_{S\neq 0,V\neq 0, E_1 = 0, E_2 \neq 0,I_1 \neq 0,I_2 \neq 0,R_1 \neq 0,R_2 \neq 0} &= \beta_1(S + \delta_1 V) I_1 \geq 0, \\
\frac{dE_2}{dt}\Bigg|_{S\neq 0,V\neq 0, E_1 \neq 0, E_2 = 0,I_1 \neq 0,I_2 \neq 0,R_1 \neq 0,R_2 \neq 0} &= \beta_2 (S  + \delta_2 V) I_2 + \delta_3 \beta_2 I_2 R_1 \geq 0,\\
\frac{dI_1}{dt}\Bigg|_{S\neq 0,V\neq 0, E_1 \neq 0, E_2 \neq 0,I_1 = 0,I_2 \neq 0,R_1 \neq 0,R_2 \neq 0} &= a_1 E_1 \geq 0,\\
\frac{dI_2}{dt}\Bigg|_{S\neq 0,V\neq 0, E_1 \neq 0, E_2 \neq 0,I_1 \neq 0,I_2 = 0,R_1 \neq 0,R_2 \neq 0} &= a_2 E_2 + \nu_1 I_1 \geq 0,\\
\frac{dR_1}{dt}\Bigg|_{S\neq 0,V\neq 0, E_1 \neq 0, E_2 \neq 0,I_1 \neq 0,I_2 \neq 0,R_1 = 0,R_2 \neq 0} &= \alpha_1 I_1 \geq 0, \\
\frac{dR_2}{dt}\Bigg|_{S\neq 0,V\neq 0, E_1 \neq 0, E_2 \neq 0,I_1 \neq 0,I_2 \neq 0,R_1 \neq 0,R_2 = 0} &= \alpha_2 I_2 \geq 0.
\end{aligned}
\end{equation*}}
Note that on each of the bounding planes of the non-negative cone of $\mathbb{R}^{8}_{+}$, all rates in the system \eqref{model} are non-negative. Thus, if we initiate this in the interior of this cone, we shall always remain in this cone as the direction of the vector field is inward on all the bounding planes. Therefore, the non-negativity of all solutions is guaranteed if we start from a non-negative initial point. Furthermore, system \eqref{model} also states that the population $N$ follows the below differential equation:
\begin{equation}
\frac{dN}{dt} = \Lambda - \mu N - d_1 I_1 - d_2 I_2,
\end{equation} 
which gives
\begin{equation*}
\Lambda - (\mu + d_1 + d_2) N \leq \frac{dN}{dt} \leq \Lambda - \mu N.
\end{equation*}
Now integrating the above inequality and using initial conditions, we obtain
\begin{equation*}
\frac{\Lambda}{\mu + d_1 + d_2} + \left(N(0) - \frac{\Lambda}{\mu + d_1 + d_2} \right) e^{-(\mu + d_1 + d_2)t} \leq N(t) \leq \frac{\Lambda}{\mu} + \left(N(0) - \frac{\Lambda}{\mu} \right) e^{-\mu t},
\end{equation*}
Considering $t \rightarrow +\infty,$ we obtain
\begin{equation*}
\frac{\Lambda}{\mu + d_1 + d_2} \leq \liminf \limits_{t \rightarrow +\infty}N(t) \leq \limsup \limits_{t \rightarrow +\infty}N(t) \leq \frac{\Lambda}{\mu}.
\end{equation*}
Hence, the feasible region for the system \eqref{model} is
\begin{equation*}
\Delta = \left\lbrace (S, V, E_1, E_2, I_1, I_2, R_1, R_2)\in \mathbb{R}^{8}_{+}: 0< S+V+E_1 +E_2 + I_1 + I_2 +R_1 +R_2 \leq \frac{\Lambda}{\mu} \subset \mathbb{R}^{8}_{+} \right\rbrace.
\end{equation*}
From the above analysis, we conclude the following consequence:
\begin{theorem}\label{pos_and_bound}
	The region $\Delta$ is positively invariant and attracting.
\end{theorem}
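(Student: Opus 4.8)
The statement packages two facts that the computations preceding it have already essentially established, so the plan is simply to assemble them cleanly. First I would note that the right-hand side of \eqref{model} is continuously differentiable on all of $\mathbb{R}^{8}$, hence locally Lipschitz, so through every initial point in $\mathbb{R}^{8}_{+}$ there passes a unique maximal solution; every assertion below refers to that solution.

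For positive invariance of the cone $\mathbb{R}^{8}_{+}$ I would invoke Nagumo's sub-tangentiality criterion (equivalently, the elementary argument that no component can be the first to become negative): the displayed identities show that on each face $\{X_i=0\}$ of $\partial\mathbb{R}^{8}_{+}$, with the remaining components non-negative, the $i$-th component of the vector field is $\geq 0$, so the field is inward-pointing or tangent there. Consequently a trajectory started in $\mathbb{R}^{8}_{+}$ never leaves it, which yields non-negativity of $S,V,E_1,E_2,I_1,I_2,R_1,R_2$ throughout the interval of existence.

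Next I would pass to boundedness through the total population $N$. Summing the eight equations gives $\tfrac{dN}{dt}=\Lambda-\mu N-d_1 I_1-d_2 I_2$, and using $0\leq I_1,I_2\leq N$ from the previous step this is sandwiched as $\Lambda-(\mu+d_1+d_2)N\leq \tfrac{dN}{dt}\leq \Lambda-\mu N$. The comparison principle (or Gronwall) applied to the upper inequality gives $N(t)\leq \tfrac{\Lambda}{\mu}+\bigl(N(0)-\tfrac{\Lambda}{\mu}\bigr)e^{-\mu t}$; in particular the solution stays in a compact set, so it is defined for all $t\geq 0$ and $\limsup_{t\to\infty}N(t)\leq \tfrac{\Lambda}{\mu}$. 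If in addition $N(0)\leq \tfrac{\Lambda}{\mu}$, then $\tfrac{dN}{dt}\leq 0$ whenever $N=\tfrac{\Lambda}{\mu}$, so $N(t)\leq \tfrac{\Lambda}{\mu}$ for all $t\geq 0$; combined with the non-negativity this shows $\Delta$ is positively invariant. Since the limsup bound holds for an arbitrary initial point of $\mathbb{R}^{8}_{+}$, the distance from any trajectory to $\Delta$ tends to zero, i.e. $\Delta$ is attracting. (The lower inequality, handled the same way, gives the companion bound $\liminf_{t\to\infty}N(t)\geq \tfrac{\Lambda}{\mu+d_1+d_2}>0$, consistent with the strict positivity of $S+V+\cdots+R_2$ built into $\Delta$.)

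There is essentially no obstacle here; the only point that merits a line of care is confirming the solution is global in forward time before letting $t\to\infty$, which follows from the a priori bound on $N$ together with the non-negativity of all components, since these confine the trajectory to a compact subset of $\mathbb{R}^{8}_{+}$.
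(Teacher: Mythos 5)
Your proposal is correct and follows essentially the same route as the paper: checking that the vector field points inward on each face of $\mathbb{R}^{8}_{+}$ to get non-negativity, then summing the equations to obtain $\frac{dN}{dt}=\Lambda-\mu N-d_1I_1-d_2I_2$, sandwiching it, and integrating to conclude invariance and attractivity of $\Delta$. The only differences are cosmetic refinements (explicitly invoking Nagumo, noting global existence, and verifying $N(t)\leq\Lambda/\mu$ is preserved), which the paper leaves implicit.
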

Therefore, the system \eqref{model} is well-posed and epidemiologically feasible since all variables remain nonnegative for all $t \geq 0$. Further, since the equations of system \eqref{model} are continuous and have continuous partial derivatives, then they satisfy the Lipschitz condition. Additionally, from Theorem \ref{pos_and_bound}, system \eqref{model} is uniformly bounded. Hence, the solution of the system \eqref{model} exists and is unique. 

\section{Rigorous analysis} \label{dynanalysis}
This section is devoted to investigating the dynamical behaviors of system \eqref{model} including the computation of the basic reproduction number, the existence of possible equilibria and their stability, and possible bifurcations. 
\subsection{Disease free equilibrium (DFE) and basic reproduction number}
%The equilibria are obtained by setting the equations of right hand side of the system equal to zero. 
The disease free equilibrium (DFE) can be obtained by setting all infected variables ($E_1, E_2, I_1, I_2$) equal to zero while all non-infected variables ($S, V, R_1, R_2$) are non-zero. The DFE of system \eqref{model} is given by
\begin{equation}\label{DFE}
D^0 = (S^0, V^0, E_1^0, E_2^0, I_1^0, I_2^0, R_1^0, R_2^0) = \left(\frac{\Lambda (\mu + \gamma)}{\mu (\mu + \gamma + p)}, \frac{\Lambda p}{\mu (\mu + \gamma + p)},0,0,0,0,0,0 \right).
\end{equation}
To obtain the basic reproduction number, we use the next generation method \cite{Diekmann2010, VandenDriessche2002}. By considering $x = (E_1, E_2, I_1, I_2)^{T},$ we have
\begin{equation}
x' = f(x)-v(x),
\end{equation}
where
\begin{align*}
f = \left(\begin{array}{c}
\beta_1(S + \delta_1 V) I_1 \\ 
\beta_2 (S  + \delta_2 V) I_2 + \delta_3 \beta_2 I_2 R_1 \\ 
0\\ 
0
\end{array}  \right), \quad v = \left( \begin{array}{c}
(a_1 +\mu) E_1\\ 
(a_2 +\mu) E_2\\ 
(\alpha_1 + \mu +d_1 + \nu_1)I_1 - a_1 E_1\\ 
(\alpha_2 + \mu +d_2 )I_2 - \nu_1 I_1 - a_2 E_2
\end{array}  \right). 
\end{align*}
The Jacobian of $f(x)$ and $v(x)$ at $D^0$ are
\begin{align*}
F = Df(D^0) = \left( \begin{array}{cccc}
0& 0  & \beta_1(S^0 + \delta_1 V^0)  & 0  \\ 
0& 0 & 0  & \beta_2(S^0 + \delta_2 V^0)  \\ 
0& 0 & 0 & 0 \\ 
0& 0 & 0 & 0
\end{array} \right) ,
\end{align*}
and 
\begin{align*}
V &= Dv(D^0) = \left(\begin{array}{cccc}
(a_1 + \mu)& 0 & 0 & 0 \\ 
0 & (a_2 + \mu) & 0 & 0 \\ 
-a_1& 0 & (\alpha_1 + \mu + d_1 + \nu_1) & 0 \\ 
0 & -a_2 & -\nu_1 &  (\alpha_2 + \mu + d_2)
\end{array}  \right), 
\end{align*} 
respectively. Hence, the basic reproduction number for system \eqref{model} is the dominant eigenvalue or spectral radius of the next generation matrix $FV^{-1}$ which is given by the following expression:
\begin{align}
R_0 = \rho(FV^{-1}) = \max \left\lbrace R_1, R_2 \right\rbrace, 
\end{align}
where 
\begin{align*}
R_1 & = \frac{ \beta_1 a_1 (S ^0 + \delta_1 V^0)}{(\mu + a_1)(\mu + d_1 + \alpha_1 + \nu_1)}\\
& = \frac{\Lambda \beta_1 a_1 (\gamma + \mu + p\delta_1)}{\mu (\mu + a_1)(\mu + d_1 + \alpha_1 + \nu_1)(p+\gamma + \mu)}, \\
R_2 & = \frac{ \beta_2 a_2 (S ^0 + \delta_2 V^0)}{(\mu + a_2)(\mu + d_2 + \alpha_2)} \\
&= \frac{\Lambda \beta_2 a_2 (\gamma + \mu + p\delta_2)}{\mu (\mu + a_2)(\mu + d_2 + \alpha_2 )(p+\gamma + \mu)}.
\end{align*}
Here $R_1 (R_2)$ represents the average number of secondary infection cases generated by a single infectious individuals of the native (mutant) strain of the virus, called the basic reproduction number of the native (mutant) strain.

\subsubsection{Interpretation of the basic reproduction number}
As stated above, the basic reproduction number $R_0$ is the maximum of the two basic reproduction numbers, $R_1$ and $R_2$. The basic reproduction number $R_1$ is given by the product of the infection rate of the susceptible (unvaccinated) and vaccinated individuals by native strain infectious individuals (near the disease-free equilibrium), [$\beta_1 (S^0 + \delta_1 V^0)$], the proportion of the exposed individuals to the native strain that survived in the exposed class ($E_1$) and moved to infected compartment ($I_1$) [$\frac{a_1}{\mu + a_1}$], and the average time duration in the  infectious class ($I_1$) [$\frac{1}{\mu + d_1 + \alpha_1 + \nu_1}$]. The basic reproduction number $R_2$ is given by the product of the infection rate of the susceptible (unvaccinated) and vaccinated individuals by the mutant strain infectious individuals (near the disease-free equilibrium), [$\beta_2 (S^0 + \delta_1 V^0)$], the proportion of the exposed individuals to the mutant strain that survived in the exposed class ($E_2$) and moved to infected compartment ($I_2$) [$\frac{a_2}{\mu + a_2}$], and the average time duration in the infectious class ($I_2$) [$\frac{1}{\mu + d_2 + \alpha_2 }$]. 

\begin{remark}
	If $\delta_1 = 0, \delta_2 = 0,$ then vaccine is perfectly effective to both native and mutant strains and if $p=0$ then the system reduces without vaccination. For this case, the basic reproduction numbers of the native strain and mutant strain, $R_{1wv}$ and $R_{2wv}$ are given by, respectively, 
	\begin{align*}
	R_{1wv} & = \frac{\Lambda \beta_1 a_1}{\mu (\mu + a_1)(\mu + d_1 + \alpha_1 + \nu_1)}, \\
	R_{2wv} & = \frac{\Lambda \beta_2 a_2}{\mu (\mu + a_2)(\mu + d_2 + \alpha_2 )}.
	\end{align*}
	Thus, the basic reproduction number of system without vaccination is given by $R_{0wv} = \max \left\lbrace R_{1wv}, R_{2wv} \right\rbrace.$ 
\end{remark}

\begin{theorem}\label{dfe_local}
	If $R_0 = \max \left\lbrace R_1, R_2 \right\rbrace <1,$ the DFE $(D^0)$ is locally asymptotically stable; if $R_0 = \max \left\lbrace R_1, R_2 \right\rbrace >1,$ $D^0$ is unstable.
\end{theorem}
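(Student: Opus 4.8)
The plan is the standard Jacobian/next-generation argument, but the structure of system \eqref{model} at $D^{0}$ is rigid enough that the whole conclusion can be read off from two $2\times 2$ matrices; in particular I would not need to invoke \cite{VandenDriessche2002} directly, although the statement coincides with Theorem~2 there since $F\ge 0$ and $V$ is a nonsingular $M$-matrix. First I would linearize \eqref{model} at $D^{0}$, ordering the variables as $(S,V,R_{1},R_{2},E_{1},E_{2},I_{1},I_{2})$. Every partial derivative of the $E_{1},E_{2},I_{1},I_{2}$ equations with respect to $S,V,R_{1},R_{2}$ is a multiple of $I_{1}$ or $I_{2}$ and therefore vanishes at $D^{0}$, so the Jacobian is block upper triangular,
\[ J(D^{0})=\begin{pmatrix} A & B\\ 0 & F-V\end{pmatrix}, \]
where $A$ is the Jacobian of the $(S,V,R_{1},R_{2})$ subsystem at the DFE and $F-V$ is exactly the matrix assembled from the $F$ and $V$ computed above; hence the eigenvalues of $J(D^{0})$ are those of $A$ together with those of $F-V$. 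The matrix $A$ is itself block diagonal, with a scalar block $-\mu$ of multiplicity two (from the $R_{1}$ and $R_{2}$ equations) and the $2\times 2$ block $\bigl(\begin{smallmatrix}-(\mu+p)&\gamma\\ p&-(\mu+\gamma)\end{smallmatrix}\bigr)$, whose trace $-(2\mu+p+\gamma)$ is negative and whose determinant is $\mu(\mu+\gamma+p)>0$; so every eigenvalue of $A$ has negative real part irrespective of $R_{0}$.

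The threshold enters only through $F-V$. Reordering the infected variables as $(E_{1},I_{1},E_{2},I_{2})$ exhibits a second block-triangular splitting, since the mutation coupling occurs solely through the term $\nu_{1}I_{1}$ in the $I_{2}$ equation: $F-V$ becomes $\bigl(\begin{smallmatrix} M_{1} & 0\\ C & M_{2}\end{smallmatrix}\bigr)$ with $C=\bigl(\begin{smallmatrix}0&\nu_{1}\\ 0&0\end{smallmatrix}\bigr)$ and
\[ M_{1}=\begin{pmatrix}-(a_{1}+\mu)&\beta_{1}(S^{0}+\delta_{1}V^{0})\\ a_{1}&-(\mu+d_{1}+\alpha_{1}+\nu_{1})\end{pmatrix},\qquad M_{2}=\begin{pmatrix}-(a_{2}+\mu)&\beta_{2}(S^{0}+\delta_{2}V^{0})\\ a_{2}&-(\mu+d_{2}+\alpha_{2})\end{pmatrix}. \]
Each $M_{i}$ has negative trace, and a one-line computation using the formulas for $R_{1},R_{2}$ gives $\det M_{1}=(a_{1}+\mu)(\mu+d_{1}+\alpha_{1}+\nu_{1})(1-R_{1})$ and $\det M_{2}=(a_{2}+\mu)(\mu+d_{2}+\alpha_{2})(1-R_{2})$. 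By the Routh--Hurwitz criterion for $2\times 2$ matrices, $M_{i}$ has both eigenvalues in the open left half-plane iff $\det M_{i}>0$, i.e.\ iff $R_{i}<1$; and if $R_{i}>1$ then $\det M_{i}<0$, so $M_{i}$ has a positive real eigenvalue.

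Assembling the pieces: if $R_{0}=\max\{R_{1},R_{2}\}<1$ then $M_{1}$ and $M_{2}$ are both Hurwitz, hence $F-V$ is Hurwitz, and combined with the Hurwitz block $A$ this makes $J(D^{0})$ Hurwitz, so $D^{0}$ is locally asymptotically stable. If $R_{0}>1$ then $R_{1}>1$ or $R_{2}>1$, so one of $M_{1},M_{2}$ — and therefore $F-V$ and $J(D^{0})$ — has an eigenvalue with positive real part, so $D^{0}$ is unstable.

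The main (and essentially only) obstacle is the bookkeeping that legitimizes the two successive triangularizations: one has to verify carefully that at $D^{0}$ all the ``upstream'' Jacobian entries vanish — namely those coupling the infected equations back to $(S,V,R_{1},R_{2})$, and the strain-$2$-to-strain-$1$ entries inside the infected block — since only then do the eigenvalues of $J(D^{0})$ decouple into the spectra of $A$, $M_{1}$ and $M_{2}$. After that the argument is just the two determinant evaluations and a $2\times 2$ Routh--Hurwitz check, and one may note in passing that the same conclusion follows from \cite{VandenDriessche2002} because $F\ge 0$ componentwise and $V$ is a nonsingular $M$-matrix, so that $s(F-V)<0\iff\rho(FV^{-1})<1$.
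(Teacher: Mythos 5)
Your proof is correct and follows essentially the same route as the paper: linearize at $D^0$ and read the stability off the factored characteristic polynomial, which your two successive block-triangularizations reproduce (the polynomials $\det(\lambda I - M_1)$ and $\det(\lambda I - M_2)$ are exactly the paper's two quadratic factors, and your block $A$ supplies the remaining linear factors). If anything your bookkeeping is more careful than the paper's displayed equation \eqref{chareq}, which as printed has degree $7$ rather than $8$ (one factor of $(\lambda+\mu)$ is missing) and omits $\nu_1$ from the linear coefficient of the first quadratic; neither typo affects the conclusion, and your argument implicitly corrects both.
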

\begin{proof}
	By linearizing the system \eqref{model} at $D^0$, we obtain the following characteristic equation 
	\begin{equation}\label{chareq}
	\begin{aligned}
	(\lambda + \mu)^2 (\lambda + p + \gamma + \mu) (\lambda^2 + \lambda (2\mu + a_1 + d_1 + \alpha_1) + (\mu + a_1)(\mu + d_1 + \alpha_1 + \nu_1) (1 - R_1)) \\
	(\lambda^2 + \lambda (2\mu + a_2 + d_2 + \alpha_2) + (\mu + a_2)(\mu + d_2 + \alpha_2) (1 - R_2)) = 0.
	\end{aligned}
	\end{equation}
	From the above characteristic equation, it is easy to see that all the roots of Eq. \eqref{chareq} are negative or have negative real parts for $R_0 < 1.$ Hence, the $D^0$ of system \eqref{model} is locally asymptotically stable for $R_0 < 1.$ If $R_0 > 1,$ at least one of the roots of Eq. \eqref{chareq} has positive real part. Hence, $D^0$ is unstable for $R_0 > 1.$ 
\end{proof} 

\begin{theorem}\label{dfe_gloal}
 $D^0$ is globally asymptotically stable, whenever $R_0 = \max \left\lbrace R_1, R_2 \right\rbrace <1$.
\end{theorem}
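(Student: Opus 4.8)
The plan is to prove global attractivity of $D^0$ inside the positively invariant region $\Delta$; since Theorem \ref{dfe_local} already gives local asymptotic stability when $R_0<1$, this yields global asymptotic stability. The approach exploits the feed-forward (block-triangular) structure of system \eqref{model}: the pair $(E_1,I_1)$ is driven only by $(S,V)$; $R_1$ is driven only by $I_1$ (and $I_2$); $(E_2,I_2)$ is driven by $(S,V,R_1,I_1)$; $R_2$ is driven only by $I_2$; and the $(S,V)$ block decouples from all infected variables in the disease-free limit. Accordingly, I would prove convergence one block at a time, comparing each infected block with a linear cooperative (quasimonotone) system whose stability is governed precisely by the thresholds $R_1<1$ and $R_2<1$.

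The first and, I expect, most delicate step is to sharpen the a priori estimate coming from $\Delta$. The bound $S+\delta_i V\le S+V\le \Lambda/\mu=S^0+V^0$ is \emph{not} strong enough: substituting it into the $E_i$-equations effectively replaces $S^0+\delta_i V^0$ by the larger quantity $S^0+V^0$ and destroys the threshold. Instead, I would discard the nonnegative infection terms in the $S$- and $V$-equations to get $S'\le \Lambda-(\mu+p)S+\gamma V$ and $V'\le pS-(\mu+\gamma)V$; the right-hand side is a quasimonotone linear vector field whose unique equilibrium $(S^0,V^0)$ is globally asymptotically stable, so a standard comparison argument gives $S(t)\le\bar S(t)$, $V(t)\le\bar V(t)$ with $(\bar S,\bar V)$ solving the associated linear system with the same initial data and tending to $(S^0,V^0)$. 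Hence $\limsup_{t\to\infty}S(t)\le S^0$ and $\limsup_{t\to\infty}V(t)\le V^0$; in particular, for every $\varepsilon>0$ there is $T_\varepsilon$ with $S+\delta_1 V\le S^0+\delta_1 V^0+\varepsilon$ and $S+\delta_2 V\le S^0+\delta_2 V^0+\varepsilon$ for $t\ge T_\varepsilon$.

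Then I would run the blocks in order. \emph{Native strain:} for $t\ge T_\varepsilon$, $E_1'\le \beta_1(S^0+\delta_1 V^0+\varepsilon)I_1-(a_1+\mu)E_1$ and $I_1'=a_1 E_1-(\alpha_1+\mu+d_1+\nu_1)I_1$; the $2\times2$ cooperative comparison system has a Hurwitz coefficient matrix precisely when $\beta_1 a_1(S^0+\delta_1 V^0+\varepsilon)<(a_1+\mu)(\alpha_1+\mu+d_1+\nu_1)$, i.e. when $R_1+\mathcal{O}(\varepsilon)<1$, which holds for $\varepsilon$ small since $R_1<1$; hence $(E_1,I_1)\to(0,0)$. \emph{Recovered-1:} then $R_1'\le \alpha_1 I_1-\mu R_1$ with $I_1\to0$ forces $R_1\to0$. \emph{Mutant strain:} for $t$ large, $S+\delta_2 V+\delta_3 R_1\le S^0+\delta_2 V^0+\varepsilon'$ with $\varepsilon'$ arbitrarily small and the forcing $\nu_1 I_1(t)\to0$, so $E_2'\le\beta_2(S^0+\delta_2 V^0+\varepsilon')I_2-(a_2+\mu)E_2$ and $I_2'=a_2 E_2-(\alpha_2+\mu+d_2)I_2+\nu_1 I_1(t)$; comparison with the linear cooperative system driven by the vanishing forcing $\nu_1 I_1(t)$, whose homogeneous part is Hurwitz exactly when $R_2+\mathcal{O}(\varepsilon')<1$, yields $(E_2,I_2)\to(0,0)$. \emph{Recovered-2:} $R_2'=\alpha_2 I_2-\mu R_2$ then gives $R_2\to0$. \emph{Susceptible/vaccinated:} finally, all infection terms in the $S$- and $V$-equations are bounded by $\frac{\Lambda}{\mu}(\beta_1 I_1+\beta_2 I_2)\to0$, so $(S,V)$ satisfies an asymptotically autonomous (indeed linear-plus-vanishing-perturbation) system whose limiting equation has globally stable equilibrium $(S^0,V^0)$; hence $(S,V)\to(S^0,V^0)$. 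Thus every solution starting in $\Delta$ converges to $D^0$, which with Theorem \ref{dfe_local} proves the claim.

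The main obstacle is the nonlinear coupling $\delta_3\beta_2 R_1 I_2$ in the $E_2$-equation: it blocks a one-shot argument — a single Shuai--van den Driessche type Lyapunov function, or the Castillo-Chavez--Feng--Huang two-subsystem criterion, applied to the whole infected block — because the required sign condition ($f(x)\le Fx$, equivalently $\hat G\ge0$) fails for that term, a product of two merely bounded (not yet small) quantities, and also fails before the $\limsup$-bounds on $S,V$ are in force. Ordering the blocks as above circumvents this: once $I_1$, and hence $R_1$, are known to vanish, that term is an arbitrarily small perturbation. A viable alternative, once the preliminary $\limsup$-bounds are available, is a Lyapunov function $L_1=w_1 E_1+w_2 I_1$ (and, after $R_1$ is controlled, $L_2=w_3 E_2+w_4 I_2$) with weights from the left Perron eigenvectors of the relevant next-generation matrices, followed by LaSalle's invariance principle; but since this still needs the same sequential treatment of $R_1$, I would present the comparison version.
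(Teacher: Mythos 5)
Your proof is correct, but it takes a genuinely different route from the paper's. The paper applies the criterion of Castillo-Chavez et al.\ \cite{Castillo-Chavez}: it splits the state into uninfected variables $X=(S,V,R_1,R_2)$ and infected variables $Y=(E_1,E_2,I_1,I_2)$, verifies that the disease-free $X$-subsystem is globally stable, writes $G(X,Y)=MY-\hat G(X,Y)$ with $M=D_YG(X^*,0)$, and asserts $\hat G\ge 0$ on $\Delta$, where $\hat G_1=\beta_1 I_1\bigl((S^0+\delta_1V^0)-(S+\delta_1V)\bigr)$ and $\hat G_2=\beta_2 I_2\bigl((S^0+\delta_2V^0)-(S+\delta_2V+\delta_3R_1)\bigr)$. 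The obstruction you flag is exactly the weak point of that one-shot argument: $\Delta$ only gives $S+V\le S^0+V^0$, which does not force $S+\delta_iV\le S^0+\delta_iV^0$ when $\delta_i<1$ (take $V$ near $0$ and $S$ near $\Lambda/\mu$), and the $\delta_3R_1$ term makes the sign of $\hat G_2$ even less controlled; so the paper's verification of the sign condition is asserted rather than established on all of $\Delta$. Your sequential argument supplies precisely what is missing: the preliminary cooperative comparison for $(S,V)$ yields the sharp asymptotic bounds $\limsup_{t\to\infty}S\le S^0$ and $\limsup_{t\to\infty}V\le V^0$ that the sign condition silently requires, and the block ordering --- $(E_1,I_1)$ first, then $R_1$, then $(E_2,I_2)$ with the vanishing forcing $\nu_1 I_1(t)$ --- neutralizes the $\delta_3\beta_2 R_1 I_2$ coupling by waiting until $R_1\to0$. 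The trade-off is that the paper's route is a two-line verification when its hypotheses genuinely hold, whereas yours is longer and leans on comparison theorems for quasimonotone systems plus an asymptotically autonomous argument for the final convergence of $(S,V)$, but it is the version that actually closes the proof for this model.
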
 
\begin{proof}
		To prove the global stability of $D^0$, we follow the approach given by Castillo-Chavez et al. \cite{Castillo-Chavez}. We rewrite the system \eqref{model} as follows
	\begin{equation}
	\begin{aligned}
	\frac{dX}{dt} &= F(X,Y),\\
	\frac{dY}{dt} &= G(X,Y), \quad G(X,0)=0,
	\end{aligned}
	\end{equation}
	where $X = (S,V,R_1,R_2)\in \mathbb{R}^4$ signifies the number of uninfected individuals and $Y = (E_1,E_2,I_1,I_2) \in \mathbb{R}^4 $ signifies the number of infected individuals. Disease-free equilibrium $(D^0)$ is globally stable if the following two conditions are fulfilled:
	\begin{itemize}
		\item[(H1)] For $\frac{dX}{dt} = F(X,Y),$ $X^*$ is globally asymptotically stable,
		\item[(H2)] $G(X,Y) = MY - \hat{G}(X,Y),~ \hat{G}(X,Y)>0$ for $(X,Y)\in \Delta$,
	\end{itemize}
	where $M = D_Y G(X^*, 0)$ is an $M$-matrix. For the system \eqref{model}, we have 
	\begin{equation}\label{model_9}
	F(X,0) = \left(\begin{array}{c}
	\Lambda - (\mu +p) S + \gamma V\\ 
	pS - (\mu + \gamma)V \\
	0 \\
	0
	\end{array}  \right). 
	\end{equation}
		It is obvious that the equilibrium $X^* = \left(\frac{\Lambda (\mu + \gamma)}{\mu (\mu + \gamma + p)}, \frac{\Lambda p}{\mu (\mu + \gamma + p)}, 0, 0 \right) $ is globally asymptotically stable of system \eqref{model_9}. Further, for system \eqref{model}, we obtain 
	\begin{align*}
	M &= \left(\begin{array}{cccc}
	-(\mu +a_1) & 0 & \beta_1 (S^0 + \delta_1 V^0) & 0 \\ 
	0 & -(\mu +a_2) & 0 & \beta_2 (S^0 + \delta_2 V^0) \\
	a_1 & 0 & -(\alpha_1 + \mu + d_1 + \nu_1) & 0 \\
	0 & a_2 & \nu_1 & -(\alpha_2 + \mu + d_2)
	\end{array} \right), \\
	\hat{G}(X,Y) &= \left(\begin{array}{c}
	\beta_1 I_1((S^0 + \delta_1 V^0)-(S+\delta_1 V) ) \\ 
	\beta_2 I_2((S^0 + \delta_2 V^0)-(S+\delta_2 V + \delta_3 R_1) ) \\
	0 \\
	0
	\end{array}  \right). 
	\end{align*}
	It is clear that $\hat{G}(X,Y)\geq 0$. Hence, $D^0$ is globally stable, i.e. every solution of the system \eqref{model} approaches the DFE ($D^0$) as $t \rightarrow \infty$ for $R_0 <1.$ Thus, the disease (i.e., both native and mutant strains) will be eliminated from the community if $R_0 <1.$ Consequently, $R_0$ represents the threshold value for the existence of other positive equilibria of the system \eqref{model}. Moreover, it is eminent that the basic reproduction number $(R_0)$ represents the average number of secondary infections that occurred from a single infected individual in the whole susceptible population. Therefore, if $R_0 < 1$, each infected individual in the entire infectious period will produce less than one infected individual on average, which implies that the disease will die out. However, if $R_0 > 1,$ then each infected individual in the whole infectious period having contact with susceptible individuals will generate more than one infected individual; this leads to the disease invading the susceptible population.     
\end{proof}

It must be mentioned that, for mathematical (endemic) models such as \eqref{model}, the epidemiological necessity $R_0 <1$ is sufficient as well as necessary for eradication of the disease. This is because, for such mathematical models (i.e., Kermack-McKendrick models with demographic dynamics), the disease will persist whenever $R_0 >1$ (this is because the pool of new susceptible individuals will continuously be refilled, by immigration or birth, thereby letting the disease to maintain itself in the community). If the demographic effects are not allowed (i.e., in the case of a single outbreak/epidemic model is used), the epidemiological condition $R_0 <1$ is only sufficient, but not necessary, for eradicating the epidemic. For such epidemic models (with no demographic dynamics), the disease always dies out with time (irrespective of the value of the basic reproduction number of the epidemic models). In other words, even if the basic reproduction number exceeds unity, the disease will eventually die out; this is because the endemic rises and reach a peak.

\subsection{Mutant dominant equilibrium and its stability}
First of all, it should be mentioned that in the absence of native strain ($I_1 = 0$), the system \eqref{model} reduces to the following subsystem:

 \begin{equation}\label{mutantmodel}
 \begin{aligned}
 \frac{dS}{dt} &= \Lambda - \beta_2 I_2 S - (\mu + p)S + \gamma V,\\
 \frac{dV}{dt}& = pS - \delta_2 \beta_2 I_2 V - (\mu + \gamma) V,\\
 \frac{dE_2}{dt} &= \beta_2 (S  + \delta_2 V) I_2 - (a_2 + \mu) E_2,\\
 \frac{dI_2}{dt}& = a_2 E_2 - (\alpha_2 + \mu + d_2) I_2,\\
 \frac{dR_2}{dt}& = \alpha_2 I_2 - \mu R_2.
 \end{aligned}
 \end{equation}
 The analysis of the sub-system \eqref{mutantmodel} will be considered in the following positively invariant region 
  \begin{equation*}
  \Delta_{I_2} = \left\lbrace (S, V, E_2, I_2, R_2)\in \mathbb{R}^{5}_{+}: 0< S+V +E_2 + I_2 +R_2 \leq \frac{\Lambda}{\mu} \subset \mathbb{R}^{5}_{+} \right\rbrace. 
  \end{equation*}
The mutant dominant equilibrium is given by $D^2 = (S^2, V^2, 0, E_2^2, 0, I_2^2, 0, R_2^2)$, where the components of $D^2$ can be obtained by solving the equations of right hand side of the subsystem \eqref{mutantmodel}. Thus, 
\begin{align*}
S^2 & = \frac{\Lambda}{\mu} - \frac{p\left\lbrace \Lambda a_2 - (a_2 + \mu)(\alpha_2 + \mu +d_2)I_2^2 \right\rbrace }{\mu a_2 (\delta_2 \beta_2 I_2^2 + \mu + \gamma + p)} - \frac{(a_2 + \mu)(\alpha_2 + \mu + d_2)I_2^2}{\mu a_2},\\
V^2 & = \frac{p\left\lbrace \Lambda a_2 - (a_2 + \mu)(\alpha_2 + \mu +d_2)I_2^2 \right\rbrace }{\mu a_2 (\delta_2 \beta_2 I_2^2 + \mu + \gamma + p)}, \quad E_2^2 = \frac{(\alpha_2 + \mu +d_2)I_2^2}{a_2}, \quad R_2^2 = \frac{\alpha_2 I_2^2}{\mu},
\end{align*}
and $I_2^2$ satisfies the following equation:
\begin{align}\label{MSE}
k_1^{'}I_2^2 + k_2^{'}I_2 + k_3^{'} = 0,
\end{align}
where 
\begin{align*}
k_1^{'} &= \beta_2^2 (\mu + a_2)(\mu +d_2 + \alpha_2)\delta_2,\\
k_2^{'} &= \beta_2((\mu+a_2)(\mu +d_2 + \alpha_2)(\gamma + \mu + (p+\mu)\delta_2)-\Lambda \beta_2 a_2 \delta_2),\\
k_3^{'} &= \mu (\mu + a_2)(\mu +d_2 + \alpha_2)(\gamma + \mu + p)(1-R_2).
\end{align*}
We can see that Eq. \eqref{MSE} has zero, one, or two roots, depending on parameter values. For the case $0<\delta_2 \leq 1,$ $k_3^{'} <0$ if $R_2 > 1$, and  $k_3^{'} > 0$ if $R_2 < 1$. Since Eq. \eqref{MSE} is quadratic equation, therefore if $R_2 > 1$, then Eq. \eqref{MSE} has a unique positive root and there is a unique mutant dominant equilibrium. If $R_2 = 1$, then $k_3^{'} = 0$ and there is unique non-zero solution of \eqref{MSE}, given by $I_2^2 = -\frac{k_2^{'}}{k_1^{'}},$ which is positive if and only if $k_2^{'}<0.$ If $R_2 = 1, k_3^{'} = 0,$ then
\begin{equation}\label{eq_7}
\Lambda \beta_2 a_2 (\gamma + \mu + p\delta_2)=\mu (\mu + a_2)(\mu + d_2 + \alpha_2)(p+\gamma + \mu).
\end{equation}
The condition $k_2^{'} < 0$ gives
\begin{equation*}
(\mu+a_2)(\mu +d_2 + \alpha_2)(\gamma + \mu + (p+\mu)\delta_2) < \Lambda \beta_2 a_2 \delta_2,
\end{equation*}
combined with \eqref{eq_7}, we obtain
\begin{equation*}
(\gamma + \mu)^2 + (p\delta_2)^2 + \mu p \delta_2^2 + 2\gamma p \delta_2 + \mu p \delta_2 <0,
\end{equation*}
which is not possible. Hence, if $R_2 \leq 1$, system \eqref{model} has no mutant dominant equilibrium. Furthermore, it should be stated that for the equilibrium $D^2$ to exist, it is necessary that the native strain dies out asymptotically (i.e., $R_1 \leq 1$). Thus, we conclude that system \eqref{mutantmodel} has a unique mutant dominant equilibrium ($D^2$) whenever $R_2 >1$ and $R_1 \leq 1.$ Further, for the stability of the equilibrium $D^2$, we have the following result:
\begin{theorem}
	The unique mutant dominant equilibrium ($D^2$) is globally asymptotically stable whenever $R_2>1$ and $R_1 \leq 1.$
\end{theorem}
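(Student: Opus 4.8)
The plan is to split the argument into two phases. First, use the hypothesis $R_1\le 1$ to show the native strain is driven to extinction, which asymptotically collapses the dynamics onto the mutant-only subsystem \eqref{mutantmodel}. Second, prove that $D^2$ is globally asymptotically stable within \eqref{mutantmodel} by a Goh--Volterra Lyapunov function together with LaSalle's invariance principle, and transfer the conclusion back to the full system via the theory of asymptotically autonomous semiflows (Thieme; Castillo-Chavez and Thieme). A useful preliminary reduction is that the set $\{(S,V,E_1,\dots,R_2)\in\Delta:\ S\le S^0,\ V\le V^0\}$ is forward invariant and absorbing: discarding the (nonpositive) incidence terms in the $S$- and $V$-equations gives $\dot S\le \Lambda-(\mu+p)S+\gamma V$ and $\dot V\le pS-(\mu+\gamma)V$, a cooperative planar linear system with globally stable equilibrium $(S^0,V^0)$, so the comparison principle yields $\limsup_{t\to\infty}S(t)\le S^0$ and $\limsup_{t\to\infty}V(t)\le V^0$. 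It therefore suffices to argue on this absorbing set.

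\textbf{Phase 1 (extinction of the native strain).} On the absorbing set $S+\delta_1 V\le S^0+\delta_1 V^0$, so the native block obeys $\dot E_1\le \beta_1(S^0+\delta_1 V^0)I_1-(a_1+\mu)E_1$ and $\dot I_1=a_1 E_1-(\alpha_1+\mu+d_1+\nu_1)I_1$. The comparison (Metzler) system has coefficient matrix with negative stability modulus precisely when $R_1<1$ (its determinant equals $(a_1+\mu)(\alpha_1+\mu+d_1+\nu_1)(1-R_1)$), so in that case $E_1(t),I_1(t)\to0$; then $\dot R_1\le \alpha_1 I_1-\mu R_1$ with $\alpha_1 I_1\to0$ forces $R_1(t)\to0$. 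Hence $(E_1,I_1,R_1)\to(0,0,0)$.

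\textbf{Phase 2 (reduction and Lyapunov function).} With $E_1,I_1,R_1\to0$, the equations for $(S,V,E_2,I_2,R_2)$ form an asymptotically autonomous system whose limit system is exactly \eqref{mutantmodel}, so it remains to prove GAS of $D^2$ for \eqref{mutantmodel}. Since the $R_2$-equation is linear and decoupled, I discard it and study $(S,V,E_2,I_2)$. Using the equilibrium identities
\begin{align*}
\Lambda &= \beta_2 I_2^2 S^2+(\mu+p)S^2-\gamma V^2,\qquad pS^2=\delta_2\beta_2 I_2^2 V^2+(\mu+\gamma)V^2,\\
(a_2+\mu)E_2^2 &= \beta_2(S^2+\delta_2 V^2)I_2^2,\qquad (\alpha_2+\mu+d_2)I_2^2=a_2 E_2^2,
\end{align*}
I propose the Lyapunov function
\begin{align*}
W &= c_1\Bigl(S-S^2-S^2\ln\tfrac{S}{S^2}\Bigr)+c_2\Bigl(V-V^2-V^2\ln\tfrac{V}{V^2}\Bigr)\\
&\quad + c_3\Bigl(E_2-E_2^2-E_2^2\ln\tfrac{E_2}{E_2^2}\Bigr)+c_4\Bigl(I_2-I_2^2-I_2^2\ln\tfrac{I_2}{I_2^2}\Bigr),
\end{align*}
with positive weights $c_i$ to be fixed. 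Differentiating along \eqref{mutantmodel} and eliminating $\Lambda$, $(a_2+\mu)$, $(\alpha_2+\mu+d_2)$ via the identities, the linear-in-state contributions cancel and $\dot W$ reduces to a sum of terms of the forms $-\mu\frac{(S-S^2)^2}{S}$ (from the recruitment--death balance) together with brackets $\bigl(2-u-\tfrac1u\bigr)$, $\bigl(3-u-v-\tfrac1{uv}\bigr)$ (and a four-term analogue) in ratios such as $\frac{S^2}{S}$, $\frac{SI_2E_2^2}{S^2I_2^2E_2}$, $\frac{E_2 I_2^2}{E_2^2 I_2}$ and their $V$-counterparts; each bracket is $\le0$ by the arithmetic--geometric mean inequality, provided $c_1,\dots,c_4$ are matched to the coefficients of the $S\leftrightarrow V$ exchange terms and of the $E_2$--$I_2$ chain. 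LaSalle's invariance principle then drives every trajectory into the largest invariant subset of $\{\dot W=0\}$, which a direct check identifies as $\{D^2\}$; since $W$ is a proper Lyapunov function on the bounded absorbing set this gives GAS, and finally $R_2(t)\to\alpha_2 I_2^2/\mu=R_2^2$.

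\textbf{Main obstacle.} The genuinely delicate point is the boundary case $R_1=1$, where the comparison system in Phase 1 is only marginally stable (determinant zero), so extinction of the native strain cannot be read off a linear estimate. The way through is that the mutant strain strictly depresses the uninfected pool: from the equilibrium relations one gets $S^2<S^0$ and $V^2<V^0$, hence $S^2+\delta_1 V^2<S^0+\delta_1 V^0$, so the reproduction number of the native strain \emph{in the limiting mutant-only environment} is strictly below $1$. Making this rigorous requires either a two-stage argument (first pass to the $\omega$-limit set, which lies in $\Delta_{I_2}$, then rerun the Phase 1 estimate there) or a uniform-persistence-versus-extinction dichotomy, and this is where I expect the bulk of the effort. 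A second, more mechanical obstacle is the explicit choice of the weights $c_1,\dots,c_4$ in Phase 2: because $S$ and $V$ are coupled through vaccination ($p$) and waning ($\gamma$), the standard single-susceptible-class bookkeeping does not apply verbatim, and one may need the graph-theoretic / Volterra--Lyapunov-stable-matrix construction of Shuai and van den Driessche to certify that admissible positive weights exist.
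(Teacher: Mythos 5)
Your Phase 2 is essentially the paper's entire proof: the paper works directly on the subsystem \eqref{mutantmodel}, drops the decoupled $R_2$-equation, rescales by the equilibrium values, and uses exactly the Goh--Volterra function you write down, with the explicit weights $k_1=k_2=k_3=1$ and $k_4=(a_2+\mu)/a_2$; the $S$--$V$ coupling through $p$ and $\gamma$ is absorbed into the brackets $\gamma V^2\bigl(2-\tfrac{x}{y}-\tfrac{y}{x}\bigr)$ and $\mu V^2\bigl(3-\tfrac{1}{x}-y-\tfrac{x}{y}\bigr)$, so your anticipated need for the Shuai--van den Driessche graph-theoretic weight construction does not materialize. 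Where you genuinely diverge is Phase 1: the paper never carries out the reduction from the full eight-dimensional system to \eqref{mutantmodel}. It simply remarks that existence of $D^2$ requires $R_1\le 1$ and then proves global stability only within the invariant set $\Delta_{I_2}$ (initial data with no native strain), so as written the paper establishes a weaker statement than the theorem claims. Your extinction argument via the Metzler comparison system plus asymptotically autonomous semiflow theory is the missing ingredient needed to make the theorem global in $\Delta$, and it works cleanly for $R_1<1$. You are also right that $R_1=1$ is the delicate case, and the paper is silent on it; note, however, that your proposed fix (``pass to the $\omega$-limit set, which lies in $\Delta_{I_2}$'') is circular as stated, since showing the $\omega$-limit set has $E_1=I_1=0$ is precisely the extinction claim --- a fluctuation-lemma or persistence/extinction dichotomy argument would be needed there. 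Since you flag this as an open obstacle rather than asserting it resolved, your proposal is an honest and strictly more complete blueprint than the published proof, at the cost of machinery (asymptotic autonomy, the boundary case) that the paper avoids by restricting attention to $\Delta_{I_2}$.
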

\begin{proof}
	We consider
	\begin{align*}
	x=\frac{S}{S^2},~ y=\frac{V}{V^2},~ z=\frac{E_2}{E_2^2},~ u=\frac{I_2}{I_2^2},
	\end{align*}
	and with the help of Eqs. of right hand side of the system \eqref{mutantmodel}, the system \eqref{mutantmodel} can be rewritten as follows:
	\begin{equation}\label{modellyp1}
	\begin{aligned}
	x' &= x\Big[\frac{\Lambda}{S^2}\Big(\frac{1}{x}-1\Big) - \beta_2 I_2^2 (u-1) + \frac{\gamma V^2}{S^2}\Big(\frac{y}{x} -1 \Big) \Big],\\
	y' & =  y\Big[\frac{pS^2}{V^2}\Big(\frac{x}{y}-1\Big) - \delta_2 \beta_2 I_2^2 (u-1) \Big],\\
	z' &= z \frac{\beta_2 I_2^2}{E_2^2}\Big[S^2 \Big(\frac{xu}{z}-1 \Big) + \delta_2 V^2 \Big(\frac{yu}{z}-1 \Big) \Big],\\
	u' &= u  \frac{a_2 E_2^2}{I_2^2}\Big[\frac{z}{u}-1 \Big].
	\end{aligned}
	\end{equation}
	Further, we consider the following Lyapunov function
	\begin{equation*}
	\begin{aligned}
	Z = k_1 S^2 (x-1-\ln x) + k_2 V^2 (y-1-\ln y) + k_3 E_2^2 (z-1-\ln z) + k_4 I_2^2 (u-1-\ln u),
	\end{aligned}
	\end{equation*}
	where the positive constants $k_1, k_2, k_3,$ and $k_4$ will be defined below. By differentiating $Z$ with respect to $t$ along the solutions of \eqref{mutantmodel}, we obtain
	
	\begin{equation*}
	\begin{aligned}
	Z' =& k_1 (x-1) \Big[\Lambda \Big(\frac{1}{x} - 1\Big) - \beta_2 S^2 I_2^2 (u-1) + \gamma V^2 \Big(\frac{y}{x} -1 \Big) \Big] \\
	&+ k_2 (y-1) \Big[ps^2 \Big(\frac{x}{y} - 1\Big) - \delta_2 \beta_2 V^2 I_2^2 (u-1) \Big] \\
	&+ k_3 \beta_2 I_2^2 (z-1) \Big[S^2 \Big(\frac{xu}{z}-1\Big) + \delta_2 V^2 \Big(\frac{yu}{z}-1\Big) \Big] + k_4 a_2 E_2^2 (u-1) \Big(\frac{z}{u}-1\Big) \\
	=& k_1 (2\Lambda + \gamma V^2 - \beta_2 S^2 I_2^2) + k_2 (pS^2 - \delta_2 \beta_2 V^2 I_2^2) + k_3 (\beta_2 S^2 I_2^2 + \delta_2 \beta_2 V^2 I_2^2) \\
	& + k_4 a_2 E_2^2 - (k_1 \Lambda + k_1 \gamma V^2 - k_1 \beta_2 S^2 I_2^2 - k_2 p S^2)x - k_1 \Lambda \frac{1}{x}\\
	& - (-k_1 \gamma V^2 + k_2 p S^2 - k_2 \delta_2 \beta_2 V^2 I_2^2)y - k_1 \gamma V^2 \frac{y}{x} - k_2 p S^2 \frac{x}{y} \\
	&- (k_1 \beta_2 S^2 I_2^2 - k_3 \beta_2 S^2 I_2^2)ux - (-k_1 \beta_2 S^2 I_2^2 - k_2 \delta_2 \beta_2 V^2 I_2^2 + k_4 a_2 E_2^2)u \\
	&- (k_2 \delta_2 \beta_2 V^2 I_2^2 - k_3 \delta_2 \beta_2 V^2 I_2^2)yu - (k_3 \beta_2 S^2 I_2^2 + k_3 \delta_2 \beta_2 V^2 I_2^2 - k_4 a_2 E_2^2)z \\
	&- k_3 \beta_2 S^2 I_2^2 \frac{xu}{z} - k_3 \delta_2 \beta_2 V^2 I_2^2 \frac{yu}{z} - k_4 a_2 E_2^2 \frac{z}{u} \\
	=: & G(x,y,z,u).
	\end{aligned}
	\end{equation*}
	Choose the positive constants $k_1, k_2, k_3,$ and $k_4$ as follows:
	\begin{equation*}
	k_1 = k_2 = k_3 =1, k_4 = \frac{a_2 + \mu}{a_2}.
	\end{equation*} 
	Substituting the above values into the function $G(x,y,z,u)$ gives
	\begin{equation*}
	\begin{aligned}
	G(x,y,z,u)  =& 2 \Lambda + \gamma V^2 + pS^2 + (a_2 + \mu) E_2^2 - \mu S^2 x - \Lambda \frac{1}{x} - \mu V^2 y - \gamma V^2 \frac{y}{x} \\
	& - pS^2 \frac{x}{y} - \beta_2 S^2 I_2^2 \frac{xu}{z} - \delta_2 \beta_2 V^2 I_2^2 \frac{yu}{z} - (a_2 + \mu) E_2^2 \frac{z}{u} \\ 
	=& \mu S^2 \Big(2-x-\frac{1}{x}\Big) + \gamma V^2 \Big(2- \frac{x}{y} - \frac{y}{x}\Big) + \mu V^2 \Big(3- \frac{1}{x} - y - \frac{x}{y}\Big) \\
	& + \beta_2 S^2 I_2^2 \Big(3- \frac{1}{x} - \frac{xu}{z} -\frac{z}{u}\Big) + \delta_2 \beta_2 V^2 I_2^2 \Big(4- \frac{1}{x} - \frac{x}{y} - \frac{yu}{z} - \frac{z}{u}\Big).
	\end{aligned}
	\end{equation*} 
	By the property that the arithmetic mean is greater than or equal to the geometric mean, $G(x,y,z,u) \leq 0,$ and the equality holds only for $x=y=1$ and $z=u,$ i.e., $$\left\lbrace (x,y,z,u) \in \Delta_{I_2} : G(x,y,z,u) = 0 \right\rbrace \equiv \left\lbrace (x,y,z,u): x=y=1,z=u \right\rbrace, $$ which corresponds to the set
	$$\Delta_{I_2}' = \left\lbrace (S,V, E_2, I_2): S = S^*, V=V^*, \frac{E_2}{E_2^*}=\frac{I_2}{I_2^*} \right\rbrace \subset \Delta_{I_2} \subset \Delta.$$ It is evident to see that the maximum invariant set of \eqref{mutantmodel} on the set $\Delta_{I_2}'$ is the singleton $\left\lbrace D^2 \right\rbrace $, then the mutant dominant equilibrium $D^2$ is globally stable in $\Delta_{I_2} \subset \Delta$ by LaSalle's Invariance Principle \cite{LaSalle1976}.
\end{proof}

\subsection{Coexistence equilibrium and its stability}
This section examines the existence and global stability of the coexistence equilibrium of system \eqref{model}. First of all, let us consider the endemic equilibrium $D^* = (S^*, V^*, E_1^*, E_2^*, I_1^*, I_2^*, R_1^*, R_2^*)$, then $S^*, V^*,$ $E_1^*,E_2^*, I_1^*, I_2^*, R_1^*,$ and $R_2^*$ satisfy the following equations: 
\begin{equation}\label{eqn_endemic}
\begin{aligned}
&\Lambda - \beta_1 I_1 S - \beta_2 I_2 S - (\mu + p)S + \gamma V=0,\\
&pS - \delta_1 \beta_1 I_1 V - \delta_2 \beta_2 I_2 V - (\mu + \gamma) V=0,\\
& \beta_1(S + \delta_1 V) I_1 - (a_1 + \mu) E_1=0,\\
& \beta_2 (S  + \delta_2 V + \delta_3 R_1) I_2 - (a_2 + \mu) E_2=0,\\
& a_1 E_1  - (\alpha_1 + \mu + d_1 + \nu_1) I_1=0,\\
& a_2 E_2 - (\alpha_2 + \mu + d_2) I_2 + \nu_1 I_1 =0,\\
&\alpha_1 I_1 - \delta_3 \beta_2 I_2 R_1 - \mu R_1 = 0,\\
& \alpha_2 I_2 - \mu R_2 = 0.
\end{aligned}
\end{equation}
The above Eqs \eqref{eqn_endemic} lead the following expressions:
\begin{align*}
S^* &= \frac{\Lambda (\beta_1 \delta_1 I_1^* + \beta_2 \delta_2 I_2^* +\mu + \gamma)}{(\beta_1 I_1^* + \beta_2 I_2^* +\mu + p)(\beta_1 \delta_1 I_1^* + \beta_2 \delta_2 I_2^* +\mu + \gamma)-p\gamma}, \\
V^* &= \frac{\Lambda p}{(\beta_1 I_1^* + \beta_2 I_2^* +\mu + p)(\beta_1 \delta_1 I_1^* + \beta_2 \delta_2 I_2^* +\mu + \gamma)-p\gamma}, \\
E_1^* &= \frac{(\alpha_1 + \mu +d_1 +\nu_1)I_1^*}{a_1}, \quad E_2^* = \frac{(\alpha_2 + \mu +d_2)I_2^* - \nu_1 I_1^*}{a_2}, \\
R_1^* &= \frac{\alpha_1 I_1^*}{\delta_2 \beta_2 I_2^* + \mu}, \quad  R_2^* = \frac{\alpha_2 I_2^*}{\mu},
\end{align*}
and $I_1^*, I_2^*$ are the solutions of the following equations:
{\footnotesize \begin{equation} \label{end_sys}
\begin{aligned}
F(I_1, I_2) \equiv & \beta_2^2 \delta_2 I_2^2 + \beta_2 \Big(-\frac{\Lambda \beta_1 a_1 \delta_2}{(\mu + a_1)(\mu + d_1 + \alpha_1 + \nu_1)} + \gamma + \mu + (p+\mu)\delta_2 + \beta_1 (\delta_1 + \delta_2)I_1 \Big) I_2 \\
&+ \beta_1^2 \delta_1 I_1^2 + \beta_1 \Big(-\frac{\Lambda \beta_1 a_1 \delta_1}{(\mu + a_1)(\mu + d_1 + \alpha_1 + \nu_1)} + \gamma + \mu + (p+\mu)\delta_1 \Big) I_1 \\
 &+ \mu (p+\mu + \gamma)(1-R_1)=0,\\
 G(I_1, I_2) \equiv & I_2^3 \Bigg( \beta_2^2 \delta_2 \Big(1 -\frac{\Lambda a_2 \beta_2 \delta_2}{(\mu + a_2)(\mu + d_2 + \alpha_2)} - \frac{a_2 \alpha_1 \beta_2 \delta_3 I_1}{(\mu + a_2)(\mu + d_2 + \alpha_2)} \Big)  \Bigg) \\
 & + I_2^2 \Bigg(\beta_2 (\gamma + \mu + \delta_2(p+\mu)) - \frac{a_2 \beta_2^2 \delta_2 \Lambda (\gamma + 2\mu - p\delta_2)}{(\mu + a_2)(\mu + d_2 + \alpha_2)} - \frac{I_1^2 a_2 \alpha_1 \beta_1 \beta_2^2 \delta_3 (\delta_1 + \delta_2)}{(\mu + a_2)(\mu + d_2 + \alpha_2)}\\
 &+ I_1 \Big(\beta_1 \beta_2(\delta_1 + \delta_2) - \frac{\beta_2^2 (a_2 \alpha_1 \delta_3 (\gamma + \mu + (p+\mu)\delta_2) + \Lambda a_2 \beta_1 \delta_1 \delta_2 + (\mu + a_2)\delta_1 \nu_1)}{(\mu + a_2)(\mu + d_2 + \alpha_2)} \Big) \Bigg) \\
 & + I_2 \Bigg(\mu (p+\gamma + \mu) - \frac{\Lambda \beta_2 a_2 \mu (\gamma + \mu + p\delta_2)}{(\mu + a_2)(\mu + d_2 + \alpha_2)} - \frac{a_2 \alpha_1 \beta_1^2 \beta_2 \delta_1 \delta_3 I_1^3}{(\mu + a_2)(\mu + d_2 + \alpha_2)} \\
 &+ I_1^2 \Big(1- \frac{\beta_1 \beta_2 \delta_3 \alpha_1 a_2 (\gamma + \mu +  (p+\mu)\delta_1)}{(\mu + a_2)(\mu + d_2 + \alpha_2)} + \frac{(\delta_1 + \delta_2)\nu_1}{(\mu + d_2 + \alpha_2)} \Big) \\
 &+ I_1 \Big( \frac{\nu_1(\gamma + \mu +  (p+\mu)\delta_2) - \beta_2 \mu a_2 \alpha_1 \delta_3 (\gamma + \mu + p)-\Lambda \mu a_2 \beta_1 \beta_2 \delta_1}{(\mu + a_2)(\mu + d_2 + \alpha_2)}\\
 &+ \beta_1(\gamma + \mu +  (p+\mu)\delta_1) \Big) \Bigg) - \frac{I_1}{(\mu + d_2 + \alpha_2)} \Bigg(\nu_1 \mu (p+\gamma + \mu) \\
 &+ I_1 \beta_1 \nu_1 (\gamma + \mu +  (p+\mu)\delta_1) + I_1^2 \beta_1^2 \delta_1 \nu_1 \Bigg)=0.
\end{aligned}
\end{equation}}
If the system \eqref{end_sys} admits a solution, then the system \eqref{model} will have an endemic equilibrium. Obtaining the explicit expression for the exact solution of the non-linear autonomous system \eqref{end_sys} is a daunting task. However, we prove the global stability of the endemic equilibrium for the special case ($\delta_3 = 0$ and $\nu_1 = 0$) of system \eqref{model} in the subsequent theorem:

\begin{theorem}
	If the endemic equilibrium $D^* = (S^*, V^*, E_1^*, E_2^*, I_1^*, I_2^*, R_1^*, R_2^*)$ exists for $\delta_3 = 0$ and $\nu_1 = 0,$ then it is globally asymptotically stable.
\end{theorem}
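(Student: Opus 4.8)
The plan is to use a Goh--Volterra (logarithmic) Lyapunov function of exactly the same type as the one used above for the mutant dominant equilibrium. First I would observe that when $\delta_3 = 0$ and $\nu_1 = 0$ the variables $R_1$ and $R_2$ decouple from the rest of the system: each satisfies a stable linear equation driven respectively by $I_1$ and $I_2$, so it suffices to establish global stability of the endemic equilibrium of the six-dimensional subsystem in $(S,V,E_1,E_2,I_1,I_2)$ on the interior of the feasible region (positivity of solutions has already been shown, so trajectories starting in the interior stay there). In this reduced system the two strains interact only through the shared compartments $S$ and $V$.

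Introducing the normalized variables $x_1 = S/S^*$, $x_2 = V/V^*$, $y_1 = E_1/E_1^*$, $y_2 = E_2/E_2^*$, $z_1 = I_1/I_1^*$, $z_2 = I_2/I_2^*$, I would take
$$ W = S^* g(x_1) + V^* g(x_2) + E_1^* g(y_1) + E_2^* g(y_2) + \tfrac{a_1+\mu}{a_1} I_1^* g(z_1) + \tfrac{a_2+\mu}{a_2} I_2^* g(z_2), \qquad g(w) := w - 1 - \ln w, $$
which is nonnegative and vanishes only at $D^*$. The weights $1,1,1,1,(a_1+\mu)/a_1,(a_2+\mu)/a_2$ are forced by the requirement that the terms linear in $y_i$ cancel when $\dot W$ is assembled, exactly as $k_4 = (a_2+\mu)/a_2$ was chosen in the mutant-dominant case.

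Next I would differentiate $W$ along the reduced system and substitute the equilibrium identities (for $\Lambda$, $pS^*$, $(\mu+p)S^*$, $(\mu+\gamma)V^*$, $(a_i+\mu)E_i^*$, $(\alpha_i+\mu+d_i)I_i^*$) to express $\dot W$ purely in terms of the ratios $x_1,x_2,y_i,z_i$. The aim is to regroup the result as a sum of blocks of the form $c\big(n - r_1 - \cdots - r_n\big)$ where each $r_j$ is a monomial in the ratios and $r_1\cdots r_n = 1$; every such block is $\le 0$ by the arithmetic--geometric mean inequality. Concretely one expects a ``$\mu S^*$'' block $2 - x_1 - 1/x_1$, a ``$\mu V^*$'' block $3 - 1/x_1 - x_2 - x_1/x_2$, a ``$\gamma V^*$'' block $2 - x_1/x_2 - x_2/x_1$, a strain-1 block $3 - 1/x_1 - x_1 z_1/y_1 - y_1/z_1$ coming from the $S$--$E_1$--$I_1$ flux, a four-term strain-1 block $4 - 1/x_1 - x_1/x_2 - x_2 z_1/y_1 - y_1/z_1$ coming from the $V$--$E_1$--$I_1$ flux, and the two analogous blocks for strain 2. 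Thus $\dot W \le 0$, with $\dot W = 0$ only on $\{x_1 = x_2 = 1,\ y_1 = z_1,\ y_2 = z_2\}$; substituting this back into the $S$- and $V$-equations (whose right-hand sides must vanish on any invariant subset, since $S\equiv S^*$, $V\equiv V^*$ there) forces $I_1 = I_1^*$, $I_2 = I_2^*$, so the largest invariant set contained in $\{\dot W = 0\}$ is $\{D^*\}$. LaSalle's Invariance Principle then yields global asymptotic stability of $D^*$, and appending the decoupled $R_1,R_2$ dynamics gives the full eight-dimensional claim.

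The main obstacle is the algebraic bookkeeping in the regrouping step. Because $S$ and $V$ exchange mass directly (via $pS$ and $\gamma V$) and because both $S$ and $V$ feed both exposed classes, the cross terms $x_1/x_2$, $x_2/x_1$, $x_1 z_i/y_i$, $x_2 z_i/y_i$, $1/x_1$, $1/x_2$ proliferate, and they only land in valid AM--GM groups if the equilibrium relations are cascaded in the right order --- rewriting $\Lambda$ through $(\mu+p)S^*$, then $(\mu+p)S^* = \mu S^* + pS^*$, then $pS^* = \delta_1\beta_1 I_1^* V^* + \delta_2\beta_2 I_2^* V^* + (\mu+\gamma)V^*$, and so on --- so that each vaccine-reduced infection flux acquires its matching $S^*/S$ and $SV^*/(S^*V)$ terms. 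Producing a single grouping that works simultaneously for all six families of nonlinear terms is the delicate part, but it follows the template already established for the mutant-dominant equilibrium.
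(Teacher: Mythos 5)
Your proposal is correct and follows essentially the same route as the paper: the same Goh--Volterra Lyapunov function with weights $1,1,1,1,(a_1+\mu)/a_1,(a_2+\mu)/a_2$, the same seven AM--GM blocks (the paper's $F(x,y,z,u,v,w)$ decomposition matches your list term for term), and the same LaSalle conclusion. Your closing remark spelling out why the largest invariant set in $\{\dot W=0\}$ is $\{D^*\}$ is in fact slightly more explicit than the paper's ``it is evident to see'' step.
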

\begin{proof}
	For $\delta_3 = 0$ and $\nu_1 = 0,$ the endemic equilibrium $D^* = (S^*, V^*, E_1^*, E_2^*, I_1^*, I_2^*),$ $S^*, V^*,$ $E_1^*,E_2^*,$ $I_1^*,$ and $I_2^*$ satisfy the following equations:
	\begin{equation}\label{eqn_endemic1}
	\begin{aligned}
	 &\Lambda - \beta_1 I_1 S - \beta_2 I_2 S - (\mu + p)S + \gamma V=0,\\
	&pS - \delta_1 \beta_1 I_1 V - \delta_2 \beta_2 I_2 V - (\mu + \gamma) V=0,\\
	& \beta_1(S + \delta_1 V) I_1 - (a_1 + \mu) E_1=0,\\
	& \beta_2 (S  + \delta_2 V) I_2 - (a_2 + \mu) E_2=0,\\
	& a_1 E_1  - (\alpha_1 + \mu + d_1) I_1=0,\\
	& a_2 E_2 - (\alpha_2 + \mu + d_2) I_2 =0.
	\end{aligned}
	\end{equation}
	Since $R_1$ and $R_2$ do not appear explicitly in the first six equations, therefore we omit them. By applying Eq. \eqref{eqn_endemic1} and denoting 
	\begin{equation*}
	x=\frac{S}{S^*},~ y=\frac{V}{V^*},~ z=\frac{E_1}{E_1^*},~ u=\frac{E_2}{E_2^*},~ v=\frac{I_1}{I_1^*},~ w=\frac{I_2}{I_2^*}, 
	\end{equation*}
	the system \eqref{model} can be rewritten as follows:
	\begin{equation}\label{modellyp}
	\begin{aligned}
	x' &= x\Big[\frac{\Lambda}{S^*}\Big(\frac{1}{x}-1\Big) - \beta_1 I_1^* (v-1) - \beta_2 I_2^* (w-1) + \frac{\gamma V^*}{S^*}\Big(\frac{y}{x} -1 \Big) \Big],\\
	y' & =  y\Big[\frac{pS^*}{V^*}\Big(\frac{x}{y}-1\Big) - \delta_1 \beta_1 I_1^* (v-1) - \delta_2 \beta_2 I_2^* (w-1) \Big],\\
	z' &= z \frac{\beta_1 I_1^*}{E_1^*}\Big[S^* \Big(\frac{xv}{z}-1 \Big) + \delta_1 V^* \Big(\frac{yv}{z}-1 \Big) \Big],\\
	u' &= u \frac{\beta_2 I_2^*}{E_2^*}\Big[S^* \Big(\frac{xw}{u}-1 \Big) + \delta_2 V^* \Big(\frac{yw}{u}-1 \Big) \Big],\\
	v' &= v  \frac{a_1 E_1^*}{I_1^*}\Big[\frac{z}{v}-1 \Big],\\
	w' &= w  \frac{a_2 E_2^*}{I_2^*}\Big[\frac{u}{w}-1 \Big].
	\end{aligned}
	\end{equation}
	It is clear that the endemic equilibrium $D^*$ of \eqref{model} corresponds to the positive equilibrium $\bar{D}^*(1,1,1,1,1,1)$ of \eqref{modellyp}, and that the global stability of $\bar{D}^*$ is same as that of $D^*,$ therefore,  we will discuss the global stability of the equilibrium $\bar{D}^*$ of system \eqref{modellyp} instead of $D^*.$
	
	Define the Lyapunov function
	\begin{equation*}
	\begin{aligned}
	L =& k_1 S^* (x-1-\ln x) + k_2 V^* (y-1-\ln y) + k_3 E_1^* (z-1-\ln z) + k_4 E_2^* (u-1-\ln u) \\
	&+ k_5 I_1^* (v-1-\ln v) + k_6 I_2^* (w-1-\ln w),
	\end{aligned}
	\end{equation*}
	where the positive numbers $k_1, k_2, k_3, k_4, k_5,$ and $k_6$ will be given below, then differentiating $L$ with respect to $t$ along solutions of \eqref{modellyp}, we obtain
	\begin{equation*}
	\begin{aligned}
	L' =& k_1 (x-1) \Big[\Lambda \Big(\frac{1}{x}-1 \Big) - \beta_1 I_1^* S^* (v-1) - \beta_2 I_2^* S^* (w-1) +\gamma V^*\Big(\frac{y}{x} -1 \Big) \Big] \\
	&+ k_2 (y-1)\Big[pS^*\Big(\frac{x}{y}-1\Big) - \delta_1 \beta_1 I_1^* V^* (v-1) - \delta_2 \beta_2 I_2^* V^* (w-1) \Big]\\
	&+ k_3 (z-1) \beta_1 I_1^* \Big[S^* \Big(\frac{xv}{z}-1 \Big) + \delta_1 V^* \Big(\frac{yv}{z}-1 \Big) \Big] \\
	&+ k_4 (u-1) \beta_2 I_2^* \Big[S^* \Big(\frac{xw}{u}-1 \Big) + \delta_2 V^* \Big(\frac{yw}{u}-1 \Big) \Big] \\
	&+ k_5 (v-1) a_1 E_1^* \Big[\frac{z}{v}-1 \Big] + k_6 (w-1) a_2 E_2^* \Big[\frac{u}{w}-1 \Big]\\
	=& k_1 (2 \Lambda - \beta_1 S^* I_1^* - \beta_2 S^* I_2^* + \gamma V^*) + k_2 (pS^* - \delta_1 \beta_1 V^* I_1^* - \delta_2 \beta_2 V^* I_2^*) \\
	 &+ k_3 (\beta_1 S^* I_1^* + \delta_1 \beta_1 V^* I_1^*) + k_4 (\beta_2 S^* I_2^* + \delta_2 \beta_2 V^* I_2^*) + k_5 a_1 E_1^* + k_6 a_2 E_2^*\\
	 & - (k_1 \Lambda - k_1 \beta_1 S^* I_1^* - k_1 \beta_2 S^* I_2^* + k_1 \gamma V^* - k_2 p S^*)x - k_1 \Lambda \frac{1}{x}\\
	 &-(-k_1 \gamma V^* + k_2 p S^* - k_2 \delta_1 \beta_1 V^* I_1^* - k_2 \delta_2 \beta_2 V^* I_2^*)y - (k_1 \beta_1 S^* I_1^* - k_3 \beta_1 S^* I_1^*)xv\\
	 & - (-k_1 \beta_1 S^* I_1^* - k_2 \delta_1 \beta_1 V^* I_1^* + k_5 a_1 E_1^* )v - (k_1 \beta_2 S^* I_2^* - k_4 \beta_2 S^* I_2^*)xw \\
	 & - (-k_1 \beta_2 S^* I_2^* - k_2 \delta_2 \beta_2 V^* I_2^* + k_6 a_2 E_2^*)w - k_1 \gamma V^* \frac{y}{x} - k_2 p S^* \frac{x}{y} - k_3 \beta_1 S^* I_1^* \frac{xv}{z} \\
	 & - k_3 \beta_1 \delta_1 V^* I_1^*  \frac{yv}{z} - k_4 \beta_2 S^* I_2^*  \frac{xw}{u} - k_4 \beta_2 \delta_2 V^* I_2^*  \frac{yw}{u} - (k_1 \delta_1 \beta_1 V^* I_1^* - k_3 \delta_1 \beta_1 V^* I_1^*)yv \\
	 & - (k_2 \delta_2 \beta_2 V^* I_2^* - k_4 \delta_2 \beta_2 V^* I_2^*)yw - (k_3 \beta_1 S^* I_1^* + k_3 \beta_1 \delta_1 V^* I_1^* - k_5 a_1 E_1^* )z \\
	 & - (k_4 \beta_2 S^* I_2^* + k_4 \beta_2 \delta_2 V^* I_2^* - k_6 a_2 E_2^*)u - k_5 a_1 E_1^* \frac{z}{v} - k_6 a_2 E_2^* \frac{u}{w}\\
	 =: & F(x,y,z,u,v,w).
	 \end{aligned}
	\end{equation*}
	Now we choose the positive constants $k_1, k_2, k_3, k_4, k_5, k_6$ as follows:
	\begin{equation*}
	k_1 = k_2 = k_3 = k_4 =1,~ k_5 = \frac{a_1 + \mu}{a_1},~ k_6 = \frac{a_2 + \mu}{a_2}.
	\end{equation*}
	Substituting them into the function $F(x,y,z,u,v,w)$ gives
	\begin{equation*}
	\begin{aligned}
	F(x,y,z,u,v,w) =& [2\Lambda + \gamma V^* + pS^* +(a_1 + \mu)E_1^* + (a_2 + \mu)E_2^*] - \mu S^* x - \Lambda \frac{1}{x} - \mu V^* y \\
	&- \gamma V^* \frac{y}{x} - pS^* \frac{x}{y} - \beta_1 S^* I_1^* \frac{xv}{z} - \delta_1 \beta_1 V^* I_1^* \frac{yv}{z} - \beta_2 S^* I_2^* \frac{xw}{u}\\
	& - \delta_2 \beta_2 V^* I_2^* \frac{yw}{u} - (a_1 + \mu) E_1^* \frac{z}{v} -  (a_2 + \mu) E_2^* \frac{u}{w}\\
	=& \mu S^* \Big(2-x-\frac{1}{x} \Big) + \gamma V^* \Big(2- \frac{x}{y} - \frac{y}{x}\Big) + \mu V^* \Big(3- y- \frac{1}{x} - \frac{x}{y} \Big)\\
	&+ \beta_1 S^* I_1^* \Big(3- \frac{1}{x} - \frac{xv}{z} - \frac{z}{v}\Big) + \beta_2 S^* I_2^* \Big(3- \frac{1}{x} - \frac{xw}{u} - \frac{u}{w} \Big) \\
	&+ \delta_2 \beta_2 V^* I_2^* \Big(4- \frac{1}{x} - \frac{x}{y} - \frac{yw}{u} - \frac{u}{w} \Big) + \delta_1 \beta_1 V^* I_1^* \Big(4- \frac{1}{x} - \frac{x}{y} - \frac{yv}{z} - \frac{z}{v}\Big).
	\end{aligned}
	\end{equation*}
	Here, clearly $F(x,y,z,u,v,w) \leq 0,$  and the equality holds only for $x = y=1$, $z=v$, and $u=w$, i.e., $$\left\lbrace (x,y,z,u,v,w) \in \Delta : F(x,y,z,u,v,w) = 0  \right\rbrace \equiv \left\lbrace (x,y,z,u,v,w): x=y=1,z=v,u=w \right\rbrace, $$ which corresponds to the set $$\Delta' = \left\lbrace (S,V,E_1, E_2, I_1, I_2): S = S^*, V=V^*, \frac{E_1}{E_1^*}=\frac{I_1}{I_1^*}, \frac{E_2}{E_2^*}=\frac{I_2}{I_2^*} \right\rbrace \subset \Delta.$$ It is evident to see that the maximum invariant set on the set $\Delta'$ is the singleton $\left\lbrace D^* \right\rbrace $, then the endemic equilibrium $D^*$ is globally stable in $\Delta$ by LaSalle's Invariance Principle \cite{LaSalle1976}.
\end{proof}

\subsection{Bifurcations}
Different dynamical behaviors may occur in a mathematical model for the variation of the model parameters. The critical parameter value at which qualitative dynamics change occurs is called a bifurcation point. The objective of this section is to determine some local bifurcations of the system \eqref{model} with the variation of different parameters.
\subsubsection{Hopf bifurcation}
This section focuses on the local stability and Hopf bifurcation at the positive equilibrium $D^*$ of system \eqref{model}, which represents the coexistence of the both strains (native and mutant). To determine the local asymptotic stability of $D^*$, the characteristic equation of the linearized system of \eqref{model} at $D^*$ is utilized. The characteristic equation is given by 
\begin{equation}\label{chara_end}
C(\lambda) = \lambda^7 + l_1 \lambda^6 + l_2 \lambda^5 + l_3 \lambda^4 + l_4 \lambda^3 + l_5 \lambda^2 + l_6 \lambda + l_7  =0,
\end{equation} 
where $l_1, l_2, l_3, l_4, l_5, l_6,$ and $l_7$ are given in \nameref{appendix}.
Now, we define Routh-Hurwitz determinants
\begin{equation*}
\begin{aligned}
H_1 =& l_1~, \quad H_2 = \begin{array}{|cc|}
l_1 & l_3  \\ 
1 & l_2
\end{array}~, \quad
H_3 = \begin{array}{|ccc|}
l_1 & l_3  & l_5  \\ 
1 & l_2  & l_4  \\ 
0 & l_1  & l_3 
\end{array}~, \quad
H_4 = \begin{array}{|cccc|}
l_1 & l_3  & l_5 & l_7  \\ 
1 & l_2  & l_4  & l_6 \\ 
0 & l_1  & l_3 & l_5 \\
0 & 1  & l_2  & l_4 
\end{array}~, \\
H_5 =& \begin{array}{|ccccc|}
l_1 & l_3  & l_5 & l_7 & 0  \\ 
1 & l_2  & l_4  & l_6 & 0 \\ 
0 & l_1  & l_3 & l_5 & l_7 \\
0 & 1  & l_2  & l_4  & l_6 \\
0 & 0 & l_1 & l_3 & l_5
\end{array}~, \quad 
H_6 = \begin{array}{|cccccc|}
l_1 & l_3  & l_5 & l_7 & 0 & 0  \\ 
1 & l_2  & l_4  & l_6 & 0 & 0 \\ 
0 & l_1  & l_3 & l_5 & l_7 & 0 \\
0 & 1  & l_2  & l_4  & l_6 & 0 \\
0 & 0 & l_1 & l_3 & l_5 & l_7 \\
0 & 0 & 1 & l_2 & l_4 & l_6
\end{array}~, \quad H_7 = l_7 H_6.
\end{aligned}
\end{equation*}
By Routh-Hurwitz criterion, $D^*$ is locally asymptotically stable (i.e. $Re(\lambda) < 0$) if and only if $H_1 > 0, H_2 > 0, H_3 > 0, H_4 > 0, H_5 > 0, H_6 > 0,$ and $H_7 > 0;$ otherwise, $D^*$ becomes unstable.

Further, we determine the occurrence conditions of Hopf bifurcation of system \eqref{model}. To study the Hopf bifurcation, bifurcation parameter should be chosen at first. Among all parameters of system \eqref{model}, we choose the parameter $\nu_1$, which represents the mutation rate of the native strain. With the other parameter values given, we can calculate the threshold value of the bifurcation parameter $\nu_1$. In the following, we denote this threshold value of Hopf bifurcation point as $\nu_1 = \nu_1^*$.

By Liu criterion \cite{Liu1994}, we assume that there is a smooth curve of equilibrium points $(D(\nu_1), \nu_1^*)$ with $D(\nu_1) = \nu_1^*$ for system \eqref{model} and $(D^*, \nu_1^*)$ is a positive equilibrium point. If $C(\lambda, \nu_1^*) = \lambda^7 + l_1(\nu_1^*) \lambda^6 + l_2(\nu_1^*) \lambda^5 + l_3(\nu_1^*) \lambda^4 + l_4(\nu_1^*) \lambda^3 + l_5(\nu_1^*) \lambda^2 + l_6(\nu_1^*) \lambda + l_7(\nu_1^*)$ is the characteristic equation at $(D^*, \nu_1^*)$, then for a simple Hopf bifurcation, we have the following conditions:
\begin{itemize}
	\item[(i)] $l_7 (\nu_1^*)>0, H_1 (\nu_1^*) > 0, H_2 (\nu_1^*) > 0, H_3 (\nu_1^*) > 0, H_4 (\nu_1^*) > 0, H_5 (\nu_1^*) > 0,$ and $H_6 (\nu_1^*) = 0.$
	\item[(ii)] $\frac{d}{d \nu_1}(H_6 (\nu_1^*)) \neq 0,$
\end{itemize}
where $H_1 (\nu_1^*), H_2 (\nu_1^*), H_3 (\nu_1^*), H_4 (\nu_1^*), H_5 (\nu_1^*),$ and $H_6 (\nu_1^*)$ are the Hurwitz determinants at the bifurcation parameter $\nu_1^*$.

If the condition (i) holds, then the characteristic polynomial have to meet the condition for a pair of purely imaginary eigenvalues. Now, to occurrence the Hopf bifurcation, we need to derive the transversality condition (ii). For this, we let $\pm i \omega$ be a pair of purely imaginary eigenvalues. Here, differentiating the characteristic equation \eqref{chara_end} with respect to $\nu_1,$ we obtain
\begin{equation*}
\begin{aligned}
(7 \lambda^6 + 6 l_1 \lambda^5 + 5 l_2 \lambda^4 + 4 l_3 \lambda^3 &+ 3 l_4 \lambda^2 + 2 l_5 \lambda + l_6) \frac{d \lambda}{d \nu_1} + \lambda^6 \frac{d l_1}{d \nu_1} + \lambda^5 \frac{d l_2}{d \nu_1} \\
&+ \lambda^4 \frac{d l_3}{d \nu_1} + \lambda^3 \frac{d l_4}{d \nu_1} + \lambda^2 \frac{d l_5}{d \nu_1} + \lambda \frac{d l_6}{d \nu_1} + \frac{d l_7}{d \nu_1} = 0.
\end{aligned} 
\end{equation*}
Further, we obtain
\begin{equation*}
\begin{aligned}
\Bigg(\frac{d \lambda}{d \nu_1}\Bigg)^{-1} = - \frac{7 \lambda^6 + 6 l_1 \lambda^5 + 5 l_2 \lambda^4 + 4 l_3 \lambda^3 + 3 l_4 \lambda^2 + 2 l_5 \lambda + l_6}{\lambda^6 \frac{d l_1}{d \nu_1} + \lambda^5 \frac{d l_2}{d \nu_1}
	+ \lambda^4 \frac{d l_3}{d \nu_1} + \lambda^3 \frac{d l_4}{d \nu_1} + \lambda^2 \frac{d l_5}{d \nu_1} + \lambda \frac{d l_6}{d \nu_1} + \frac{d l_7}{d \nu_1}}.
\end{aligned}
\end{equation*}
Furthermore, we have
\begin{equation*}
\begin{aligned}
sign \Bigg[\frac{d(Re(\lambda))}{d \nu_1}\Bigg]_{\lambda = i\omega, H_6 = 0} &= sign \Bigg[Re \Bigg(\frac{d \lambda}{d \nu_1} \Bigg)^{-1} \Bigg]_{\lambda = i\omega, H_6 = 0} \\
& = sign [\Upsilon ], \\
\end{aligned}
\end{equation*}
where, 
\begin{equation*}
\begin{aligned}
\Upsilon & = Re \Bigg[\frac{(7\omega^6 - 5l_2 \omega^4 + 3l_4 \omega^2 - l_6) + i (-6 l_4 \omega^5 + 4l_3 \omega^2 - 2l_5 \omega)}{(-\omega^6 \frac{d l_1}{d \nu_1} + \omega^4 \frac{d l_3}{d \nu_1} -\omega^2 \frac{d l_5}{d \nu_1} + \frac{d l_7}{d \nu_1}) + i (\omega^5 \frac{d l_2}{d \nu_1} - \omega^3 \frac{d l_4}{d \nu_1} + \omega \frac{d l_6}{d \nu_1})} \Bigg] = \frac{M_1 M_3 + M_2 M_4}{M_3^2 + M_4^2 }, \\
M_1 &= 7\omega^6 - 5l_2 \omega^4 + 3l_4 \omega^2 - l_6, \quad
M_2 = -6 l_4 \omega^5 + 4l_3 \omega^2 - 2l_5 \omega, \\
M_3 &= -\omega^6 \frac{d l_1}{d \nu_1} + \omega^4 \frac{d l_3}{d \nu_1} -\omega^2 \frac{d l_5}{d \nu_1} + \frac{d l_7}{d \nu_1}, \quad
M_4 = \omega^5 \frac{d l_2}{d \nu_1} - \omega^3 \frac{d l_4}{d \nu_1} + \omega \frac{d l_6}{d \nu_1}.
\end{aligned}
\end{equation*}
If $M_1 M_3 + M_2 M_4 >0,$ then $sign \Big[\frac{d(Re(\lambda))}{d \nu_1}\Big]_{\nu_1 = \nu_1^*} >0 $ and the transversality condition (ii) holds.
Summarizing the above discussion, we obtain the subsequent theorem:
\begin{theorem}
	For the existing positive equilibrium $D^*$ of system \eqref{model}, the system \eqref{model} around $D^*$ enters into Hopf bifurcation when $\nu_1$ crosses through $\nu_1^*$.
\end{theorem}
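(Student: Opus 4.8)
The plan is to prove the statement by verifying the hypotheses of Liu's criterion for Hopf bifurcation \cite{Liu1994} applied to the seven--dimensional linearization of \eqref{model} at $D^*$, whose characteristic polynomial is $C(\lambda,\nu_1)$ from \eqref{chara_end}. Liu's criterion tells us that a simple Hopf bifurcation occurs at $\nu_1=\nu_1^*$ provided (i) the Hurwitz determinants satisfy $H_1(\nu_1^*),\dots,H_5(\nu_1^*)>0$, $l_7(\nu_1^*)>0$ and $H_6(\nu_1^*)=0$, and (ii) the transversality condition $\frac{d}{d\nu_1}H_6(\nu_1^*)\neq0$ holds. These are precisely the conditions already displayed above the theorem, so the proof amounts to organizing their verification into three steps.

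\textbf{Step 1 (smooth equilibrium branch).} Since the theorem presupposes that a positive $D^*$ exists, I would first show it persists as a $C^1$ function of $\nu_1$ near $\nu_1^*$. This follows from the implicit function theorem applied to the algebraic pair $F(I_1,I_2)=G(I_1,I_2)=0$ in \eqref{end_sys}, together with the closed-form expressions for $S^*,V^*,E_1^*,E_2^*,R_1^*,R_2^*$, under the nondegeneracy assumption that the Jacobian of $(F,G)$ is invertible at $\nu_1^*$ and the resulting components remain positive. This makes each coefficient $l_i(\nu_1)$, hence each $H_j(\nu_1)$, a differentiable function of $\nu_1$, which is what Liu's criterion requires.

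\textbf{Step 2 (purely imaginary pair at $\nu_1^*$).} Next I would establish condition (i). By Orlando's formula, the penultimate Hurwitz determinant $H_6$ of a degree-seven polynomial is (up to a fixed sign) the product $\prod_{i<j}(\lambda_i+\lambda_j)$ over its roots; combined with $H_1(\nu_1^*),\dots,H_5(\nu_1^*)>0$ and $l_7(\nu_1^*)>0$, the equality $H_6(\nu_1^*)=0$ forces $C(\cdot,\nu_1^*)$ to have exactly one conjugate pair $\pm i\omega$ of simple purely imaginary roots with the remaining five roots in the open left half-plane. Thus $D^*$ sits on the stability boundary with no other eigenvalue crossing, and $\nu_1^*$ is characterized as the root of $H_6(\nu_1)=0$ at which the other determinant inequalities and $l_7>0$ still hold.

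\textbf{Step 3 (transversality) and the main obstacle.} For condition (ii) I would differentiate \eqref{chara_end} implicitly in $\nu_1$ at $\lambda=i\omega$ and take real parts, obtaining $\mathrm{sign}\big[\tfrac{d(Re(\lambda))}{d\nu_1}\big]_{\nu_1=\nu_1^*}=\mathrm{sign}[\Upsilon]$ with $\Upsilon=(M_1M_3+M_2M_4)/(M_3^2+M_4^2)$ as computed just before the theorem; since $\Upsilon\neq0$ is equivalent to $\frac{d}{d\nu_1}H_6(\nu_1^*)\neq0$, it suffices to show $M_1M_3+M_2M_4\neq0$ (and $>0$ for a crossing into instability, producing the oscillations reported numerically). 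The genuine difficulty is computational, not conceptual: the coefficients $l_1,\dots,l_7$ — and therefore $H_6$, $\omega$, and the $M_j$ — are very large rational expressions in the model parameters (collected in the appendix), so neither the existence of a positive root $\nu_1^*$ of $H_6$ with $H_1,\dots,H_5>0$ nor the sign of $M_1M_3+M_2M_4$ can be read off in closed form. I would therefore carry out Steps 2--3 numerically at the MCMC-fitted parameter values of Section \ref{cal_model}, checking $H_i(\nu_1^*)>0$ for $i\le5$, $l_7(\nu_1^*)>0$, $H_6(\nu_1^*)=0$ and $M_1M_3+M_2M_4\neq0$ there, with the numerical bifurcation diagrams serving as confirmation.
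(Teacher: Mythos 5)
Your proposal follows essentially the same route as the paper: Liu's criterion applied to the degree-seven characteristic polynomial \eqref{chara_end}, with condition (i) on the Hurwitz determinants $H_1,\dots,H_5$, $l_7$ and $H_6(\nu_1^*)=0$, and the transversality condition (ii) reduced to the sign of $\Upsilon=(M_1M_3+M_2M_4)/(M_3^2+M_4^2)$ via implicit differentiation of the characteristic equation at $\lambda=i\omega$. You in fact supply more justification than the paper does --- the implicit-function-theorem argument for the smooth equilibrium branch and the Orlando's-formula reasoning for why $H_6(\nu_1^*)=0$ together with the other inequalities forces exactly one simple purely imaginary pair --- and you are candid that, as in the paper, the hypotheses can only be verified numerically for given parameter values.
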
 

\subsubsection{Transcritical bifurcation}
We see that Eq. \eqref{chareq} has a zero eigenvalue when either $R_1=1$ or $R_2=1$. Thus, the system \eqref{model} may undergo a transcritical bifurcation at $D^0$ when either $R_1=1$ or $R_2=1$. In this subsection, we establish conditions on the parameters using Theorem 4.1 from Castillo-Chavez and Song \cite{Castillo2004} and center manifold theory \cite{Gukenheimer1983}. For the transcritical bifurcation, we establish the following theorem:
\begin{theorem}
	\begin{itemize}
		\item[1.] Assume $R_1<1$, the system \eqref{model} undergoes a transcritical bifurcation near $D^0$, when $R_2=1$.
		\item[2.] Assume $R_2 <1$, the system \eqref{model} undergoes a transcritical bifurcation near $D^0$, when $R_1 =1$.
	\end{itemize}
\end{theorem}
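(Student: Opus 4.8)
The plan is to invoke Theorem 4.1 of Castillo-Chavez and Song \cite{Castillo2004} at the DFE $D^0$, handling the two items separately but in an entirely parallel way. For item~1 I would take $\phi:=\beta_2$ as bifurcation parameter and let $\beta_2=\beta_2^*$ be the (unique) value for which $R_2=1$; since $R_2$ is a strictly increasing linear function of $\beta_2$, ``$R_2$ crossing $1$ upward'' is the same as ``$\beta_2$ crossing $\beta_2^*$ upward''. For item~2 I take $\phi:=\beta_1$ with critical value $\beta_1^*$ corresponding to $R_1=1$. In either case the factored characteristic equation \eqref{chareq} shows that at $\phi=\phi^*$ the linearization $J^0:=D_xf(D^0,\phi^*)$ has a simple zero eigenvalue: the relevant quadratic factor degenerates to $\lambda\bigl(\lambda+2\mu+a_i+d_i+\alpha_i\bigr)$, while the companion quadratic has positive coefficients (because $R_1<1$ in item~1, resp.\ $R_2<1$ in item~2) and the remaining roots are $-\mu$ (double) and $-(p+\gamma+\mu)$. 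Hence zero is a simple eigenvalue and all others have negative real part, which is the standing hypothesis of the theorem.

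Next I would compute a right null vector $w$ and a left null vector $v$ of $J^0$. For item~1, solving $J^0w=0$ and using $R_1<1$ (the native block of $J^0$ is invertible) forces $w_{E_1}=w_{I_1}=w_{R_1}=0$, while $w_{I_2}>0$, $w_{E_2}=(\alpha_2+\mu+d_2)w_{I_2}/a_2>0$, $w_{R_2}=\alpha_2 w_{I_2}/\mu>0$, and $(w_S,w_V)$ solves the $2\times2$ system with matrix $\begin{pmatrix}-(\mu+p)&\gamma\\ p&-(\mu+\gamma)\end{pmatrix}$ (determinant $\mu(\mu+\gamma+p)>0$) and right-hand side $\bigl(\beta_2 S^0 w_{I_2},\,\delta_2\beta_2 V^0 w_{I_2}\bigr)$; Cramer's rule then gives $w_S<0$ and $w_V<0$. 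Solving $vJ^0=0$ gives $v_S=v_V=v_{R_1}=v_{R_2}=0$ and $v_{E_2},v_{I_2}>0$ (with $v_{E_1},v_{I_1}\ge 0$ proportional to $\nu_1$), and one checks $v\cdot w=v_{E_2}w_{E_2}+v_{I_2}w_{I_2}>0$, so $v$ can be normalized by $v\cdot w=1$. The analogous computation for item~2 yields $v_{E_2}=v_{I_2}=0$, $v_{E_1},v_{I_1}>0$, $w_{I_1}>0$, $w_{E_1},w_{R_1}>0$, and again $w_S,w_V<0$.

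The decisive structural point is that the $I_1$- and $I_2$-equations of \eqref{model} are linear, so they contribute no second-order terms, and the only compartments with $v_k\neq 0$ are the exposed ones; hence in the Castillo-Chavez--Song coefficients
\[
a=\sum_{k,i,j} v_k w_i w_j \frac{\partial^2 f_k}{\partial x_i\partial x_j}(D^0,\phi^*),\qquad
b=\sum_{k,i} v_k w_i \frac{\partial^2 f_k}{\partial x_i\partial\phi}(D^0,\phi^*),
\]
only $f_{E_1}$ and $f_{E_2}$ matter. In item~1, since $w_{I_1}=w_{R_1}=0$ the sole surviving cross-derivatives are $\partial^2 f_{E_2}/\partial S\,\partial I_2=\beta_2$ and $\partial^2 f_{E_2}/\partial V\,\partial I_2=\delta_2\beta_2$, so that $a=2v_{E_2}\beta_2 w_{I_2}(w_S+\delta_2 w_V)<0$ (notably the $\delta_3$ re-infection term drops out because $w_{R_1}=0$), while $b=v_{E_2}w_{I_2}(S^0+\delta_2 V^0)>0$. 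In item~2 one gets $a=2v_{E_1}\beta_1 w_{I_1}(w_S+\delta_1 w_V)<0$ and $b=v_{E_1}w_{I_1}(S^0+\delta_1 V^0)>0$. With $a<0$ and $b>0$, Theorem 4.1 of \cite{Castillo2004} gives a supercritical (forward) transcritical bifurcation of system \eqref{model} at $D^0$: $D^0$ exchanges stability as $R_2$ (resp.\ $R_1$) passes through $1$, and a locally asymptotically stable endemic equilibrium bifurcates, which in particular rules out backward bifurcation.

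The main obstacle is not any single estimate but the careful bookkeeping in the eigenvector computation: one has to recognize the large block of vanishing components of $w$ and $v$ (which is what makes $a$ and $b$ collapse to one term each) and pin down the signs of $w_S$ and $w_V$ via Cramer's rule. Once that structure is in place the signs of $a$ and $b$ are immediate; the only remaining thing to keep straight is that increasing $\beta_2$ (resp.\ $\beta_1$) is equivalent to driving $R_2$ (resp.\ $R_1$) through $1$, so that the relevant branch of the Castillo-Chavez--Song dichotomy is exactly the transcritical scenario asserted in the statement.
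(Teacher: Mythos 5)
Your proposal is correct and follows essentially the same route as the paper: choose $\beta_2$ (resp.\ $\beta_1$) as the bifurcation parameter at $R_2=1$ (resp.\ $R_1=1$), verify the simple zero eigenvalue from the factored characteristic equation, compute the left/right null vectors (with the same vanishing pattern and the same signs $w_S,w_V<0$ that the paper obtains explicitly), and reduce the Castillo-Chavez--Song coefficients to $a=2v_{E_i}\beta_i^*w_{I_i}(w_S+\delta_i w_V)<0$ and $b\propto v_{E_i}w_{I_i}(S^0+\delta_i V^0)>0$, yielding a forward transcritical bifurcation. The only cosmetic difference is that you justify the eigenvector signs by Cramer's rule where the paper simply writes them out.
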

\begin{proof}
	1. We choose $\beta_2$ as a bifurcation parameter. By solving $R_2=1,$ we obtain
	\begin{equation*}
	\beta_2 = \beta_2^* = \frac{\mu  \left(a_2+\mu \right) \left(\alpha _2+d_2+\mu \right) (\gamma +\mu +p)}{a_2 \Lambda  \left(\gamma +\mu +\delta _2 p\right)}.
	\end{equation*}
	It can easily be obtained that the Jacobian $J_{(D^0, \beta_2^*)}$ evaluated at $D^0$ and $\beta_2 = \beta_2^*$ has a simple zero eigenvalue and other eigenvalues have negative sign. Hence $D^0$ is a non-hyperbolic equilibrium, when $\beta_2 = \beta_2^*.$ Now, we calculate a right eigenvector $W = (w_1, w_2, w_3, w_4, w_5, w_6)$ and a left eigenvector $V = (v_1, v_2, v_3, v_4, v_5, v_6)$ associated to the zero eigenvalues. Here
	\begin{align*}
	w_1 &= -\frac{\left(a_2+\mu \right) \left(\alpha _2+d_2+\mu \right) \left((\gamma +\mu )^2+\gamma  \delta _2 p\right)}{a_2 \mu  (\gamma +\mu +p) \left(\gamma +\mu +\delta _2 p\right)},\\
	w_2 &= -\frac{p \left(a_2+\mu \right) \left(\alpha _2+d_2+\mu \right) \left(\gamma +\mu +\delta _2 (\mu +p)\right)}{a_2 \mu  (\gamma +\mu +p) \left(\gamma +\mu +\delta _2 p\right)},\\
	w_3 & = 0, \quad w_4 = \frac{\mu + d_2 + \alpha_2}{a_2}, \quad w_5 = 0, \quad w_6 = 1, \quad w_7 = 0, \\
	v_1 & = 0, \quad v_2 = 0,\quad
	v_3 = \frac{a_1 \nu_1}{(\mu + a_1)(\mu + d_1 + \alpha_1 + \nu_1)(1-R_1)},\\ v_4 &= 1, \quad
	v_5 = \frac{\nu_1}{(\mu + d_1 + \alpha_1 + \nu_1)(1-R_1)}, \quad v_6 = 1, \quad v_7 = 0.
	\end{align*}
	Now from Theorem 4.1 of \cite{Castillo2004}, we need to calculate the bifurcation constants $a$ and $b$. For system \eqref{model}, the associated non-zero partial derivatives of $f$ (evaluated at $D^0$, $x_1 = S, x_2 = V, x_3 = E_1, x_4 = E_2, x_5 = I_1, x_6 = I_2, x_7 = R_1$) are given by  
	\begin{align*}
	a &= 2v_3 w_1 w_5 \frac{\partial^{2}f_3}{\partial S \partial I_1} + 2v_3 w_2 w_5 \frac{\partial^{2}f_3}{\partial V \partial I_1} + 2v_4 w_1 w_6 \frac{\partial^{2}f_4}{\partial S \partial I_2} 
	+ 2v_4 w_2 w_6 \frac{\partial^{2}f_4}{\partial V \partial I_2} + 2v_4 w_6 w_7 \frac{\partial^{2}f_4}{\partial I_2 \partial R_1}    \\
	& = 2 v_4 w_6(w_1 + \delta_2 w_2)\beta_2^* <0,\\
	b & = 2v_4 w_6 \frac{\partial^{2}f_4}{\partial \beta_2 \partial I_2} = 2 v_4 w_6 (S^0 + \delta_2 V^0)  > 0.
	\end{align*}
	Since the coefficient $a$ is negative and $b$ is positive, the direction of the bifurcation of system \eqref{model} at $\beta_2 = \beta_2^*$ is forward.
	\\
	2. By choosing $\beta_1$ as a bifurcation parameter. By solving $R_1=1,$ we obtain
	\begin{equation*}
	\beta_1 = \beta_1^* = \frac{\mu  \left(a_1+\mu \right) (\gamma +\mu +p) \left(\alpha _1+d_1+\mu +\nu _1\right)}{a_1 \Lambda  \left(\gamma +\mu +\delta _1 p\right)}.
	\end{equation*}
	Following simple procedure of previous case, we obtain that $D^0$ is a non-hyperbolic equilibrium, when $\beta_1 = \beta_1^*.$ Now, we calculate a right eigenvector $W_1 = (w_{11}, w_{22}, w_{33}, w_{44}, w_{55}, w_{66})$ and a left eigenvector $V_1 = (v_{11}, v_{22}, v_{33}, v_{44}, v_{55}, v_{66})$ associated to the zero eigenvalues. Here
	\begin{align*}
	w_{11} &= -\frac{\Lambda \beta_2 ((\gamma + \mu )^2+ p\gamma \delta_2)\nu_1}{\alpha_1 \mu (p+\gamma + \mu)^2 (\mu + d_2 + \alpha_2)(\mu + a_2)(1-R_2)}
	-\frac{\left(a_1+\mu \right) \left(\alpha _1+d_1+\mu +\nu _1\right) \left((\gamma +\mu )^2+\gamma  \delta _1 p\right)}{a_1 \alpha_1  (\gamma +\mu +p) \left(\gamma +\mu +\delta _1 p\right)},\\
	w_{22} &= -\frac{p(\gamma + \mu + (p+\mu)\delta_2)}{(p+\gamma + \mu)\alpha_1}\Big(\frac{\Lambda \beta_2 \nu_1}{\mu (p+\gamma + \mu)(\mu + d_2 + \alpha_2)(1-R_2)} + \frac{(\mu + a_1)(\mu + d_1 + \alpha_1 + \nu_1)}{a_1 (\gamma + \mu + p \delta_2)} \Big),\\
	w_{33} & = \frac{\mu  \left(\alpha _1+d_1+\mu +\nu _1\right)}{a_1 \alpha _1}, \quad
	w_{44} = \frac{\Lambda \beta_2 (\gamma + \mu + p\delta_2)\nu_1}{\alpha_1 (p+\gamma + \mu)(\mu + d_2 + \alpha_2)(\mu + a_2)(1-R_2)}, \quad w_{55} = \frac{\mu}{\alpha_1}, \\
	w_{66} &= \frac{\mu \nu_1}{\alpha_1 (\mu + d_2 + \alpha_2)(1-R_2)}, \quad w_{77} = 1 \quad
	v_{11}  = 0, \quad v_{22} = 0,\quad
	v_{33} = \frac{a_1}{a_1+\mu } , \quad v_{44} = 0, \\
	v_{55} &= 1, \quad v_{66} = 0, \quad v_{77} = 0.
	\end{align*}
	Similarly, as in previous case, we have
	\begin{align*}
	a &= 2v_{33} w_{11} w_{55} \frac{\partial^{2}f_3}{\partial S \partial I_1} + 2v_{33} w_{22} w_{55} \frac{\partial^{2}f_3}{\partial V \partial I_1}  \\
	& = 2v_{33} w_{11} w_{55} \beta_1^* + 2v_{33} w_{22} w_{55}\delta_1 \beta_1^*,\\
	& = 2v_{33} w_{55} \beta_1^* (w_{11} + \delta_1 w_{22})  <0,\\
	b & = 2v_{33} w_{55} \frac{\partial^{2}f_3}{\partial \beta_1 \partial I_1} = 2 v_{33} w_{55} (S^0 + \delta_1 V^0)  > 0.
	\end{align*}
	Therefore again in this case, the direction of the bifurcation of system \eqref{model} at $\beta_1 = \beta_1^*$ is forward.
\end{proof} 

\section{Implications for disease control} \label{Herd}
For vaccine-preventable diseases, not all susceptible individuals could be immunized for numerous reasons, such as they are too young to be vaccinated (vaccinating young children or infants can, sometimes, harm the children/infants), they have weakened immune system and co-morbidities (and vaccinating may make their prognosis worse), they are of higher age, or they opt out for religious or traditional reasons, cultural. However, the question is what is the smallest proportion of those individuals we can vaccinate that we need to vaccinate so that those we cannot vaccinate can be protected from emerging severe disease or dying of the disease. The idea of herd immunity in the disease dynamics is related to the indirect protection against acquiring of infectious disease, which members of the community obtain when a large percentage of the populace has become protected to the contagious disease due to natural recovery from prior infection or vaccination \cite{Anderson1992, Anderson1985, Elbasha2021}. The outcome of herd immunity is that persons who are not immune (e.g., those who have not been infected yet or cannot be vaccinated) obtain some defense against acquiring the infection. The fastest and safest and way to attain herd immunity is vaccination. It should, however, be stated that Sweden implemented the other procedure for achieving herd immunity in the COVID-19 dynamics in Sweden \cite{Friedman2020}. In other words, the Swedish public health agencies intended to achieve herd immunity by not employing the basic society transmission reduction strategies (e.g., community lockdowns, social distancing, contact tracing use of face masks in public, etc.) employed in almost every nation or community that is hard-hit with the COVID-19 pandemic, opting, instead, to let persons to attain disease and, hopefully, recover from it. In this section, a theoretical condition for achieving community-wide vaccine-induced herd immunity is obtained.
Theorem \ref{dfe_local} has significant public health implications. It reveals that if the imperfect vaccine has sufficient efficacy and coverage rate to make $R_0 < 1,$ COVID-19 will be eradicated from society. The global stability of the disease-free equilibrium (Theorem \ref{dfe_gloal}) for $R_0 < 1$ confirms that such epidemics do not hit. This means $R_0$ is an appropriate combination of parameters to measure the efficiency of a vaccination campaign.

\subsection{Herd immunity}
Not every person in a given population expects to be immunized in order to eradicate the disease. A fraction of people with immunity in the given population required to stop an epidemic is named herd immunity. Let $\rho$ represent the fraction of the vaccinated population at $D^0$ (the disease-free equilibrium). Then,
\begin{equation*}
\rho = \frac{p}{\mu + \gamma + p}.
\end{equation*}
In the absence of vaccination, i.e., when $p=0,$ the basic reproduction number is given by $R_{0wv}$. Hence, we can write
\begin{equation}\label{BRS}
R_0 = \max \left\lbrace R_{1wv}(1-(1-\delta_1)\rho),  R_{2wv}(1-(1-\delta_2)\rho) \right\rbrace, 
\end{equation}
with $R_1 = R_{1wv}(1-(1-\delta_1)\rho),$ and $R_2 = R_{2wv}(1-(1-\delta_2)\rho).$ It is noted that $R_1 \leq R_{1wv},~ R_2 \leq R_{2wv},$ and thus $R_0 \leq R_{0wv}.$ The equality holds only $\rho = 0$ (i.e., $p=0$) or $\delta_1 = \delta_2=1.$ This indicates that the vaccine, even not $100 \%$ effective, will certainly reduce the basic reproduction number of the disease. Since $R_0 < 1$ is a necessary and sufficient condition for the eradication of disease (Theorems \ref{dfe_local} and \ref{dfe_gloal}), hence it follows from \eqref{BRS} that 
\begin{equation}\label{rho_critical}
\rho > \max \left\lbrace \frac{1}{1-\delta_1}\Bigg(1-\frac{1}{R_{1wv}}\Bigg), \frac{1}{1-\delta_2}\Bigg(1-\frac{1}{R_{2wv}}\Bigg) \right\rbrace = \rho_{critical}
\end{equation}
is also a necessary and sufficient condition for disease elimination. Here, $\rho_{critical}$ signifies herd immunity. Although, this outcome could be  achieved in the case of continuous vaccination and that herd immunity is attained if the vaccination rate is large enough such that $\rho$, the fraction of vaccinated individuals at the disease-free equilibrium, exceeds the critical value $\rho_{critical}$. Combining Theorems \ref{dfe_local} and \ref{dfe_gloal}, we obtain the following consequence:
\begin{proposition}
	COVID-19 can be eradicated from the population if $\rho > \rho_{critical}.$
\end{proposition}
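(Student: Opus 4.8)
The plan is to obtain the Proposition as an essentially immediate corollary of the global stability of the disease-free equilibrium (Theorem \ref{dfe_gloal}) combined with the reformulation \eqref{BRS} of the basic reproduction number in terms of the vaccinated fraction $\rho$. The only substantive point is the elementary algebraic fact that the threshold condition $\rho > \rho_{critical}$ is equivalent to $R_0 < 1$; everything else has already been established.

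First I would record that, by \eqref{BRS}, $R_0 = \max\{R_1, R_2\}$ with $R_i = R_{iwv}\bigl(1 - (1-\delta_i)\rho\bigr)$ for $i = 1, 2$. Each $R_i$ is an affine, non-increasing function of $\rho$ on $[0,1]$, strictly decreasing unless $\delta_i = 1$. Hence, when $\delta_i < 1$ and $R_{iwv} > 1$ (the situation in which vaccination against strain $i$ is both needed and capable of control), the inequality $R_i < 1$ rearranges to $(1-\delta_i)\rho > 1 - 1/R_{iwv}$, i.e. $\rho > \frac{1}{1-\delta_i}\bigl(1 - 1/R_{iwv}\bigr)$; whereas if $R_{iwv} \le 1$ then $R_i < 1$ automatically for every $\rho \ge 0$, consistent with the convention that the corresponding term is absent from the maximum in \eqref{rho_critical}. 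Taking the maximum over $i = 1, 2$ then yields $\rho > \rho_{critical}$ if and only if both $R_1 < 1$ and $R_2 < 1$, that is, if and only if $R_0 < 1$.

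Second, I would invoke Theorem \ref{dfe_gloal}: once $R_0 < 1$, the disease-free equilibrium $D^0$ is globally asymptotically stable on the feasible region $\Delta$, which by Theorem \ref{pos_and_bound} is positively invariant and attracting. Consequently every solution of system \eqref{model} satisfies $(E_1, E_2, I_1, I_2, R_1, R_2) \to 0$ as $t \to \infty$, so both the native and the mutant strains are cleared from the population. Chaining the two steps, $\rho > \rho_{critical} \Rightarrow R_0 < 1 \Rightarrow D^0$ globally asymptotically stable $\Rightarrow$ COVID-19 eradicated, which is the assertion of the Proposition.

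The main obstacle is not genuine difficulty but bookkeeping on the degenerate cases: one must state explicitly that $\rho_{critical}$ in \eqref{rho_critical} is well defined and meaningful only when $\delta_i < 1$ and $R_{iwv} > 1$, adopting the convention that a term drops out of the maximum whenever the associated strain already has $R_{iwv} \le 1$, so that the equivalence "$\rho > \rho_{critical} \iff R_0 < 1$" is literally correct. It is also worth noting the feasibility constraint $\rho = p/(\mu+\gamma+p) < 1$, which makes the Proposition vacuous unless $\rho_{critical} < 1$, i.e.\ unless the vaccine efficacies $1-\delta_i$ are large enough relative to $R_{iwv}$ — precisely the epidemiological caveat emphasized after \eqref{rho_critical}.
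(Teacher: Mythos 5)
Your proposal is correct and follows essentially the same route as the paper: rewrite $R_0$ via \eqref{BRS} as a decreasing affine function of $\rho$, observe that $\rho > \rho_{critical}$ is equivalent to $R_0 < 1$, and then invoke the global stability of $D^0$ (Theorem \ref{dfe_gloal}) to conclude eradication. Your explicit treatment of the degenerate cases $\delta_i = 1$ and $R_{iwv} \le 1$, and the remark that $\rho < 1$ forces $\rho_{critical} < 1$ for the condition to be attainable, are careful additions the paper leaves implicit.
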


The inequality \eqref{rho_critical} can be expressed in terms of the vaccination rate $p$. This is done by noting, first of all, that $R_0$ is a decreasing function of $p$,
\begin{equation*}
\frac{d R_0}{dp} = \max \left\lbrace \frac{- \Lambda (\gamma + \mu)(1-\delta_1)a_1 \beta_1}{\mu(p+\gamma + \mu)^2 (\mu + a_1)(\mu + d_1 + \alpha_1 + \nu_1)}, \frac{- \Lambda (\gamma + \mu)(1-\delta_2)a_2 \beta_2}{\mu(p+\gamma + \mu)^2 (\mu + a_2)(\mu + d_2 + \alpha_2)} \right\rbrace < 0
\end{equation*}
 and so it is minimized by letting $p$ go to infinity. Taking the limit as $p$ approaches infinity, we see that this expression is always greater than $\max \left\lbrace  \delta_1 R_{1wv}, \delta_2 R_{2wv} \right\rbrace $. Thus, if $\max \left\lbrace  \delta_1 R_{1wv}, \delta_2 R_{2wv} \right\rbrace > 1,$ then no amount of vaccination can make $R_0$ smaller than unity. Alternatively, if $\max \left\lbrace  \delta_1 R_{1wv}, \delta_2 R_{2wv} \right\rbrace <1 $, then the condition
\begin{equation}\label{p_critical}
p > \max \left\lbrace \frac{(\gamma + \mu)(R_{1wv} - 1)}{1- \delta_1 R_{1wv}}, \frac{(\gamma + \mu)(R_{2wv} - 1)}{1- \delta_2 R_{2wv}} \right\rbrace = p_{critical}
\end{equation}
gives $R_0 <1$. Of course, this condition assumes $R_0>1$, since disease elimination follows without vaccination if $R_0 <1$ (by Theorems \ref{dfe_local} and \ref{dfe_gloal} and the fact that $R_0 \leq R_{0wv}$). It is easy to show that from \eqref{p_critical}, we obtain $R_0 <1$ if $p> p_{critical}$, and
$R_0 >1$ if $p< p_{critical}$. Thus, we have established the following result:
\begin{proposition}
	If $\max \left\lbrace  \delta_1 R_{1wv}, \delta_2 R_{2wv} \right\rbrace <1 $ and $p> p_{critical}$, then COVID-19 will be eliminated from
	the community. If $\max \left\lbrace  \delta_1 R_{1wv}, \delta_2 R_{2wv} \right\rbrace > 1 $, then no amount of vaccination would be able to prevent the COVID-19 outbreak in the community.
\end{proposition}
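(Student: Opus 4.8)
The plan is to regard $R_0$ as an explicit function of the vaccination rate $p$ and to reduce both claims to the already-established equivalence (Theorems~\ref{dfe_local} and \ref{dfe_gloal}, together with the persistence of the disease when $R_0>1$ noted just after Theorem~\ref{dfe_gloal}) that COVID-19 is eliminated if and only if $R_0<1$. First I would substitute $\rho=\frac{p}{\mu+\gamma+p}$ into \eqref{BRS} so that $R_0(p)=\max\{R_{1wv}(1-(1-\delta_1)\rho),\,R_{2wv}(1-(1-\delta_2)\rho)\}$ becomes a function on $[0,\infty)$. Since $\rho$ is a strictly increasing bijection from $[0,\infty)$ onto $[0,1)$ and each argument of the maximum is non-increasing in $\rho$, the function $R_0(p)$ is continuous and non-increasing, with $R_0(0)=R_{0wv}$ and $\lim_{p\to\infty}R_0(p)=\max\{\delta_1R_{1wv},\,\delta_2R_{2wv}\}$ (because $\rho\to1$).

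For the second assertion, suppose $\max\{\delta_1R_{1wv},\delta_2R_{2wv}\}>1$. Then $R_0(p)\ge\lim_{q\to\infty}R_0(q)=\max\{\delta_1R_{1wv},\delta_2R_{2wv}\}>1$ for every finite $p\ge0$, so $R_0$ cannot be brought below unity by any vaccination rate. By the persistence property of Kermack--McKendrick models with demography, the disease then remains endemic whatever the value of $p$; that is, no amount of vaccination can prevent the outbreak.

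For the first assertion, suppose $\max\{\delta_1R_{1wv},\delta_2R_{2wv}\}<1$. Because $R_0=\max\{R_1,R_2\}$, the requirement $R_0<1$ is equivalent to having $R_1<1$ and $R_2<1$ at once. Solving $R_{iwv}(1-(1-\delta_i)\rho)<1$ for $\rho$ gives $\rho>\frac{R_{iwv}-1}{(1-\delta_i)R_{iwv}}$, which is vacuous when $R_{iwv}\le1$; inverting the monotone map $p\mapsto\rho$ and using $1-\frac{R_{iwv}-1}{(1-\delta_i)R_{iwv}}=\frac{1-\delta_iR_{iwv}}{(1-\delta_i)R_{iwv}}$, this reduces to the equivalent condition $p>\frac{(\gamma+\mu)(R_{iwv}-1)}{1-\delta_iR_{iwv}}$, where the hypothesis $\delta_iR_{iwv}<1$ is precisely what keeps the denominator positive and the threshold finite. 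Imposing both $i=1$ and $i=2$ yields $p>p_{critical}$ as in \eqref{p_critical}. Hence $p>p_{critical}$ forces $R_0<1$, and the global asymptotic stability of $D^0$ (Theorem~\ref{dfe_gloal}) then gives the elimination of COVID-19.

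The algebra above is elementary; the only step that needs care is the bookkeeping of the maximum. One should check that the pair of conditions $R_1<1$, $R_2<1$ collapses to the single scalar threshold $p_{critical}$ of \eqref{p_critical} rather than to a more intricate region, and that a strain with $R_{iwv}\le1$---for which the corresponding expression in \eqref{p_critical} is non-positive---imposes no constraint, so the maximum effectively ranges only over the strains that would be endemic without vaccination. The one genuinely conceptual point, and the place where $\delta_iR_{iwv}<1$ versus $\delta_iR_{iwv}>1$ decides the outcome, is the identification of $\max\{\delta_1R_{1wv},\delta_2R_{2wv}\}$ as the greatest lower bound of $R_0(p)$ over all $p$: vaccination can never push $R_0$ below this floor, which is exactly why it separates the eliminable regime from the non-eliminable one.
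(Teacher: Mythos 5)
Your proposal is correct and follows essentially the same route as the paper: identify $\max\{\delta_1 R_{1wv},\delta_2 R_{2wv}\}$ as the infimum of $R_0(p)$ as $p\to\infty$ (the paper does this by computing $dR_0/dp<0$ explicitly, you by the monotone bijection $p\mapsto\rho$), and then invert the condition $R_0<1$ through $\rho=p/(\mu+\gamma+p)$ to recover the threshold in \eqref{p_critical}, with $\delta_i R_{iwv}<1$ guaranteeing the inversion is legitimate. Your bookkeeping of the two-strain maximum and of the vacuous case $R_{iwv}\le 1$ is a slightly more careful rendering of the paper's ``it is easy to show'' step, but it is not a different argument.
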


\section{Numerical illustration}\label{NumSim}
In this section, we investigate the dynamics of system \eqref{model} numerically for different sets of parameter values. Such investigations aim to determine the effect of varying the values of the different parameters and support the obtained theoretical results. It is observed that the hypothetical values of parameters given in Table \ref{num_val} are biologically feasible. However, to verify the bifurcations and different dynamical behavior of system \eqref{model}, some parameters are varied differently from Table \ref{num_val}.
\begin{table}[H]
	\caption{Numerical values of parameters.}\label{num_val}
\begin{tabular}{p{2cm}p{4cm}p{6cm}}
	\hline
	Parameters	& Value with unit & Reference \\
	\hline
	$\Lambda$	& 10000 people per week & Assumed\\
	
	$\beta_1$	& 0.000001 & Assumed \\
	
	$\beta_2$	& 0.0000003 & Assumed \\
	
	$p$	& 0.02 week$^{-1}$ & Assumed \\
	
	$\mu$	& 0.0003 week$^{-1}$ & $1/(65 \times 48)$ \\
	
	$1- \delta_1$	& 0.75 (dimensionless)  & Assumed that vaccine efficacy is 75\% to native strain \\
	
	$1- \delta_2$	& 0.40 (dimensionless) & Assumed that vaccine efficacy is 40\% to mutant strain \\
	
	$\gamma$	& 1/32 week$^{-1}$ & Assumed that loss of immunity of vaccinated individuals after 32 weeks  \\
	
	$a_1$	& 1 week$^{-1}$ &  Incubation (1 Week) \cite{Lauer2020} \\
	
	$a_2$ & 1 week$^{-1}$ & Incubation (1 Week) \cite{Lauer2020}  \\
	
	$\alpha_1$	& 1/2 week$^{-1}$ & Recovery (2 Weeks) \cite{Iboi2020} \\
	
	$\alpha_2$	& 1/2 week$^{-1}$ & Recovery (2 Weeks) \cite{Iboi2020} \\
	
	$d_1$	& 0.0006 week$^{-1}$ & Assumed \\
	
	$d_2$	& 0.0006 week$^{-1}$ & Assumed \\
	
	$1 - \delta_3$	& 0.90 (dimensionless) &  Assumed that recovered individuals have 90\% immunity against mutant strain \\
	
	$\nu_1$    & 0.3 week$^{-1}$ & Assumed \\
	
	\hline
\end{tabular}
\end{table}

The dynamics of system \eqref{model} are simulated using MATLAB 2018a. Figure \ref{stab_dfe}(a) shows that for $R_0 = \max \left\lbrace R_1, R_2 \right\rbrace = 0.9735 < 1,$ DFE $(D^0)$.is stable. Figure \ref{stab_dfe}(b) represents that the mutant dominant equilibrium ($D^2$) is stable for $R_1 <1$ and $R_2 >1$. Further, Figure \ref{diff_nu1} shows that mutation rate ($\nu_1$) can change the dynamics of positive equilibrium of system \eqref{model}. Figure \ref{diff_nu1}(a) represents that the positive equilibrium is stable. Figure \ref{diff_nu1}(b) and (c) ensure that system \eqref{model} losses its stability around the positive equilibrium for high mutation rate and periodic solution occurs. Furthermore, Figure \ref{diff_nu1}(d) shows that the positive equilibrium regains its stability for further higher mutation rate. This type of dynamics is called endemic bubble \cite{Liu2015}.

\begin{figure}[H]
	(a)\includegraphics[scale=0.5]{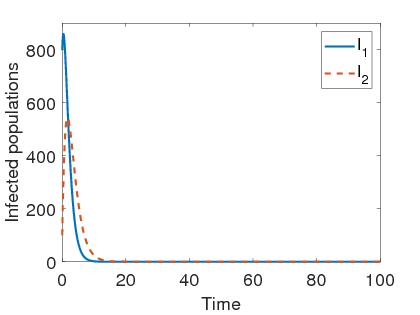}
	(b)\includegraphics[scale=0.5]{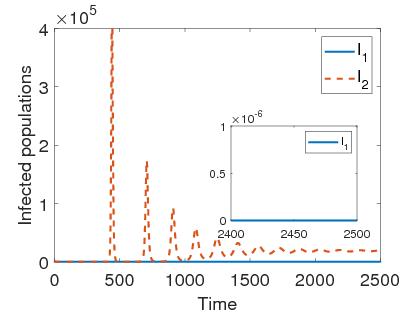}
	\caption{(a) The solution $I_1$ and $I_2$ of system \eqref{model} for the parametric values $\beta_1 = 0.000000033, \beta_2 = 0.000000017,$ and others from Table \ref{num_val}. (b) The solution $I_1$ and $I_2$ of system \eqref{model} for $\beta_1 = 0.0000000033, \beta_2 = 0.0000003,$ and other parametric values from Table \ref{num_val}.}\label{stab_dfe}
\end{figure}

\begin{figure}[H]
	(a)\includegraphics[scale=0.5]{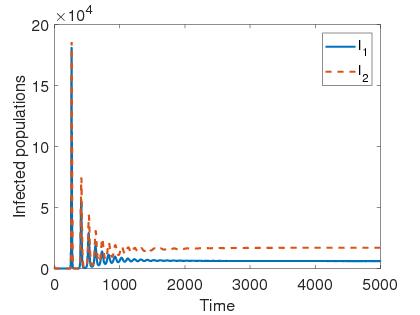}
	(b)\includegraphics[scale=0.5]{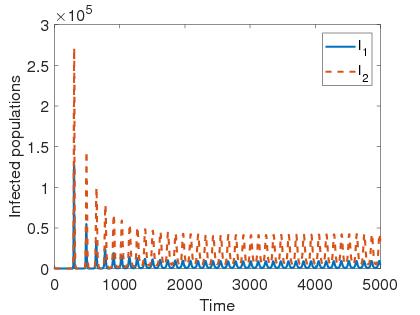}
	(c)\includegraphics[scale=0.5]{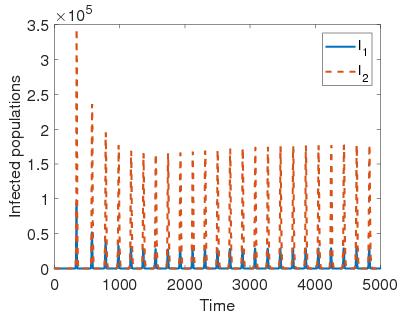}
	(d)\includegraphics[scale=0.5]{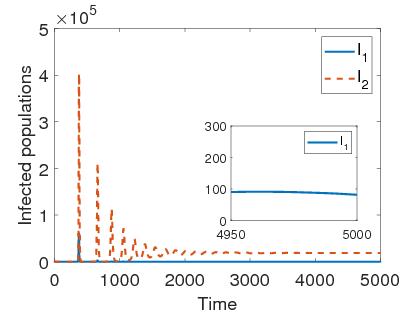}
	\caption{(a) The solution $I_1$ and $I_2$ of system \eqref{model} for the parametric values in Table \ref{num_val} and $\nu_1 = 0.3$. (b) The solution $I_1$ and $I_2$ of system \eqref{model} for $\nu_1 = 0.5$. (c) The solution $I_1$ and $I_2$ of system \eqref{model} for $\nu_1 = 0.7$. (d) The solution $I_1$ and $I_2$ of system \eqref{model} for $\nu_1 = 0.90$.} \label{diff_nu1}
\end{figure}

Furthermore, Figure \ref{bubble} shows the positive equilibrium is locally asymptotically stable for small and larger enough values of $\nu_1$, but unstable for intermediate values of $\nu_1$. In this way, we get the bifurcation diagram in Figure \ref{bubble}, which we call an endemic bubble. As shown in the bifurcation diagram (Figure \ref{bubble}), for a lower range of values of $\nu_1$, both strains persist in the environment and system \eqref{model} is asymptotically stable. For a range of $\nu_1$, the periodic oscillations (limit cycle) will appear, but for higher values of $\nu_1,$ system \eqref{model} regains its stability. Figure \ref{hopf} illustrates the Hopf bifurcation diagram with respect to the parameter $\delta_3$. It shows that the system \eqref{model} is locally asymptotically stable for lower values of $\delta_3$ and periodic solutions (Hopf bifurcation) appear for higher values of $\delta_3$.

\begin{figure}[H]
	(a)\includegraphics[scale=0.5]{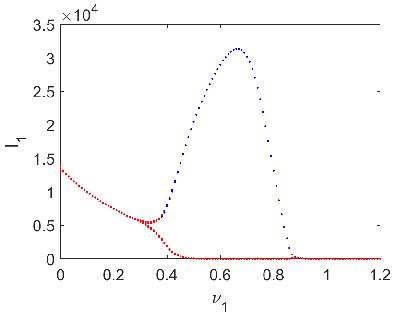}
	(b) \includegraphics[scale=0.5]{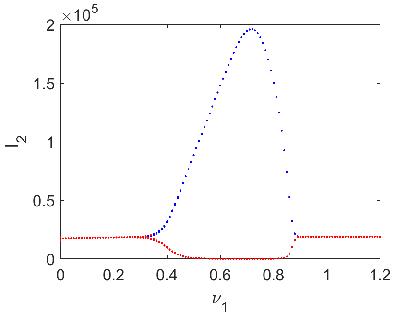}
	\caption{Bifurcation diagram (endemic bubble) with respect to mutation rate ($\nu_1$), other parametric values remain same as in  Table \ref{num_val}. The blue color shows the upper limit of the limit cycle and red color shows the lower limit of the limit  cycle.}\label{bubble}
\end{figure}

\begin{figure}[H]
	(a)\includegraphics[scale=0.5]{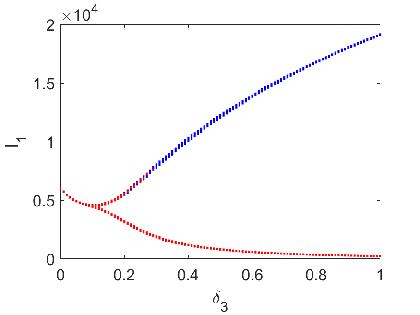}
	(b) \includegraphics[scale=0.5]{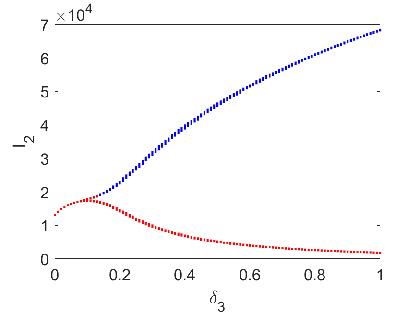}
	\caption{Hopf bifurcation diagram with respect to $\delta_3$. We keep $\nu_1 = 0.4$ and other parametric values same as in Table \ref{num_val}. The blue color shows the upper limit of the limit cycle and red color shows the lower limit of the limit cycle.}\label{hopf}
\end{figure}

\section{Case study on COVID-19 data in India} \label{cal_model}
%\subsection{Data fitting and prediction}
Here we estimate the unknown parameters of the system \eqref{model} on the cumulative cases, and cumulative deaths of COVID-19 in India from March 1, 2021 to September 27, 2021 by using the MCMC algorithm \cite{Haario2006,Haario2001,Ahmed2008}. We collect the data of cumulative cases and cumulative deaths of COVID-19 for March 1, 2021 to September 27, 2021 (31 weeks) from the WHO website \cite{dataindia}. By estimating the parameters, we estimate the mean values, standard deviation, and Geweke values of some parameters of the system \eqref{model}. The cumulative cases can be given as
\begin{equation}\label{cum_eq}
\frac{dC}{dt} = a_1 E_1 + a_2 E_2,
\end{equation}
where $C(t)$ represents the cumulative cases, and the cumulative deaths can be given as
\begin{equation}\label{cumdeath_eq}
\frac{dD}{dt} = d_1 I_1 + d_2 I_2,
\end{equation}
where $D(t)$ represents the cumulative deaths.

We use MCMC method for 20000 simulations to fit the Eq. \eqref{cum_eq} and estimate the parameters. Figure \ref{curve_fit} represents a good fitting between the cumulative reported cases of COVID-19 and the model solution, well suggesting the epidemic trend in India. Some of the parameters of the system \eqref{model} are taken either from the literature or presumed based on publicly-available COVID-19 associated information. We assume parametric values of $a_1, a_2, \mu, \delta_1, \delta_2,$ and $\delta_3$ same as given in Table \ref{num_val} and initial conditions given in Table \ref{ini_cond}. 

\begin{figure}[H]
	(a)\includegraphics[scale=0.5]{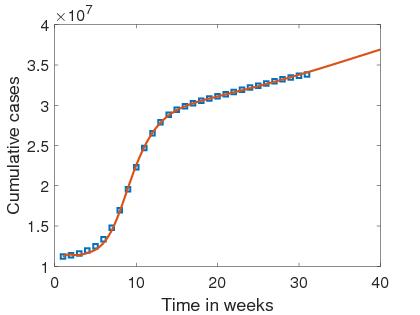}
	(b)\includegraphics[scale=0.5]{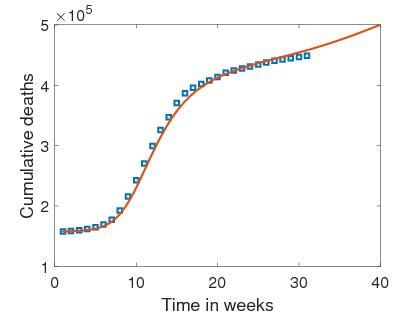}
	\caption{Fitting result of the cumulative cases and cumulative deaths of COVID-19. (a) The blue solid boxes represent the actual reported cumulative cases and the orange curve represents the model output. (b) The blue solid boxes represent the actual reported cumulative deaths and the orange curve represents the model output. For the different colors in the Figure, refer to the web version of the paper.} \label{curve_fit}
\end{figure}

By using MCMC method, we acquire the values of the parameters $\beta_1, \beta_2, \alpha_1, \alpha_2, \nu_1, d_1,$ and $d_2$ with MCMC chain of the time evolution of the cumulative cases and comparison with the confirmed cases of COVID-19 in India. We compute the mean values, standard deviation, and Geweke values of these parameters (refer in Table \ref{mcmc_param}). 

\begin{table}[H]
	\caption{Estimated values of parameters by MCMC method.}\label{mcmc_param}
	\begin{tabular}{p{2cm}p{3cm}p{2cm}p{3cm}p{2cm}}
		\hline
		Parameters	& Mean value & Standard deviation & Geweke value &  Reference \\
		\hline
		$\Lambda$	& 320000 & - & - & Assumed \\
		
		$\beta_1$	& 2.999e-07 & 5.3206e-13 & 0.99 & MCMC \\
		
		$\beta_2$	& 3.7618e-08 & 3.145e-14 & 0.95 & MCMC \\
		
		$p$	& 0.00002  & - & - & Assumed \\
		
		$\gamma$	& 0.01 & - & - & Assumed \\
		
		$\alpha_1$	& 0.45 & 5.5791e-03 & 0.99 & MCMC \\
		
		$\alpha_2$	& 0.2 & 1.6627e-03 & 0.98 & MCMC \\
		
		$\nu_1$    & 0.03 & 3.265e-04 & 0.99 & MCMC \\
		
		$d_1$    & 0.007 & 5.347e-05 & 0.99 & MCMC \\
		
		$d_2$    & 0.00003 & 3.016e-7 & 0.99 & MCMC \\
		
		\hline
	\end{tabular}
\end{table}

\begin{table}[H]
	\caption{Initial conditions for system \eqref{model} with respect to COVID-19 in India.}\label{ini_cond}
	\begin{tabular}{p{1.5cm}p{2cm}p{1.5cm}p{1.5cm}p{1.5cm}p{1.5cm}p{1.5cm}p{1.5cm}}
		\hline
		S(0)	& V(0) & $E_1(0)$ & $E_2(0)$ &  $I_1(0)$ &  $I_2(0)$ & $R_1(0)$ & $R_2(0)$\\
		\hline
		$23 \times 10^5$ & 12256337  & $10 \times 10^4$ & $60 \times 10^4$ & $10 \times 10^3$ & $80 \times 10^3$ & $60 \times 10^2$ & $20 \times 10^3$ \\
		\hline
	\end{tabular}
\end{table}

\subsection{The effects of different efficacies of vaccine and mutation rate on cumulative cases and cumulative deaths over time}
The system \eqref{model} is simulated to assess the population-level impact of the imperfect anti-COVID-19 vaccine in India. The population-level impact of the vaccine efficacy to both strains (native and mutant) on the burden of the pandemic is examined first of all. Firstly, we consider different values of the vaccine efficacy to the native strain. The system \eqref{model} is then simulated using the baseline parameter values in Table \ref{num_val}, \ref{mcmc_param}, and different values of the vaccine efficacy $(1-\delta_1)$ to the native strain. 
The results obtained, depicted in Figure \ref{imp_delta1_delta2}(a) show that, for the vaccine efficacy $(1-\delta_1) = 75 \%$ (assumed), 33,652,745 cumulative cases (the red curve) has reported by September 27, 2021. Predictions show that the cumulative cases would be recorded 40,350,000, by February 7, 2022 (31 weeks after September 27, 2021). The simulations, further, show a reduction with increasing values of the vaccine efficacy $(1-\delta_1)$ from its baseline value. In particular, if we consider the vaccine efficacy as $(1-\delta_1)= 90\%,$ then 39,030,000 cumulative cases would be recorded by February 7, 2022, representing only 3.27\% reduction. 
Figure \ref{imp_delta1_delta2}(b) represents that, for the vaccine efficacy $(1-\delta_2) = 40 \%$ (assumed) to the mutant strain, 33,652,745 cumulative cases (the yellow curve) has reported by September 27, 2021. Simulations show that the cumulative cases would be recorded 41,520,000, by February 7, 2022. These simulations show that a reduction in the cumulative cases with increasing values of the vaccine efficacy $(1-\delta_2)$. Particularly, if we increase the vaccine efficacy $(1-\delta_2) = 75 \%$, then 41,220,000 cumulative cases would be recorded by February 7, 2022. This represents 0.7\% reduction in the cumulative cases.  Furthermore, by increasing the vaccine efficacy $(1-\delta_2) = 90 \%$, the cumulative cases would be decreased 0.9\%.

The system \eqref{model} is simulated for different values of $1-\delta_3$ (immunity against mutant strain). Figure \ref{imp_delta1_delta2}(c) represents that, if recovered individuals by native strain have 90\% immunity against mutant strain ($(1-\delta_3) = 90 \%$) (assumed) to the mutant strain, 33,652,745 cumulative cases (the blue curve) has reported by September 27, 2021. Simulations show that the cumulative cases would be recorded 40,350,000, by February 7, 2022. These simulations show that an increase in the cumulative cases with decreasing value of $1-\delta_3$. Particularly, if we consider $(1-\delta_3) = 75 \%$, then 40,900,000 cumulative cases would be recorded by February 7, 2022. This represents 1.36\% increase in the cumulative cases. If we consider $(1-\delta_3) = 50 \%$, then 42,920,000 cumulative cases would be recorded by February 7, 2022, representing 6.36\% increase in the cumulative cases. Furthermore, if $(1-\delta_3) = 25 \%$, then 48,140,000 cumulative cases would be recorded by February 7, 2022, representing 19.3\% increase in the cumulative cases. 

Furthermore, simulations were also carried out to assess the impact of mutation rate on the disease dynamics of COVID-19. Figure  \ref{imp_delta1_delta2}(d) represents that, for the baseline value of the mutation rate $(\nu_1 = 0.03)$, 33,652,745 cumulative cases (the red curve) has reported by September 27, 2021. Simulations show that the cumulative cases would be 40,370,000 by February 7, 2022. These simulations show an increase in the cumulative cases with increasing values of $\nu_1$. Particularly, an increase in $\nu_1 = 0.05$, then 40,380,000 cumulative cases would be recorded by February 7, 2022. This represents a 0.02\% increase in the cumulative cases.  Furthermore, an increase in $\nu_1 = 0.09$, then 40,480,000 cumulative cases would be recorded by February 7, 2022, representing 0.27\% increase in the cumulative cases.
 
\begin{figure}[H]
	(a)\includegraphics[scale=0.5]{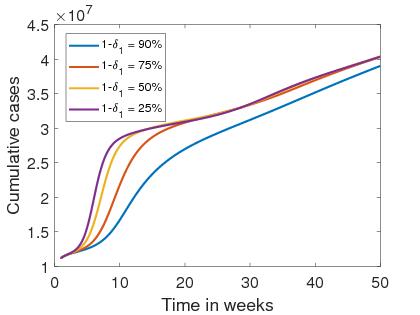}
	(b)\includegraphics[scale=0.5]{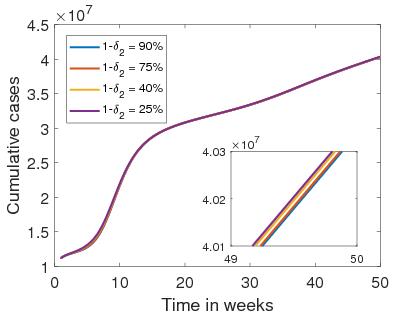}
	(c)\includegraphics[scale=0.5]{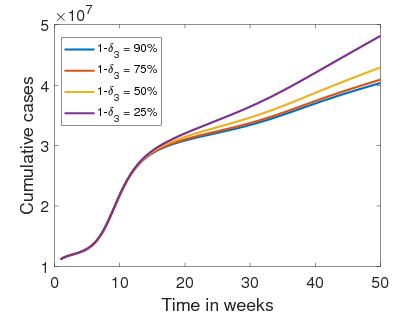}
	(d)\includegraphics[scale=0.5]{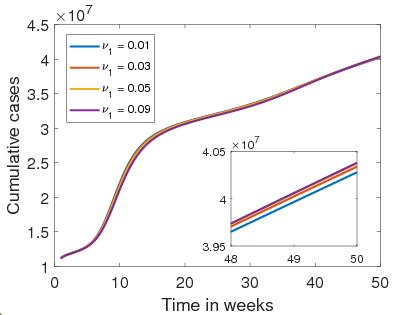}
	\caption{Assessment of the impacts of different parameters on the cumulative cases of COVID-19 in India. Simulations of the system \eqref{model} shows the cumulative cases of COVID-19 in India, as a function of time, (a) for different values of vaccine efficacy to native strain ($1-\delta_1$). (b) for different values of vaccine efficacy to mutant ($1-\delta_2$). (c) for different values of immunity against mutant strain ($1-\delta_3$). (d) for different values of mutation rate ($\nu_1$). The baseline parameter values are used from Tables \ref{num_val} and \ref{mcmc_param}.}\label{imp_delta1_delta2}
\end{figure}

To examine the dependence of the end time (February 7, 2022) of the epidemic on the vaccine efficacies to both native and mutant strains, i.e. $(1-\delta_1)$, $(1-\delta_2)$, mutation rate $(\nu_1)$, and immunity against mutant strain $(1-\delta_3)$, we sketch the contour plots of the total number of the cumulative cases of COVID-19, with respect to $(1-\delta_1)$ and $(1-\delta_2)$ in Figure \ref{count_delta1_delta2}(a); $(1-\delta_3)$ and $\nu_1$ in Figure \ref{count_delta1_delta2}(b); $(1-\delta_2)$ and $\nu_1$ in Figure \ref{count_delta1_delta2}(c); $(1-\delta_2)$ and $(1-\delta_3)$ in Figure \ref{count_delta1_delta2}(d); respectively. In order to assess the combined effect of the parameters, the other parameters remain fixed when we vary two parameters. 
The results indicate that increasing the vaccine efficacies $(1-\delta_1)$ and $(1-\delta_2)$, the cumulative cases would be reduced (Figure \ref{count_delta1_delta2}(a)). It can also be observed that the vaccine efficacy to the native strain $(1-\delta_1)$ is more influential than the vaccine efficacy to the mutant strain $(1-\delta_2)$, in controlling the total number of cases. 
In the same way, the result in Figure \ref{count_delta1_delta2}(b) shows that the cumulative number of cases would be reduced for the higher immunity against mutant strain $(1-\delta_3)$ and lower value of mutation rate $(\nu_1)$. The result also shows that cumulative cases would be higher for the lower value of $(1-\delta_3)$ and lower value of $(\nu_1)$. 
The result in Figure \ref{count_delta1_delta2}(c) shows that the cumulative number of cases would be reduced for the higher vaccine efficacy to the mutant strain $(1-\delta_2)$ and lower value of mutation rate $(\nu_1)$. The result also shows that cumulative cases would be higher for the lower value of $(1-\delta_2)$ and higher value of $(\nu_1)$. 
Figure \ref{count_delta1_delta2}(d) represents that the cumulative number of cases would be reduced only for higher immunity against mutant strain $(1-\delta_3)$. This result makes it wonder that cumulative cases would be higher even for the higher value of the vaccine efficacy to the mutant strain $(1-\delta_2)$ if there is a low value of immunity against mutant strain. This means there may exist an appropriate combination of these parameters to ensure fewer confirmed cases. Thus, it would be interesting to consider an optimal strategy for supplying vaccines to minimize the cumulative number of cases.  

\begin{figure}[H]
	(a)\includegraphics[scale=0.45]{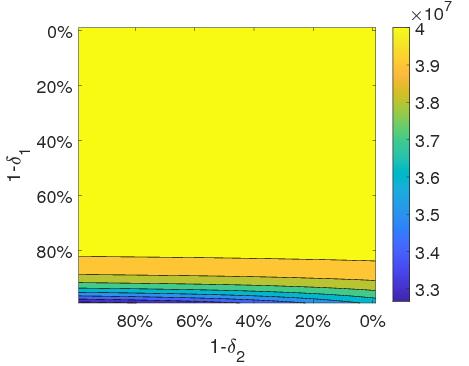}
	(b)\includegraphics[scale=0.45]{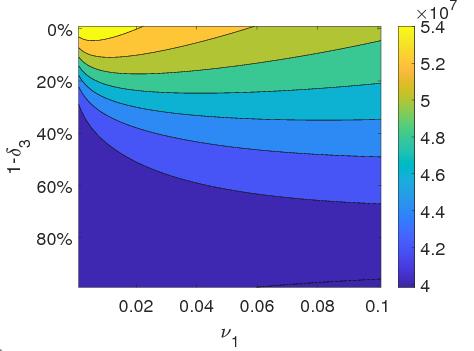}
	(c)\includegraphics[scale=0.45]{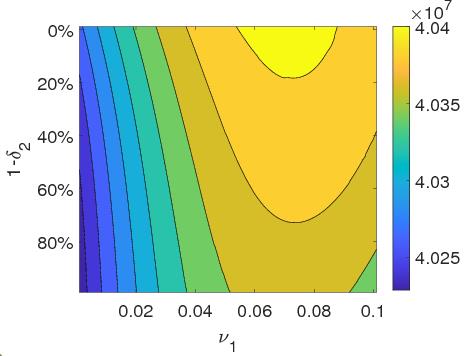}
	(d)\includegraphics[scale=0.45]{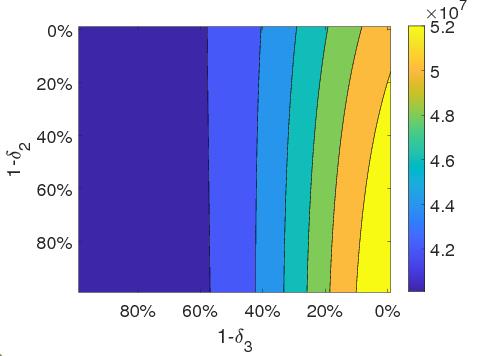}
	\caption{Contour plots of the cumulative cases, as a function of different parameters: (a) as a function of vaccine efficacy ($1-\delta_1$) to native strain and vaccine efficacy ($1-\delta_2$) to mutant strain; (b) as a function of immunity against mutant strain ($1-\delta_3$) and mutation rate ($\nu_1$); (c) as a function of vaccine efficacy ($1-\delta_2$) to mutant strain and mutation rate ($\nu_1$); (d) as a function of vaccine efficacy ($1-\delta_2$) to mutant strain and immunity against mutant strain ($1-\delta_3$). The baseline parameter values are used from Tables \ref{num_val} and \ref{mcmc_param}.}\label{count_delta1_delta2}
\end{figure}

Furthermore, the impact of vaccine efficacies on cumulative deaths is examined.
The result in Figure \ref{impdeaths_delta1_delta2}(a) shows that for the vaccine efficacy $(1-\delta_1) = 75 \%$ (assumed), 446918 cumulative deaths (red curve) has reported by September 27, 2021. Predictions show that the cumulative deaths would be recorded at 534431 by February 7, 2022 (31 weeks after September 27, 2021). The simulations further show a reduction with increasing values of the vaccine efficacy $(1-\delta_1)$ from its baseline value. In particular, if we consider the vaccine efficacy as $(1-\delta_1)= 90\%,$ then 506562 cumulative deaths would be recorded by February 7, 2022, representing only a 5.2\% reduction. 
Figure \ref{impdeaths_delta1_delta2}(b) represents that, for the vaccine efficacy $(1-\delta_2) = 40 \%$ (assumed) to the mutant strain, 446918 cumulative deaths (yellow curve) has reported by September 27, 2021. Simulations show that the cumulative deaths would be recorded at 543330 by February 7, 2022. Further simulations show a reduction in the cumulative deaths with increasing values of the vaccine efficacy $(1-\delta_2)$. Particularly, if we increase the vaccine efficacy $(1-\delta_2) = 75 \%$, then 537342 cumulative deaths would be recorded by February 7, 2022, representing a 1.1\% reduction in the cumulative deaths.  Furthermore, by increasing the vaccine efficacy $(1-\delta_2) = 90 \%$, 529420 cumulative deaths would be recorded, representing 2.5\% reduction.

\begin{figure}[H]
	(a)\includegraphics[scale=0.5]{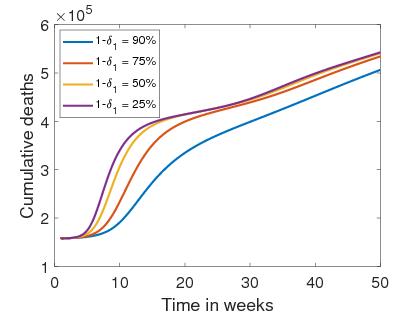}
	(b)\includegraphics[scale=0.5]{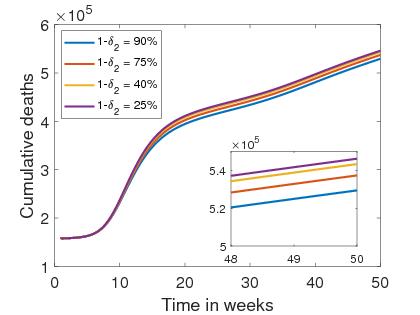}
	\caption{Assessment of the impacts of different parameters on the cumulative deaths of COVID-19 in India. Simulations of the system \eqref{model} show the cumulative deaths of COVID-19 in India, as a function of time, (a) for different values of vaccine efficacy to native strain ($1-\delta_1$); (b) for different values of vaccine efficacy to mutant ($1-\delta_2$). The baseline parameter values are used from Tables \ref{num_val} and \ref{mcmc_param}. }\label{impdeaths_delta1_delta2}
\end{figure}

\subsection{Impact of mutation rate on the dynamics of strains and infected population over time}

The impact of mutation rate ($\nu_1$) has been analyzed on the dynamics of strains and total infected population for COVID-19. We can explicitly explore how the numbers of infected individuals $I_1$ and $I_2$ depend on the mutation parameter $\nu_1$. As shown in the bifurcation diagram (Figure \ref{bif_covid}), for the range of mutation rate, $0< \nu_1 \leq 1.7$, both strains persist in the environment, and disease is asymptotically stable. For $1.7 < \nu_1 \leq 3.3$, the periodic oscillations (limit cycle) will appear, i.e. the disease outbreak will occur repeatedly. However, for a higher value of $\nu_1,$ i.e. for $\nu_1 > 3.3,$ the periodic solutions disappear and the disease again becomes stable. This dynamical phenomenon has been illustrated in Figure \ref{osci_covid}, for the values $\nu_1 = 1, 2.5$ and $3.53$. Furthermore, we have plotted the total infected population $(E_1 + E_2 + I_1 + I_2)$ over time for different values of $\nu_1$ in Figure \ref{totalinf_nu1}. This result gives a wondering dynamics over a long time. We can easily observe that if the mutation rate $(\nu_1)$ increases from 0.01 to 0.09, then the infected population persists at a lower level (blue and red curves). In addition, if the value of $\nu_1$ is increased to 2.5, then the total infected population oscillates over time (orange curve). If the value of $\nu_1$ is further increased to 3.5, then the infected population persists at a much lower level (purple curve) over long time.

\begin{figure}[H]
    (a)\includegraphics[scale=0.5]{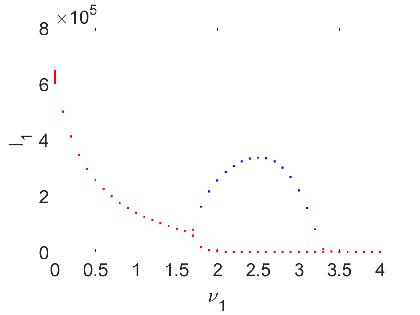}
	(b)\includegraphics[scale=0.5]{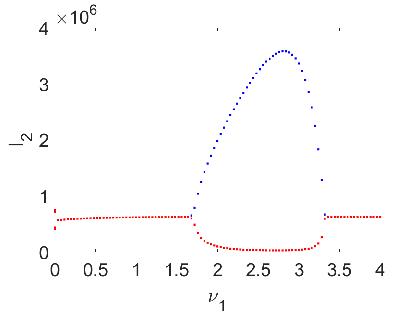} 
	\caption{Bifurcation diagram (endemic bubble) with respect to mutation rate ($\nu_1$) for the fitted values of parameters given in Table \ref{mcmc_param}. The blue color shows the upper limit of the limit cycle and red color shows the lower limit of the cycle.}\label{bif_covid}
\end{figure}

\begin{figure}[H]
	(a)\includegraphics[scale=0.5]{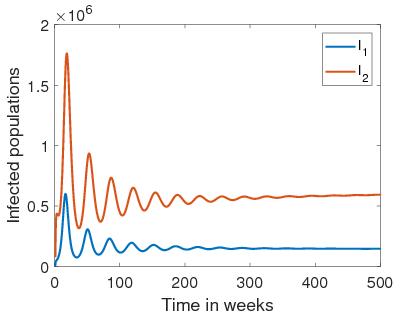}
	(b)\includegraphics[scale=0.5]{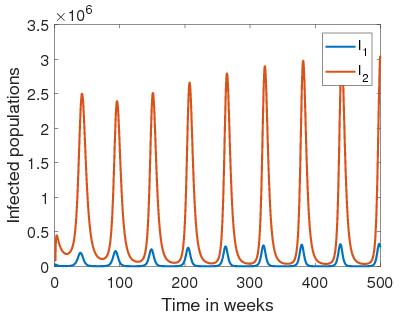}
	(c)\includegraphics[scale=0.5]{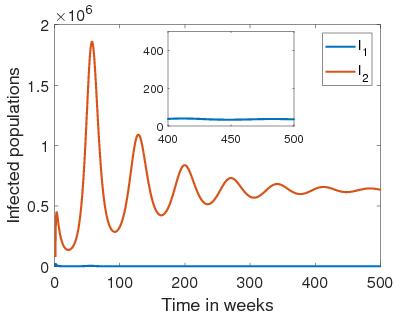}
	\caption{Impact of mutation rate on the infected populations $I_1$ and $I_2$. (a) The asymptotically stable solutions $I_1$ and $I_2$ of system \eqref{model} for the parametric values in Table \ref{mcmc_param} and $\nu_1 = 1$. (b) The periodic solutions $I_1$ and $I_2$ of system \eqref{model} for $\nu_1 = 2.5$. (c) The asymptotically stable solutions $I_1$ and $I_2$ of system \eqref{model} for $\nu_1 = 3.53$.} \label{osci_covid}
\end{figure}

\begin{figure}[H]
	\includegraphics[scale=0.5]{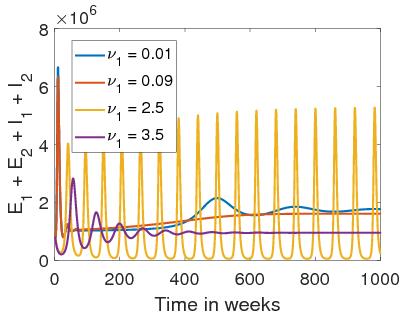}
	\caption{Variation of the total infected population with respect to different mutation rate ($\nu_1$). } \label{totalinf_nu1}
\end{figure}

\section{Sensitivity analysis}\label{Sensitivity}
Our ultimate goal in developing a mathematical model of two strains with imperfect vaccine and mutation is to find the role of different parameters to control the disease. From the viewpoints of biological significance,
$R_0$ plays a vital role in determining the severity (burden of disease),
outcome and process of the infection. This section investigates how a percentage change in key parameters in the model affects (change) the basic reproduction number. In the system \eqref{model}, the most critical parameter to reduce is the basic reproduction numbers of different strains. If the basic reproduction number is brought below one, the disease infection will be eliminated. Even if the basic reproduction number cannot be brought below one, sensitivity analysis may help to determine which parameter, if acted upon, will bring the largest reduction in the basic reproduction numbers. Sensitivity indices measure the percentage change of a key quantity, such as the basic reproduction number, in response to a percentage change of a parameter in that quantity. Sensitivity analysis is carried out on each parameter, which is utilized to recognize and check parameters responsible for impacting the basic reproductive number. The normalized sensitivity indices, also called elasticity of the quantity $Q$ with respect to the parameter $p$, is defined as follows \cite{elasticity}:
\begin{equation}
\epsilon_{p}^{Q}=\frac{\partial Q}{\partial p}\frac{p}{Q}.
\end{equation}
Elasticities can be positive or negative. A positive sign says that quantity $Q$ increases with the increase in parameter $p$, while a negative sign says that quantity $Q$ decreases with the increase of parameter $p$.
\begin{table}[H] 
	\caption{List of elasticities of $R_1$ and $R_2$.} \label{elas_R0}
	\begin{tabular}{ccc}
		\hline
		Parameters &  Values of elasticities of $R_1$ & Values of elasticities of $R_2$ \\
		\hline
		$\Lambda$	& 1 & 1\\ 
		$\beta_1 $ & 1 & -  \\
		$\beta_2$	& -  & 1 \\ 
		$a_1 $ & 0.00029 & - \\
		$a_2$	& - & 0.00029 \\ 
		$\gamma$ & 0.0014 & 0.00075 \\ 
		$\mu$	& -1 & -1 \\ 
		$p$	& -0.0014 & -0.00077 \\ 
		$\delta_1$	& 0.00048 & - \\ 
		$\delta_2$	& - & 0.00116 \\
		$d_1$	& -0.0143  &  - \\
		$d_2$ & - & -0.00015 \\
		$\alpha_1$ & -0.9234  & -  \\
		$\alpha_2$ & -  &  -0.998 \\
		$\nu_1$  & -0.0615 & - \\
		\hline
	\end{tabular}
\end{table}
We compute the elasticity indices for the basic reproduction numbers $R_1$ and $R_2$ given in Table \ref{elas_R0}. In Table \ref{elas_R0}, the parameters with positive sensitivity indices are those parameters that have a great influence on the development of the disease in the community if their values are increasing. This is because the basic reproduction number increases as their value increases; that is, the average number of secondary cases of infection increases in the community.
Also, all the parameters in which their indices are negative can curtail the infection in the community as their values increase while the others are left constant. As their values increase, the basic reproduction number decreases, which reduces the endemicity of the disease in the community.

It is evident from the values of elasticities that the reproduction number will experience the highest impact with change to recruitment rate $(\Lambda),$ transmission rates ($(\beta_1)$ and $(\beta_2)$), and natural death rate ($\mu$). In general, acting on any parameters produces a similar change in the reproduction number. For the basic reproduction number $R_0$, the elasticities concerning the parameters $\Lambda, \beta_1,$ and $\beta_2,$ are fairly similar, which means these parameters have the same impact on $R_0$.

We also perform sensitivity analysis using the methodology of
Latin Hypercube Sampling (LHS) and partial rank correlation
coefficients (PRCCs) \cite{Marino2008} to investigate the dependence
of $R_0$ on the different parameters. From Figure \ref{PRCC},
we observe that recruitment rate ($\Lambda$), transmission rates ($\beta_1, \beta_2$), natural death rate ($\mu$), and recovery rates ($\alpha_1$, $\alpha_2$)
are the most sensitive parameters for $R_0.$ To generate the LHS
matrices, we assume that all the model parameters are uniformly distributed. Then using the baseline values from Tables \ref{num_val} and \ref{mcmc_param}, a total of 1000 simulations per LHS run are carried out.

We also examine the impact of sensitivities of the parameters on the population size of cumulative cases $(C)$ and cumulative deaths $(D)$. From Figure \ref{PRCC1}, we observe that $\beta_1, a_1, \beta_2, a_2$ are the most sensitive parameters to the cumulative cases, which means that the value of these parameters increases the cumulative cases will increase. This result implies that we should control these parameters to reduce the cumulative cases. From Figure \ref{PRCC2}, we observe that $\beta_1, a_1, d_1, \alpha_1$ are the most sensitive parameters to the cumulative deaths and $\beta_1, a_1, d_1$ has positive sensitivity and $\alpha_1$ has negative sensitivity, indicating that we should control the parameters $\beta_1, a_1, d_1$ and promote $\alpha_1$ to reduce the cumulative deaths.  
\begin{figure}[H] 
	(a)\includegraphics[scale=0.5]{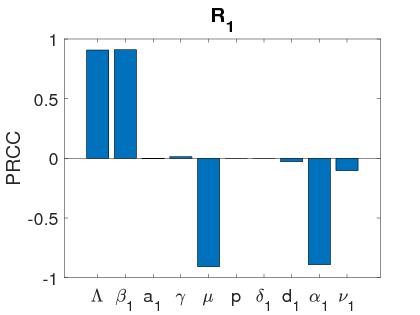}
	(b)\includegraphics[scale=0.5]{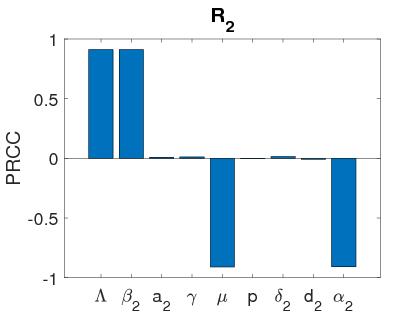}
	\caption{PRCC sensitivity on $R_0$.} \label{PRCC}
\end{figure}

\begin{figure}[H] 
	\includegraphics[scale=0.5]{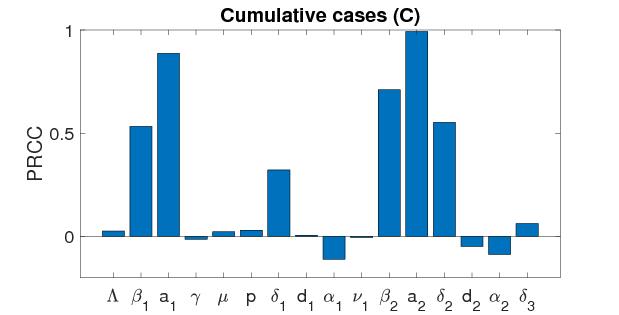}
	\caption{PRCC sensitivity on cumulative cases (C).} \label{PRCC1}
\end{figure}

\begin{figure}[H] 
	\includegraphics[scale=0.5]{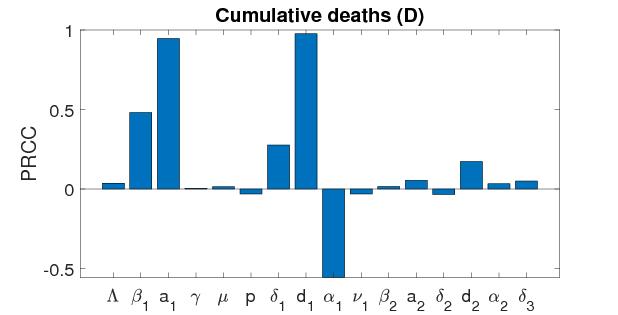}
	\caption{PRCC sensitivity on cumulative deaths (D).} \label{PRCC2}
\end{figure}

\section{Discussion}\label{Discussion}
The novel coronavirus has rapidly emerged as a disease COVID-19 and evolved as pandemics worldwide. The emergence of new variants of SARS-CoV-2 could complicate mitigation efforts. Reducing the transmission of the novel coronavirus pandemic has been the massive responsibility of the intellects of every public health agencies, Govt. officials, and millions of populations worldwide. According to the World Health Organization (WHO) \cite{dataworld}, over 26 crore populations are infected with the COVID-19 as of December 4, 2021. Now, the Govt. of different countries are trying to give safeguards to the populations via vaccination. Since every vaccine is imperfect to the disease and the virus mutates over time, mathematical models may help understand the dynamics of transmission and control of the novel coronavirus. 

Taking care of the pandemic scenario, we proposed and analyzed an SEIR type multi-strain mathematical model with imperfect vaccine and mutation \eqref{model}. We also assumed that recovered individuals of native strain could become infected by the mutant strain, but they have some immunity against the mutant strain, and it has been modeled by multiplying a reduction coefficient to the transmission rate. The model we developed takes the form of a deterministic system of nonlinear differential equations. The model is rigorously analyzed to gain insight into its dynamical features. Theoretical and numerical analysis of the proposed system \eqref{model} have been carried out using stability theory. Positivity and boundedness of the solutions of system \eqref{model} have been studied, and the system \eqref{model} is well-posed. We found that the DFE $(D^0)$ is globally stable when $R_0 < 1$ in the presence of an imperfect vaccine, i.e., the native and mutant strains will be eliminated in the community whenever $R_0 <1.$ In other words, the imperfect vaccine against COVID-19 can lead to the eradication of the pandemic if it can bring (or maintain) $R_0$ to a value less than unity. Further, we analyzed the existence and stability of the mutant dominant equilibrium ($D^2$) by constructing a suitable Lyapunov function. Global stability of coexistence (positive) equilibrium of both native and mutant strains has been investigated for the case $\nu_1=0$ and $\delta_3=0$, by constructing a suitable Lyapunov function. This result implies that if the mutation rate is zero and recovered individuals from native strain have 100\% immunity against mutant strain, the disease persists in the environment when $R_0>1$. Moreover, we investigated that system \eqref{model} undergoes a Hopf bifurcation. The transcritical bifurcation was also investigated for system \eqref{model} by using center manifold theory. By numerical simulation, we discovered that system \eqref{model} exhibits an interesting dynamics called an endemic bubble, which means the system losses its stability when the mutation rate ($\nu_1$) is larger, and oscillations occur, and further larger value of $\nu_1$, the system regains its stability (refer the Figure \ref{bubble}). The occurrence of Hopf bifurcation was also ensured by varying the parameter $\delta_3$ (Figure \ref{hopf}), which implies that the disease will appear repeatedly for the higher value of $\delta_3$. We also computed an expression for herd immunity and a threshold value for the vaccination rate that suggests disease control implications. The mentioned necessary herd immunity (induced by vaccination) percentage may not be realistically achievable. Indeed, Dr. Anthony Fauci, a member of the US Presidential Task Force on COVID-19, said on June 29, 2020, that an imaginary anti-COVID-19 vaccine might not attain the requisite high immunity in the US if many people deny getting it \cite{CableNews2020}. One way to get around this requirement for high vaccine coverage to attain high immunity is to merge the vaccination program with other anti-COVID-19 intervention strategies, such as social distancing, the use of face masks in public, etc.

Furthermore, we parameterized the proposed mathematical model using the data of the COVID-19 pandemic in India, for assessing the potential community-wide impact of an imperfect vaccine against COVID-19 and mutation of the virus. Using the data of cumulative confirmed cases and cumulative deaths of COVID-19 in India from March 1, 2021 to September 27, 2021. By employing the MCMC method to fit the model \eqref{model} with the cumulative cases and cumulative deaths, the mean values, the standard deviation, and Geweke values of the unknown parameters were estimated.

We carried out numerical simulations to measure the population-level impact of the vaccine’s efficacies. The results exhibited that the COVID-19 burden (as measured in terms of cumulative cases and cumulative deaths) decreases with increasing vaccine efficacies, as expected. We investigated how the cumulative number of cases varies with different values of vaccine efficacies to different strains, mutation rate, and immunity against the mutant strain in India. Our analysis showed what would happen if the vaccine efficacy to native strain was increased from its baseline value, as shown in Figure \ref{imp_delta1_delta2}(a).  The results showed that if the vaccine efficacy $(1-\delta_1) = 90\%$ was increased by its baseline value $(1-\delta_1) = 75\%$, then the cumulative cases would be decreased by 3.27\% by February 7, 2022. Figure \ref{imp_delta1_delta2}(b) demonstrated that if the vaccine efficacy to mutant strain increased as  $(1-\delta_2) = 75\%$ from its baseline value $(1-\delta_2) = 40\%$, then cumulative cases would be decreased by 0.7\%. We also assessed the impact of immunity against mutant strain $(1-\delta_3)$. We analyzed that if the value of $(1-\delta_3)$ is considered as  25\%, which is decreased from its baseline value $(1-\delta_3) = 90\%$, then the cumulative cases would be increased by 19.3\% by February 7, 2022, shown in \ref{imp_delta1_delta2}(c). We also analyzed the impact of mutation rate on cumulative cases, which represents the increase in cumulative cases with increasing mutation rate, showing in \ref{imp_delta1_delta2}(d). Our analysis also showed the combined impact of vaccine efficacies to both strains, which were increased to 20\%, 40\%, 60\%, 80\%, respectively, as shown in Figure \ref{count_delta1_delta2}(a). The result demonstrated that increasing the vaccine efficacy to the native strain is more influential in reducing the cumulative number of cases than the efficacy to mutant strain. We have shown in the contour plot (Figure \ref{count_delta1_delta2}(b)) that there would be higher cumulative cases for the lower immunity against mutant strain even for the lower value of the mutation rate. There will be fewer cumulative cases if immunity against mutant strain is high. The contour plot (Figure \ref{count_delta1_delta2}(c)) exhibited that cumulative cases would be high for the higher mutation rate and lower vaccine efficacy to the mutant strain. This implies that there is a need to increase in the vaccine efficacy $(1-\delta_2)$ to decrease the cumulative cases of COVID-19 in the presence of the mutation. The cumulative number of deaths has also been varied with different values of vaccine’s efficacies to different strains in Figure \ref{impdeaths_delta1_delta2}. The results in Figure \ref{impdeaths_delta1_delta2}(a) showed that if the vaccine efficacy $(1-\delta_1) = 90\%$ was increased by its baseline value $(1-\delta_1) = 75\%$, then the cumulative deaths would be decreased by 5.2\% by February 7, 2022. Figure \ref{impdeaths_delta1_delta2}(b) demonstrated that if the vaccine efficacy to mutant strain increased as  $(1-\delta_2) = 75\%$ from its baseline value $(1-\delta_2) = 40\%$, then cumulative deaths would be decreased by 1.1\%.

The endemic bubble phenomenon has been verified for COVID-19 by plotting the bifurcation diagram and periodic solutions for the fitted values from Table \ref{mcmc_param} in Figures \ref{bif_covid} and \ref{osci_covid}, respectively. These results imply that for a range of values of mutation rate ($\nu_1$), the COVID-19 outbreak will occur repeatedly. However, for the lower and higher enough values of $\nu_1$, the disease remains stable in the environment. These dynamics represent a worse scenario for the mutation of the virus in the sense of the occurrence of periodic solutions. Furthermore, the total infected population has also been plotted for different values of mutation rate $\nu_1$ and a wondering dynamics has been observed. This result shows that for a high rate of mutation, the total infected population would be stable at a lower level over a long time. However, oscillation occurs for an intermediate range of mutation rate, implying that mutation may be worse in the sense of oscillation but better in the sense of stability at a lower infection level over a long time.

Furthermore, sensitivity analysis was performed to reveal the relative significance of the key epidemiological parameters of the system \eqref{model}, which are $\Lambda, \beta_1, \beta_2, \alpha_1, \alpha_2$, because these parameters should be given priority to effectively control the disease. The PRCCs of the reproduction number (Figure \ref{PRCC}) showed that reducing $\Lambda, \beta_1, \beta_2$, which may be realized by strong control measures, such as lockdown, using face mask, travel restriction, isolation of infected individuals, contact tracing, can significantly reduce $R_0$ and thus lower the transmission risk of COVID-19. The PRCCs of the cumulative cases and deaths suggest that the parameters $\beta_1, a_1, \beta_2,$ and $a_2$ should be controlled in order to reduce the cumulative cases, and $\beta_1, a_1,$ and $d_1$ should be controlled in order to reduce the cumulative deaths.

We have to point out that the results in contour plots revealed an interesting problem, i.e., there may exist an optimal strategy of supplementing the vaccine with different efficacies, in order to ensure that the fewer cumulative cases and cumulative deaths, and that the expenditure is economical, a topic for our future work. For preserving public health, it must be focused on how the facility of medical resources availability affects the pandemic of COVID-19, which could be included in our model as future work. Formulating and analyzing the model with other non-pharmaceutical interventions \cite{Xue2022} with vaccination will be more helpful to control the spread of multi-strain disease. For instance, Iboi et al. \cite{IBOI2020} proposed a dynamical model to assess the impact of vaccine strategy with other public health intervention strategies; in particular, the authors assessed the impact of vaccine efficacy combined with mask efficacy concerning COVID-19 in the US. Their results confirmed that the elimination of COVID-19 is more feasible if the vaccine program is combined with other interventions. We focused on the situation in India, but the model can be extended to describe the efficacies of vaccines to different strains with mutation rates and developed immunity against mutant strains in other countries.

\section*{Appendix A}\label{appendix}
\begin{align*}
l_1  =& - (a_{11}+b_{12}+c_{13}+d_{13}+e_{12}+f_{13}+g_{13}), \\
l_2  =& -a_{12} b_{11}+c_{13} d_{13}+c_{13} e_{12}-c_{14} e_{11}+d_{13} e_{12}-d_{14} f_{11} +(c_{13} +d_{13}+e_{12}) f_{13} \\
&+ g_{13} (c_{13}+d_{13}+e_{12}+f_{13}) +b_{12} (c_{13}+d_{13}+e_{12}+f_{13}+g_{13}) \\
& +a_{11} (b_{12}+c_{13}+d_{13}+e_{12}+f_{13}+g_{13}), \\
l_3 = & a_{11} (-b_{12} (c_{13}+d_{13}+e_{12}+f_{13}+g_{13})-c_{13} (d_{13}+e_{12}+f_{13}+g_{13})+c_{14} e_{11} \\
&-g_{13} (d_{13}+e_{12}+f_{13})-d_{13} e_{12}-d_{13} f_{13}+d_{14} f_{11}-e_{12} f_{13})\\
& +a_{12} b_{11} (c_{13}+d_{13}+e_{12}+f_{13}+g_{13})- g_{13} (b_{12} (c_{13}+d_{13}+e_{12}+f_{13})\\
&+c_{13} (d_{13}+e_{12}+f_{13})-c_{14} e_{11}+f_{13} (d_{13}+e_{12})+d_{13} e_{12}-d_{14} f_{11})\\
& - e_{11} (a_{13} c_{11}+b_{13} c_{12}) - c_{13} d_{13} e_{12} + c_{13} d_{14} f_{11} - c_{13} d_{13} f_{13}- c_{13} e_{12} f_{13} \\
& - f_{11} (a_{14} d_{11}+b_{14} d_{12}) + d_{14} e_{12} f_{11}-d_{13} e_{12} f_{13} + c_{14} e_{11} (d_{13}+f_{13}) - d_{15} f_{11} g_{12} \\
& - b_{12} (c_{13} (d_{13}+e_{12}+f_{13})-c_{14} e_{11}+f_{13} (d_{13}+e_{12})+d_{13} e_{12}- d_{14} f_{11}),
\end{align*}
\begin{align*}
l_4 = & a_{13} e_{11} (c_{11} (b_{12}+d_{13})-b_{11} c_{12})+d_{13} (b_{12} c_{13} e_{12}-b_{12} c_{14} e_{11}+b_{13} c_{12} e_{11}) \\
& + f_{11} (a_{14} (d_{11} (b_{12}+c_{13}+e_{12})-b_{11} d_{12})-d_{14} (e_{12} (b_{12}+c_{13})+b_{12} c_{13}-c_{14} e_{11})\\
& +b_{14} d_{12} (c_{13}+e_{12})) - a_{12} (b_{11} (c_{13} (d_{13}+e_{12})-c_{14} e_{11}+d_{13} e_{12}-d_{14} f_{11})+b_{13} c_{11} e_{11}\\ &+b_{14} d_{11} f_{11}) - e_{11} f_{12} (a_{14} c_{11}+b_{14} c_{12}) + b_{12} c_{13} d_{13} f_{13} + f_{13} (a_{13} c_{11} e_{11} \\ 
&+e_{12} (d_{13} (b_{12}+c_{13})+b_{12} c_{13})-c_{14} e_{11} (b_{12}+d_{13})+b_{13} c_{12} e_{11})\\
& - a_{12} b_{11} f_{13} (c_{13}+d_{13}+e_{12}) + d_{15} f_{11} g_{12} (b_{12}+c_{13}+e_{12}) - a_{12} b_{11} g_{13} (c_{13}\\
&+d_{13}+e_{12}+f_{13}) + g_{13} (a_{13} c_{11} e_{11}+a_{14} d_{11} f_{11}+b_{12} (c_{13} (d_{13}+e_{12}+f_{13})\\
&-c_{14} e_{11}+f_{13} (d_{13}+e_{12})+d_{13} e_{12}-d_{14} f_{11})+b_{13} c_{12} e_{11}+b_{14} d_{12} f_{11}\\
&+c_{13} d_{13} e_{12}+c_{13} d_{13} f_{13}-c_{13} d_{14} f_{11}+c_{13} e_{12} f_{13}-c_{14} e_{11} (d_{13}+f_{13})\\
&+d_{13} e_{12} f_{13}-d_{14} e_{12} f_{11}) + a_{11} (b_{13} c_{12} e_{11}+b_{14} d_{12} f_{11}+c_{13} (f_{13} (d_{13}+e_{12})\\
&+d_{13} e_{12}-d_{14} f_{11})+d_{13} e_{12} f_{13}-d_{14} e_{12} f_{11}+d_{15} f_{11} g_{12}) + a_{11} (b_{12} (c_{13} (d_{13}\\
&+e_{12}+f_{13}+g_{13})-c_{14} e_{11}+d_{13} (e_{12}+f_{13}+g_{13})-d_{14} f_{11}+g_{13} (e_{12}+f_{13})\\
&+e_{12} f_{13})+g_{13} (c_{13} (d_{13}+e_{12}+f_{13})+d_{13} (e_{12}+f_{13})-d_{14} f_{11}+e_{12} f_{13})\\
&-c_{14} e_{11} (d_{13}+f_{13}+g_{13})),
\end{align*}
\begin{align*}
l_5 =& a_{12} (b_{11} (g_{13} (c_{13} (d_{13}+e_{12}+f_{13})+f_{13} (d_{13}+e_{12})+d_{13} e_{12}- d_{14} f_{11})+c_{13} d_{13} e_{12}\\
&+c_{13} d_{13} f_{13}- c_{13} d_{14} f_{11}+ c_{13} e_{12} f_{13}- c_{14} e_{11} (d_{13}+f_{13}+g_{13})+d_{13} e_{12} f_{13}\\
&- d_{14} e_{12} f_{11}+ d_{15} f_{11} g_{12})+b_{13} c_{11} e_{11} (d_{13}+f_{13}+g_{13})+b_{14} (d_{11} f_{11} (c_{13}+e_{12}+g_{13})\\
&-c_{11} e_{11} f_{12})) - a_{13} e_{11} (-b_{11} c_{12} (d_{13}+f_{13}+g_{13})+b_{12} c_{11} (d_{13}+f_{13}+g_{13})\\
&+c_{11} (g_{13} (d_{13}+f_{13})+d_{13} f_{13}- d_{14} f_{11})) - g_{13} (f_{11} (a_{14} (d_{11} (c_{13}+e_{12})- b_{11} d_{12}) \\
&+b_{14} d_{12} (c_{13}+e_{12})-c_{13} d_{14} e_{12}+ c_{14} d_{14} e_{11})- b_{12} (-a_{14} d_{11} f_{11}- c_{13} d_{13} e_{12} \\
&+ d_{14} f_{11} (c_{13}+ e_{12})+c_{14} d_{13} e_{11})-e_{11} f_{12} (a_{14} c_{11}+b_{14} c_{12})+ b_{13} c_{12} d_{13} e_{11}\\
& + f_{13} (b_{12} (c_{13} (d_{13}+e_{12})-c_{14} e_{11}+ d_{13} e_{12})+ b_{13} c_{12} e_{11}+c_{13} d_{13} e_{12}- c_{14} d_{13} e_{11}) \\
&+ a_{11} (b_{12} (c_{13} (d_{13}+e_{12}+f_{13})-c_{14} e_{11}+ f_{13} (d_{13}+e_{12})+d_{13} e_{12}-d_{14} f_{11})\\
&+b_{13} c_{12} e_{11}+b_{14} d_{12} f_{11}+c_{13} d_{13} e_{12}+c_{13} d_{13} f_{13}-c_{13} d_{14} f_{11}+c_{13} e_{12} f_{13}\\
&-c_{14} e_{11} (d_{13}+f_{13})+d_{13} e_{12} f_{13}-d_{14} e_{12} f_{11})) + a_{14} e_{11} f_{12} (c_{11} (b_{12}+d_{13})- b_{11} c_{12})\\
&+a_{14} f_{11} (b_{11} d_{12} (c_{13}+e_{12})-b_{12} d_{11} (c_{13}+e_{12})-c_{13} d_{11} e_{12}+c_{14} d_{11} e_{11})\\
&-d_{15} f_{11} g_{12} (b_{12} (c_{13}+e_{12})+c_{13} e_{12}-c_{14} e_{11})+b_{14} c_{12} d_{13} e_{11} f_{12} + f_{11} (d_{14} (b_{12} c_{13} e_{12}\\
&-b_{12} c_{14} e_{11}+b_{13} c_{12} e_{11})+b_{14} d_{12} (c_{14} e_{11}-c_{13} e_{12})) -d_{13} f_{13} (b_{12} c_{13} e_{12}- b_{12} c_{14} e_{11} \\
&+b_{13} c_{12} e_{11}) + a_{11} (-f_{13} (b_{12} (c_{13} (d_{13}+e_{12})-c_{14} e_{11}+d_{13} e_{12})+b_{13} c_{12} e_{11}+ c_{13} d_{13} e_{12}\\
&-c_{14} d_{13} e_{11})-b_{12} c_{13} d_{13} e_{12}+ b_{12} c_{13} d_{14} f_{11}-d_{15} f_{11} g_{12} (b_{12}+c_{13}+e_{12})+b_{12} c_{14} d_{13} e_{11} \\
&+b_{12} d_{14} e_{12} f_{11}- b_{13} c_{12} d_{13} e_{11}+b_{14} c_{12} e_{11} f_{12}-b_{14} c_{13} d_{12} f_{11}-b_{14} d_{12} e_{12} f_{11}\\
&+c_{13} d_{14} e_{12} f_{11}- c_{14} d_{14} e_{11} f_{11}),
\end{align*}
\begin{align*}
l_6 =& a_{13} e_{11} (-g_{13} (b_{11} c_{12} (d_{13}+f_{13})-c_{11} d_{13} f_{13}+c_{11} d_{14} f_{11})-b_{11} c_{12} d_{13} f_{13}+b_{11} c_{12} d_{14} f_{11}\\
&+b_{12} c_{11} (g_{13} (d_{13}+f_{13})+d_{13} f_{13}-d_{14} f_{11})+b_{14} f_{11} (c_{11} d_{12}-c_{12} d_{11})+c_{11} d_{15} f_{11} g_{12}) \\
& +  a_{14} (b_{11} c_{12} d_{13} e_{11} f_{12}-b_{11} c_{13} d_{12} e_{12} f_{11}+b_{11} c_{14} d_{12} e_{11} f_{11}-b_{12} (c_{11} d_{13} e_{11} f_{12} \\
&-c_{13} d_{11} e_{12} f_{11}+c_{14} d_{11} e_{11} f_{11})+b_{13} e_{11} f_{11} (c_{12} d_{11}-c_{11} d_{12})-c_{11} d_{15} e_{11} f_{11} g_{11} \\
&+ g_{13} (f_{11} (-b_{11} d_{12} (c_{13}+e_{12})+b_{12} d_{11} (c_{13}+e_{12})+c_{13} d_{11} e_{12}-c_{14} d_{11} e_{11}) \\
&-e_{11} f_{12} (c_{11} (b_{12}+d_{13})-b_{11} c_{12}))) - g_{13} (a_{11} (b_{12} d_{13} (c_{14} e_{11}-c_{13} e_{12}) \\
&+b_{12} d_{14} f_{11} (c_{13}+e_{12})-b_{13} c_{12} d_{13} e_{11}+b_{14} c_{12} e_{11} f_{12}-f_{11} (b_{14} d_{12} (c_{13}+e_{12}) \\
&-c_{13} d_{14} e_{12}+c_{14} d_{14} e_{11}))+d_{14} f_{11} (b_{12} c_{13} e_{12}-b_{12} c_{14} e_{11}+b_{13} c_{12} e_{11}) \\
&+b_{14} (c_{12} d_{13} e_{11} f_{12}-c_{13} d_{12} e_{12} f_{11}+c_{14} d_{12} e_{11} f_{11}) + a_{12} (b_{11} (c_{13} (f_{13} (d_{13}+e_{12})\\
&+d_{13} e_{12}-d_{14} f_{11})-c_{14} e_{11} (d_{13}+f_{13})+d_{13} e_{12} f_{13}-d_{14} e_{12} f_{11})+b_{13} c_{11} e_{11} (d_{13}+f_{13})\\
&-b_{14} c_{11} e_{11} f_{12}+b_{14} d_{11} f_{11} (c_{13}+e_{12})) - f_{13} (a_{11} (b_{12} (e_{12} (c_{13}+d_{13})+c_{13} d_{13}-c_{14} e_{11}) \\
&+b_{13} c_{12} e_{11}+c_{13} d_{13} e_{12}-c_{14} d_{13} e_{11})+d_{13} (b_{12} c_{13} e_{12}-b_{12} c_{14} e_{11}+b_{13} c_{12} e_{11}))) \\
&+ a_{11} (-(d_{14} f_{11}-d_{13} f_{13}) (b_{12} c_{13} e_{12}-b_{12} c_{14} e_{11}+b_{13} c_{12} e_{11})+d_{15} f_{11} g_{12} (e_{12} (b_{12}+c_{13}) \\
&+b_{12} c_{13}-c_{14} e_{11})-b_{14} (c_{12} d_{13} e_{11} f_{12}-c_{13} d_{12} e_{12} f_{11}+c_{14} d_{12} e_{11} f_{11}))+a_{12} ((d_{14} f_{11} \\
&-d_{13} f_{13}) (b_{11} c_{13} e_{12}-b_{11} c_{14} e_{11}+b_{13} c_{11} e_{11})-b_{11} d_{15} f_{11} g_{12} (c_{13}+e_{12}) \\
&+b_{14} (c_{11} d_{13} e_{11} f_{12}-c_{13} d_{11} e_{12} f_{11}+c_{14} d_{11} e_{11} f_{11}))+d_{15} f_{11} (g_{12} (b_{12} c_{13} e_{12}-b_{12} c_{14} e_{11}\\
&+b_{13} c_{12} e_{11})-b_{14} c_{12} e_{11} g_{11}),
\end{align*}
\begin{align*}
l_7 = & a_{14} d_{15} e_{11} f_{11} g_{11} (b_{12} c_{11}-b_{11} c_{12}) + g_{13} (a_{11} ((d_{14} f_{11}-d_{13} f_{13}) (b_{12} c_{13} e_{12}-b_{12} c_{14} e_{11}\\
&+b_{13} c_{12} e_{11})+b_{14} (c_{12} d_{13} e_{11} f_{12}-c_{13} d_{12} e_{12} f_{11}+c_{14} d_{12} e_{11} f_{11}))-a_{12} (d_{14} f_{11} (b_{11} c_{13} e_{12}\\
&-b_{11} c_{14} e_{11}+b_{13} c_{11} e_{11})+b_{14} (c_{11} d_{13} e_{11} f_{12}-c_{13} d_{11} e_{12} f_{11}+c_{14} d_{11} e_{11} f_{11}))\\
&+a_{12} d_{13} f_{13} (b_{11} c_{13} e_{12}-b_{11} c_{14} e_{11}+b_{13} c_{11} e_{11})+a_{13} e_{11} ((b_{12} c_{11}-b_{11} c_{12}) (d_{14} f_{11} \\
&-d_{13} f_{13})+b_{14} f_{11} (c_{12} d_{11}-c_{11} d_{12})))+d_{15} f_{11} (g_{12} (e_{11} (a_{11} b_{12} c_{14}-a_{11} b_{13} c_{12}\\
&-a_{12} b_{11} c_{14}+a_{12} b_{13} c_{11}+a_{13} b_{11} c_{12}-a_{13} b_{12} c_{11})+c_{13} e_{12} (a_{12} b_{11}-a_{11} b_{12}))\\
&+b_{14} e_{11} g_{11} (a_{11} c_{12}-a_{12} c_{11}))+a_{14} g_{13} (d_{13} e_{11} f_{12} (b_{12} c_{11}-b_{11} c_{12})+f_{11} (b_{12} d_{11}\\
&-b_{11} d_{12}) (c_{14} e_{11}-c_{13} e_{12})+b_{13} e_{11} f_{11} (c_{11} d_{12}-c_{12} d_{11})).
\end{align*}
\begin{align*}
a_{11} = & -\beta_1 I_1^* - \beta_2 I_2^* - (\mu + p), \quad
a_{12} = \gamma, \quad
a_{13} = -\beta_1 S^*, \quad
a_{14} = -\beta_2 S^*, \\
b_{11} = & p, \quad
b_{12} = -\delta_1 \beta_1 I_1^* - \delta_2 \beta_2 I_2^* - (\mu + \gamma), \quad
b_{13} = -\delta_1 \beta_1 V^*,\\
b_{14} = & -\delta_2 \beta_2 V^*, \quad
c_{11} = \beta_1 I_1^*, \quad
c_{12} = \delta_1 \beta_1 I_1^*, \quad
c_{13} = -(a_1 + \mu),\\
c_{14} = & \beta_1 (S^* + \delta_1 V^*), \quad
d_{11} = \beta_2 I_2^*, \quad
d_{12} = \delta_2 \beta_2 I_2^*, \quad
d_{13} = -(a_2 + \mu),\\
d_{14} = & \beta_2 (S^* + \delta_2 V^*) + \delta_3 \beta_2 R_1^*, \quad
d_{15} = \delta_3 \beta_2 I_2^*, \quad
e_{11} = a_1, \\
e_{12} = & -(\alpha_2 + \mu + d_1 + \nu_1), \quad
f_{11} = a_2, \quad
f_{12} = \nu_1, \quad
f_{13} = -(\alpha_2 + \mu + d_2),\\
g_{11} = & \alpha_1, \quad
g_{12} = -\delta_3 \beta_2 R_1^*, \quad
g_{13} = -(\delta_3 \beta_2 I_2^* + \mu).
\end{align*}
\end{document}